\documentclass[12pt,a4paper]{amsart}
\usepackage{amsmath,amsxtra,amssymb,latexsym, amscd}
\usepackage[mathscr]{eucal}
\usepackage{ifpdf}
\usepackage{color}

\theoremstyle{plain}
\newtheorem{theorem}{Theorem}[section]
\newtheorem{lemma}[theorem]{Lemma}
\newtheorem{corollary}[theorem]{Corollary}
\newtheorem{proposition}[theorem]{Proposition}
\theoremstyle{remark}
\newtheorem{remark}{Remark}

\newtheorem{example}[theorem]{Example}

\theoremstyle{definition}
\newtheorem{defn}[theorem]{Definition}

\newcommand{\N}{{\mathbf{N}}}
\newcommand{\Tr}{{\textrm{Tr}}}

\def \R {\mathbb{R}}

\def \C {\mathbb{C}}

\def \N {\mathbb{N}}
\def \cS {\mathcal{S}}
\def \HH {\mathcal{H}}

\begin{document}
	\title{Some applications of Choi polynomials of linear maps}
	\author{Minh Toan Ho}
	\address{Institute of Mathematics, VAST, 18 Hoang Quoc Viet, Hanoi, Vietnam}
	\email{hmtoan@math.ac.vn}

\author{Thanh Hieu Le}
\address{Department of Mathematics and Statistics, Quy Nhon University, 
	170 An Duong Vuong, Quy Nhon Nam Ward, Gia Lai, Vietnam}
\email{lethanhhieu@qnu.edu.vn}

\author{Cong Trinh Le}
\address{Department of Mathematics and Statistics, Quy Nhon University, 
	170 An Duong Vuong, Quy Nhon Nam Ward, Gia Lai, Vietnam}
\email{lecongtrinh@qnu.edu.vn}

\author{Hiroyuki Osaka}
\address{Department of Mathematical Sciences, Ritsumeikan University, Kusatsu, Shiga 525-8577, Japan}
\email{osaka@se.ritsumei.ac.jp}

\keywords{decomposable, indecomposable maps, sum of squares, PPT criterion}
\subjclass{Primary 15A63, 15B48, 47L07 }
\date{ \today}

\maketitle

\begin{abstract}
	This paper investigates the properties of Choi polynomials and their fundamental role in the theory of positive linear maps between matrix algebras. By focusing on Hermitian symmetric biquadratic forms, we establish a connection between the positivity of these forms and the structure of positive maps. We specifically explore the construction of indecomposable positive maps in matrix algebras, and their application as entanglement witnesses. Our analysis extends to the detection of Positive Partial Transpose (PPT) entangled states and the classification of edge PPT states in $M_m(\mathbb{C}) \otimes M_n(\mathbb{C})$. Our results provide a refined framework for identifying non-separable states that escape the standard PPT criterion, contributing to the broader understanding of entanglement distillation and quantum information theory.
\end{abstract}


\section{Introduction}

The classification of positive linear maps $\phi: M_m(\mathbb{C}) \to M_n(\mathbb{C})$ remains a central challenge in operator theory and quantum information science. A map is called completely positive if its Choi matrix, defined as $C_{\phi} = \sum_{i,j} e_{ij} \otimes \phi(e_{ij})$, where $\{e_{ij}\}$ is the standard basis for $M_m$, is positive semi-definite. However, the structure of maps that are positive but not completely positive is significantly more intricate. These maps are essential for the detection of quantum entanglement, a task often formulated through the construction of entanglement witnesses (see \cite{Horodecki2009,Keybook}).

A density operator $\rho$ in a bipartite Hilbert space $\mathcal{H}_A \otimes \mathcal{H}_B$ is separable if it can be expressed as a convex combination of product states. If a state is not separable, it is said to be entangled. The Positive Partial Transpose (PPT) criterion, introduced by Peres (\cite{Peres1996}) and the Horodeckis (\cite{Horodecki1997}), serves as a powerful necessary condition for separability: every separable state must remain positive under the partial transposition operator $\Gamma$. While this criterion is sufficient for dimensions $2 \times 2$ and $2 \times 3$, the existence of PPT entangled states (PPTES) in higher dimensions necessitates the use of indecomposable positive maps.

In this work, we focus on a specific class of homogeneous polynomials of degree four, known as Hermitian symmetric biquadratic forms (see \cite{A2011}). As demonstrated in the seminal works of Choi and later refinements by Osaka, these forms are intrinsically linked to the range properties of positive maps. 
This class is a subclass of real-valued homogeneous polynomials of degree four. The properties of Hermitian symmetric biquadratic forms and establishes a rigorous mathematical framework linking polynomial algebra with linear operator theory. The content particularly emphasizes the role of biquadratic forms (hereafter called Choi polynomials) in classifying positive linear maps.

One of the foundational results presented is the existence of a one-to-one correspondence between a linear map $\phi: M_m \to M_n$ and its Choi polynomial, defined as $P_\phi(x,y) = y^*\phi(xx^*)y$. 
Through this correspondence, the properties of the map are fully reflected by the polynomial: The map $\phi$ is positive if and only if its corresponding Choi polynomial $P_\phi(x,y)$ is a positive semidefinite biquadratic form, meaning $P_\phi(x,y) \ge 0$ for all vectors $x \in \mathbb{C}^m$ and $y \in \mathbb{C}^n$.
The decomposability of a map is also completely characterized through the structure of its Choi polynomial. Specifically, a linear map $\phi$ is decomposable if and only if its Choi polynomial $P_\phi(x,y)$ can be expanded as a sum of squares of the moduli of bilinear forms and sesquilinear forms. 
In terms of matrices, this is equivalent to the Gram matrix $W$ of the polynomial $P_\phi$ belonging to the decomposable Gram cone $\mathcal{D}(m,n) = \{W = Q + R^\Gamma : Q \ge 0, R \ge 0\}$, where $\Gamma$ denotes the partial transpose.
The most crucial objective is to provide a systematic method for constructing indecomposable biquadratic forms, which in turn generate indecomposable positive maps. Specifically, we construct the family of polynomials:
\[
p_\epsilon(x,y) = p(x,y) - \epsilon||x \otimes y||^2
\] 
and show that there exists a threshold $\delta > 0$ such that for all $0 < \epsilon \le \delta$, the polynomial $p_\epsilon(x,y)$ remains positive semidefinite but its decomposable structure is broken (it becomes an indecomposable form). This is particularly true when $Q+R^\Gamma$ and $Q^\Gamma+R$ are non-trivial projections.
Due to the 1-1 correspondence, these newly constructed forms $p_\epsilon(x,y)$ are precisely the Choi polynomials of a new class of linear maps. These maps inherit the exact properties of their generating polynomials: they are positive maps but are indecomposable.

By applying the results established in Sections 2 and 3, we re-prove and extend several classes of decomposable and indecomposable maps in Section 4. This section demonstrates the practical application of biquadratic forms and Choi polynomials in characterizing map indecomposability, specifically through the analysis of weighted maps $\Phi_{a;m,n,\epsilon}$ and the construction of maps from edge PPT entangled states.

The paper is organized as follows. Section 2 introduces the notation for biquadratic forms and the Choi-Jamiołkowski isomorphism. Some sufficient and necessary conditions for a biquadratic form to be decomposable/indecomposable. Section 3 discusses the properties of Choi polynomials and their spectral characteristics. In Section 4, we apply these mathematical constructs to the problem of identifying PPTES and classifying edge states in $M_4(\mathbb{C}) \otimes M_4(\mathbb{C})$.

\textbf{Notation:} We denote by $\{e_i\}$ the standard basis for $\C^m$ (and similarly for $\C^n$) and set $e_{ij} = e_i e_j^*$. We define the inner product $\langle \cdot , \cdot \rangle$ on $\C^n$ by $\langle x,y\rangle = y^*x$. Let $M_n$ (resp., $M_n(\R)$) denote the algebra of $n\times n$ matrices with complex (resp., real) entries. Let $\cS_n$ be the space of all symmetric matrices of order $n$ with real coefficients. Let $B(M_m, M_n)$ denote the space of all linear maps from $M_m$ to $M_n$. The Choi matrix of a map $\phi: M_m \to M_n$, denoted by $C_{\phi}$, is defined by $C_{\phi} = \sum_{i,j}e_{ij} \otimes \phi(e_{ij})$. The transpose of $A$ is denoted by $A^t$. The partial transpose is a linear operator on $M_m\otimes M_n$ defined by $(A \otimes B)^{\Gamma} = A \otimes B^t$. Let $PPT[m,n]$ denote the set of all positive matrices in $M_m\otimes M_n$ with a positive partial transpose. Finally, let $\sum^2 K$ denote the set of all finite sums of squares of elements in $K$.
\section{Biquadratic forms}
In this section, we investigate a subclass of real-valued homogeneous polynomials of degree four, namely Hermitian symmetric biquadratic forms. Our aim is to characterize the positivity and decomposability of this class of forms.

\subsection{Real biquadratic forms}
A real biquadratic form is a homogeneous polynomial of the form:
$$
F(x,y) = \sum_{i,j,k,l} c_{ijkl} x_ix_jy_ky_l \quad  (i \le j, k\le l, \ c_{ijkl} \in \R) 
$$
with real indeterminates $x=(x_1,\ldots, x_m), \ y = ( y_1, \ldots, y_n).$ Such an $F$ is said to be positive semidefinite (or rnonnegative) if $F(x,y) \ge 0$ for all $x\in \R^m$,$y\in \R^n.$ 

An old question asks whether, if $F$ is positive semidefinite, there must exist real bilinear forms $f_i$ such that $F = \sum_{i} f_i^2$. Recall a real bilinear form $f$ with real indeterminates $x,y$ is a homogeneous polynomial which can be written as $f(x,y) = y^tAx,$ where $A$ is an $n\times m$ matrix with real coefficients.
The well-known fact that a positive semidefinite real biquadratic form must be a sum of squares of real bilinear forms provided that either $m=2$ or $n=2$ (see \cite[Theorem 1]{Calderon}). However, if $m\ge 3$ and $n\ge 3,$ Man D. Choi \cite{Choi75a} gave an example to show that there exists a positive semidefinite real biquadratic form which is not a sum of square of real bilinear forms.
As application, Choi also used this example to show that there exists a positive linear map from $\cS_3$ to $\cS_3$ which is not completely positive map. Follow this idea, we try to study the relation (and applications) of positive semidefinite biquadratic forms and positive maps.

Let $T: M_m(\R) \longrightarrow M_n(\R)$ be a linear map. Then the Choi polynomial of $T$ is determined by  
$$
P_T (x,y)= y^t T(xx^t)y, \hspace{1cm} \forall x\in \R^m, y\in \R^n.
$$ 
Then $P_T$ is a real biquadratic form. If two linear maps $\phi$ and $\psi$ from $M_m$ to $M_n$ have the same Choi polynomial, then
$y^t\phi(xx^t)y = y^t\psi(xx^t)y$ for all $x\in \R^m, y\in \R^n.$ Hence, $\phi(xx^t) = \psi(xx^t)$ for all $x.$ As a consequence, $\phi(X) =\psi(X)$ for all symmetric 
$X\in M_m(\R).$ However, since $\phi(e_{ij} + e_{ji}) = \psi(e_{ij} + e_{ji}),$ we have $\phi (e_{ij}) + \phi(e_{ji}) =\psi (e_{ij}) + \psi(e_{ji}).$ 


Conversely, given a real biquaratic form $\displaystyle F(x,y) = \sum_{i\le j,k\le l}c_{ijkl} x_ix_jy_ky_l$ 
 in $m$-variable $x$ and $n$-variable $y.$ We can write $F(x,y) = y^tB(x) y,$ where $B(x)$ is a symmetric matrix whose coefficients are quadratic forms in $x.$ Then we can get a linear operator $T$ from $\cS_m$ to $\cS_n$ by $T(xx^t) = B(x).$ This implies that $F(x,y) = y^t T(xx^t) y.$ 
Hence, we get the following remark (we can also see the proof of this remark in \cite{Choi75a}).
\begin{remark}
The correspondence between operators $T: \ \cS_m \to \cS_n $ and real biquadratic forms $P_T $ is one to one. 
\end{remark}

A linear map $T: M_m(\R) \longrightarrow M_n(\R)$ is called a congruence map if there exist $m\times n$ matrices with real coefficients $V_i, i=1,\ldots, k$ such that
$$
T(X) = \sum_{i=1}^k V_i^t X V_i \hspace{1cm} \forall X \in M_m(\R).
$$   
\begin{proposition}[\cite{Choi75a}]\label{real1}
Let $T:\ \cS_m \longrightarrow \cS_n$ be a linear map. 
\begin{itemize}
	\item[(a)] $T$ is positive if and only if $P_T(x,y) \ge 0$ for all $x,y.$
	\item[(b)] $T$ is a congruence map if and only if $P_T$ is a sum of square of real bilinear forms. 
\end{itemize}
\end{proposition}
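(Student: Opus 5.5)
For part (a), I would use the spectral theorem in $\cS_m$ and $\cS_n$. A symmetric matrix $B\in\cS_n$ is positive semidefinite exactly when $y^tBy\ge 0$ for all $y\in\R^n$. Every positive $X\in\cS_m$ is a finite sum $X=\sum_k x_kx_k^t$ of rank-one positive matrices.

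If $T$ is positive, then $xx^t\ge 0$ gives $T(xx^t)\ge 0$, and hence $P_T(x,y)=y^tT(xx^t)y\ge 0$. Conversely, suppose $P_T\ge 0$ everywhere. Then $T(xx^t)$ is positive semidefinite for every $x$. By linearity, $T(X)=\sum_k T(x_kx_k^t)\ge 0$ for every positive $X\in\cS_m$, so $T$ is positive.

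For part (b), suppose first that $T(X)=\sum_i V_i^tXV_i$. Then
\[
P_T(x,y)=\sum_i y^tV_i^t x x^tV_i y=\sum_i (x^tV_iy)^2 .
\]
Each $f_i(x,y)=x^tV_iy=y^tV_i^tx$ is a real bilinear form, so $P_T$ is a sum of squares of real bilinear forms.

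Conversely, suppose $P_T=\sum_i f_i^2$ with $f_i(x,y)=y^tA_ix$, where $A_i$ is an $n\times m$ matrix. Put $V_i=A_i^t$, an $m\times n$ matrix, and define $S(X)=\sum_iV_i^tXV_i$. The same computation gives $P_S=\sum_i f_i^2=P_T$. By the one-to-one correspondence of the Remark, $S=T$ on $\cS_m$, so $T$ is a congruence map. (The restriction of $S$ to $\cS_m$ is exactly what "congruence map" means for a map defined on $\cS_m$.)

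The only point needing care is this last uniqueness step. I would make it explicit by polarization: $P_S=P_T$ gives $y^tS(xx^t)y=y^tT(xx^t)y$ for all $x,y$. Since a symmetric matrix is determined by its quadratic form, $S(xx^t)=T(xx^t)$ for all $x$. Since rank-one matrices $xx^t$ span $\cS_m$, it follows that $S=T$ on $\cS_m$. Everything else is a direct computation, so I expect no real obstacle.
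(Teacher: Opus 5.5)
Your proof is correct: (a) reduces positivity to $T(xx^t)\succeq 0$ via rank-one decompositions, and (b) writes each $(x^tV_iy)^2$ as the Choi polynomial of $X\mapsto V_i^tXV_i$, then recovers $T$ on $\cS_m$ from the uniqueness in the Remark (your explicit polarization step supplies this uniqueness). The paper gives no proof of its own here and simply cites Choi, but your argument is the standard one, and it is exactly the pattern the paper uses for the complex analogue in Theorem~\ref{Thrm1}(2)--(3).
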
 

Calderon \cite{Calderon} showed that every positive semidefinite real biquadratic form on $\R^m\times \R^n$ must be sum of square of real bilinear forms, provided that $m=2$ or $n=2.$ Combine Proposition \ref{real1} and \cite[Theorem 1]{Calderon}, we get the following corollary.
\begin{corollary}
	A positive linear map $T: \cS_2 \longrightarrow \cS_n$ must be a congruence one.
\end{corollary}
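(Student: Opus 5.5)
The corollary follows by chaining Proposition~\ref{real1} with Calderon's theorem. Let $T:\cS_2\to\cS_n$ be a positive linear map, and consider its Choi polynomial $P_T(x,y)=y^tT(xx^t)y$ with $x\in\R^2$, $y\in\R^n$. As noted before the Remark, $P_T$ is a real biquadratic form in the two-variable block $x$ and the $n$-variable block $y$.

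By Proposition~\ref{real1}(a), positivity of $T$ gives $P_T(x,y)\ge 0$ for all $x,y$, so $P_T$ is a positive semidefinite real biquadratic form on $\R^2\times\R^n$. Because $m=2$, \cite[Theorem 1]{Calderon} applies and yields real bilinear forms $f_1,\dots,f_k$ with $P_T=\sum_i f_i^2$. Then Proposition~\ref{real1}(b), in the direction ``sum of squares $\Rightarrow$ congruence,'' shows that $T$ is a congruence map.

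The one point that needs care is that a congruence map is defined on all of $M_m(\R)$, while $T$ lives only on $\cS_2$. So ``$T$ is a congruence'' should mean that $T(X)=\sum_i V_i^tXV_i$ for all $X\in\cS_2$. To make this explicit, write $f_i(x,y)=y^tA_ix$ with $A_i$ an $n\times 2$ matrix, and set $V_i=A_i^t$. Then
\[
\sum_i f_i^2=\sum_i y^tA_ixx^tA_i^ty=y^t\Big(\sum_iV_i^txx^tV_i\Big)y .
\]
Thus $P_T=P_S$, where $S(X)=\sum_iV_i^tXV_i$. By the one-to-one correspondence in the Remark, $T=S$ on $\cS_2$. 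This is precisely the argument behind Proposition~\ref{real1}(b), so nothing beyond the cited results is needed.
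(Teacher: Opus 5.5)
Your proposal is correct and matches the paper's argument: Proposition~\ref{real1}(a) gives $P_T\ge 0$, Calderon's theorem for $m=2$ writes $P_T$ as a sum of squares of real bilinear forms, and Proposition~\ref{real1}(b) turns this into a congruence. Your explicit choice $V_i=A_i^t$, together with the one-to-one correspondence on $\cS_2$, simply unpacks the proof of Proposition~\ref{real1}(b), which the paper cites without comment.
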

 
\subsection{Biquaratic forms}
\begin{defn}\label{defn1}
	A complex polynomial $p(x,y)$ in $x\in \C^m$ and $y\in \C^n$ is said to be \textit{sesquilinear form} (linear in $y$ and anti-linear in $x$) if it has the form:
	$$
	p(x,y) = \sum_{i,j} p_{ij} \bar{x}_i y_j ,  \hspace{1 cm} p_{ij} \in \C.
	$$ 
	A complex polynomial $p$ is said to be \textit{bilinear form} if it has the form:
	$$
	p(x,y) = \sum_{i,j} p_{ij} x_i y_j  ,\hspace{1 cm} p_{ij} \in \C.
	$$ 
	A complex polynomial $p$ is called \textit{biquadratic form} if it has the form:
	$$
	p(x,y) = \sum_{i,j,k,l} p_{ijkl} x_i\bar{x}_j y_k \bar{y}_l ,  \hspace{1 cm} p_{ijkl} \in \C.
	$$ 
	A biquadratic form $p$ is said to be Hermitian symmetric (in the sense of \cite{A2011}) if $p(x,y) \in \R$ for all $x\in \C^m,$ $y\in \C^n.$  Such a biquadratic form is also called a real-valued form.
\end{defn}
Note that $p(x,y)$ is sesquilinear, meaning it is linear in $y$ and anti-linear in $x$.
A form $p$ is called \textit{positive semidefinite} (psd) if $p(x,y)\ge 0$ for all $x,y.$ Sometimes, we use the term \textit{nonnegative form} to refer to a psd form.

It is straightforward to see that $p$ is a bilinear form if and only if there exists an $m\times n$ matrix $A$ such that
$$
p(x,y) = x^T A y = \langle Ay, \bar{x} \rangle.
$$
Analogously, $p$ is a sesquilinear form if and only if there exists an $m\times n$ matrix $A$ such that
$$
p(x,y) = x^* A y = \langle Ay, x \rangle.
$$
Let $\mathrm{BLF}(m,n)$ (resp., $\mathrm{SLF}(m,n)$) denote the set of all bilinear forms (resp., sesquilinear forms) in $x \in \mathbb{C}^m$ and $y \in \mathbb{C}^n$.

\begin{remark}
	If $p$ is a sesquilinear form and $p(x,y) \in \R$ for every real vectors $x$ and $y,$ then $p(x,y)$ is a real bilinear form.
\end{remark}
\begin{proof}
	We can write 
	$$
	p(x,y) = \sum_{i,j} p_{ij} \bar{x}_i y_j ,   \hspace{1 cm} p_{ij} \in \C.
	$$ 
	For all real vectors $x \in \R^m$ and $y\in \R^n$ we have
	$\displaystyle p(x,y) = \sum_{i,j} p_{ij} x_i y_j \in \R .$ Thus,  $p(x,y) = \overline{p(x,y)}.$ Hence $p_{ij} = \bar{p}_{ij} \in \R.$ This means that $p(x,y)$ is a real bilinear form when $x,y$ are real.
\end{proof}
For $x\in \C^m$ and $y\in \C^n,$
\[
z(x,y):=x\otimes y =\bigl(x_{1}y_{1},\ x_{1}y_{2},\ldots, \ x_{1}y_{n},\ x_{2}y_{1},\ldots,\ x_{m}y_{n}\bigr)^{T}\in \mathbb C^{mn}.
\]
Any finite sum of squares of moduli of \emph{bilinear} forms can be written as
\[
\sum_{r}\bigl|x^{t}A_{r}y\bigr|^{2}
\;=\;
z(x,y)^{*}\,Q\,z(x,y)
\qquad\text{for some } Q\succeq 0,
\]
i.e.,\ $Q$ is the Gram matrix of the bilinear sum of squares.

For sesquilinear forms $x^{*}By$, use the monomial vector
\[
w(x,y):=x \otimes \bar{y}=  \bigl(x_1\overline{y_{1}},\ \ldots,\ x_m \overline{y_{n}} \bigr)^{T}\in\mathbb C^{mn},
\]
and similarly
\[
\sum_{s}\bigl|x^{*}B_{s}y\bigr|^{2}
\;=\;
w(x,y)^{*}\,R\,w(x,y)
\qquad\text{for some } R\succeq 0.
\]
Observe that both $z(x,y)^{*}\,Q\,z(x,y)$ and $w(x,y)^{*}\,R\,w(x,y)$ are nonnegative Hermitian symmetric biquadratic forms.
\begin{lemma}\label{Grammatrix}
	Let $p$ be a biquadratic form given by:
	\[ p(x,y) = \sum_{i,j,k,l} p_{ijkl} x_i\bar{x}_j y_k \bar{y}_l, \qquad  x\in \C^m, y\in \C^n .\]
	There exist unique matrices $Q$ and $R$ such that:
	\[ p(x,y) = z(x,y)^* Q z(x,y) = w(x,y)^* R w(x,y), \]
	where $z(x,y) = x \otimes y$ and $w(x,y) = \bar{x} \otimes y$. 
	The relationship between $Q$ and $R$ is a partial transpose on the first subsystem.
	Furthermore, $p$ is Hermitian symmetric if and only if $Q = Q^*$ (and a similar statement holds for $R$). In this case, the relationship between $Q$ and $R$ is also a partial transpose on the second subsystem.
\end{lemma}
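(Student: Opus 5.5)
The plan is to work coefficientwise, using that the monomials $x_i\bar x_j y_k\bar y_l$ are linearly independent functions on $\C^m\times\C^n$. They can be viewed as polynomials in the independent real coordinates, or equivalently one can apply Wirtinger derivatives $\partial_{x_i}\partial_{\bar x_j}\partial_{y_k}\partial_{\bar y_l}$ at the origin. This independence gives uniqueness of the coefficients $p_{ijkl}$ of $p$ as a function. Index $\C^{mn}=\C^m\otimes\C^n$ by pairs $(i,k)$, so that $z(x,y)_{(i,k)}=x_iy_k$. Expanding gives
\[
z^*Qz=\sum Q_{(j,l),(i,k)}\,\bar x_j\bar y_l\,x_iy_k .
\]
Matching with $p$ forces $Q_{(j,l),(i,k)}=p_{ijkl}$. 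Conversely, this choice of $Q$ represents $p$, which settles existence and uniqueness of $Q$.

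For $R$ I will use $w=\bar x\otimes y$, so that $w_{(j,k)}=\bar x_jy_k$. Then
\[
w^*Rw=\sum R_{(i,l),(j,k)}\,x_i\bar y_l\,\bar x_jy_k ,
\]
and by the same independence argument $R_{(i,l),(j,k)}=p_{ijkl}$ uniquely. Comparing the two formulas gives $R_{(i,l),(j,k)}=Q_{(j,l),(i,k)}$. This is exactly the transpose in the first tensor factor, so $R=Q^{T_1}$, where $(A\otimes B)^{T_1}=A^t\otimes B$ on elementary tensors.

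For the Hermitian statement, I note that $\overline{z^*Qz}=z^*Q^*z$. Hence $\bar p$ is represented by $Q^*$. By uniqueness, $p$ is real-valued if and only if $p=\bar p$ as functions, if and only if $Q=Q^*$. The same argument applied to $w$ gives the analogous statement for $R$; alternatively, it follows from $R=Q^{T_1}$, since partial transpose commutes with taking adjoints.

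Finally, for Hermitian $Q$ I will use $Q^t=\bar Q$. Since $Q^t=(Q^{T_1})^{T_2}$, we get $R=Q^{T_1}=(\bar Q)^{T_2}=\overline{Q^{\Gamma}}$. Equivalently, if one uses instead the monomial vector $x\otimes\bar y=\overline{w}$ (as in the discussion preceding the lemma), the representing matrix becomes $\bar R=Q^{\Gamma}$. Thus the relation is the partial transpose on the second subsystem, matching the decomposable cone $Q+R^\Gamma$.

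The only real obstacle is bookkeeping. I must fix the index conventions for $z$ and $w$ and keep track of whether the "second-subsystem" statement holds literally or up to entrywise conjugation, which depends on the choice between $\bar x\otimes y$ and $x\otimes\bar y$. I would state that convention explicitly at the end of the proof.
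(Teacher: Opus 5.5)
Your proof is correct and follows the paper's route. Matching coefficients gives $Q_{(j,l),(i,k)}=p_{ijkl}=R_{(i,l),(j,k)}$, hence existence, uniqueness and $R=Q^{T_1}$, and uniqueness of the Gram matrix yields the Hermitian criterion. On the final claim you are more accurate than the paper, which asserts $Q^{T_1}=Q^{\Gamma}$ for Hermitian $Q$; in fact $Q^{T_1}=\overline{Q^{\Gamma}}$ (e.g.\ $Q=i\,e_{12}\otimes e_{12}-i\,e_{21}\otimes e_{21}$ is Hermitian with $Q^{T_1}=-Q^{\Gamma}$), so the second-subsystem relation holds literally only for the monomial vector $x\otimes\bar y$, exactly as you note.
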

Such a matrix $Q$ in Lemma \ref{Grammatrix} is called a \textit{Gram matrix} of $p.$
\begin{proof}
Let $z(x,y) = x \otimes y$. Its components are $z_{ik} = x_i y_k$, and $(z^*)_{jl} = \bar{x}_j \bar{y}_l$. We can rewrite the form as:
	\[ p(x,y) = \sum_{i,j,k,l} p_{ijkl} (\bar{x}_j \bar{y}_l) (x_i y_k) \]
Define $Q=\sum_{i,j=1}^m\sum_{k,l=1}^n Q^{(i,j)}_{(k,l)} e_{i}e_j^* \otimes f_{k}f_l^*,$ where $Q_{(l,k)}^{(j,i)} = p_{ijkl}.$ Then we obtain $p(x,y) = z^* Q z.$ If $p(x,y) =z(x,y)^* P z(x,y), $ then we get $P_{(j,i)}^{(l,k)} = p_{ijkl} = Q_{(j,i)}^{(l,k)}.$ Similarly, we can show the existence and uniqueness of $R.$ 
	Since $R_{(l,k)}^{(i,j)} = p_{ijkl} = Q_{(l,k)}^{(j,i)}$, we have $R = Q^{T_1}$, where $T_1$ denotes the transpose operation on the first Hilbert space $\mathbb{C}^m$.
	
	
If $p(x,y) \in \mathbb{R}$, then $z^* Q z = (z^* Q z)^* = z^* Q^* z$. Since $z(x,y)$ spans $\mathbb{C}^{mn}$, $Q = Q^*$. The converse follows from the property of Hermitian forms. In this case, $Q^{T_1}= Q^{\Gamma}$ and so $R$ is the partial transpose on the second subsystem of $Q.$
\end{proof}

\subsection{Decomposable biquadratic forms}
\begin{defn}
A biquadratic form $p$ is said to be \textit{decomposable} if it can be written as a finite sum of squares of the moduli of bilinear and sesquilinear forms. A positive semidefinite biquadratic form $p$ is said to be \textit{indecomposable} if it is not decomposable. 

  Let us denote by ${\sum}^2|\mathrm{BLF}| (m,n)$ (resp., ${\sum}^2|\mathrm{SLF}|(m,n)$) the set of all finite sums of the squared moduli of bilinear (resp., sesquilinear) forms in the variables $x\in \C^m$ and $y\in \C^n.$  
  
\end{defn}
Denote
\[
	\mathcal D (m,n) \;=\;\left\{\,W = Q + R^{\Gamma}\ :\ Q\succeq 0,\ R\succeq 0\,\right\}.
\]
Then $\mathcal D (m,n)$ is a cone and called the \emph{decomposable Gram cone}.

\begin{proposition}\label{SOS}
Let $p(x,y) = (x\otimes y)^{*}\,W\,(x\otimes y)$ be a biquadratic form, where the Gram matrix $W=W^*$ is a square matrix of order $mn$ and $x\in \C^m, y\in \C^n.$ Then 
\begin{itemize}
	\item[(i)] The polynomial $p \in {\sum}^2|\mathrm{BLF} (m,n)| $ if and only if $W$ is positive semidefinite.
	\item[(ii)] The polynomial  $p\in {\sum}^2 |\mathrm{SLF}(m,n)|$ if and only if $W^{\Gamma}$ is positive semidefinite.
	\item[(iii)] $p$ is decomposable if and only if its Gram matrix $W$ belongs to the decomposable Gram cone $\mathcal D (m,n)$.
\end{itemize} 
\end{proposition}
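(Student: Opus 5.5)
The plan is to reduce all three parts to the uniqueness of the Gram matrix in Lemma \ref{Grammatrix}, together with the identity relating the two monomial vectors $z=x\otimes y$ and $w=x\otimes\bar y$ through the partial transpose.

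\textbf{Part (i).} If $p=\sum_r |x^tA_ry|^2$, write $x^tA_ry=a_r^{*}z(x,y)$, where $a_r\in\C^{mn}$ is the conjugate of the vectorization of $A_r$ in the ordering of $z$. Then $|x^tA_ry|^2=z^*a_ra_r^*z$, so
\[
p=z^*\Bigl(\sum_r a_ra_r^*\Bigr)z .
\]
By the uniqueness part of Lemma \ref{Grammatrix}, $W=\sum_r a_ra_r^*\succeq 0$. Conversely, if $W\succeq 0$, take a spectral (or Cholesky) decomposition $W=\sum_r a_ra_r^*$. Then $p=\sum_r|a_r^*z|^2$, and each $a_r^*z$ is a bilinear form in $(x,y)$, because the entries of $z$ are the monomials $x_iy_k$.

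\textbf{Part (ii).} The same argument with $w$ shows that $p\in{\sum}^2|\mathrm{SLF}|$ if and only if $p=w^*Rw$ with $R\succeq 0$. Here the forms $b^*w$ are of the type $\sum \bar b_{ik}x_i\bar y_k$, which is the conjugate of a sesquilinear form in the paper's convention, and the modulus is unchanged by conjugation. It remains to relate $R$ to $W$. For a rank-one term one computes
\[
w^*(A\otimes B)w=(x^*Ax)\,(y^tB\bar y)=(x^*Ax)\,(y^*B^ty)=z^*(A\otimes B^t)z ,
\]
so by linearity $w^*Rw=z^*R^\Gamma z$ for every $R$. By uniqueness of the Gram matrix, $p=w^*Rw$ if and only if $W=R^\Gamma$, that is, $R=W^\Gamma$ because $\Gamma$ is an involution. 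Hence $p\in{\sum}^2|\mathrm{SLF}|$ exactly when $W^\Gamma\succeq 0$. This is consistent with Lemma \ref{Grammatrix}, which says $R$ is the partial transpose of $Q$ in the Hermitian case.

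\textbf{Part (iii).} If $p=p_1+p_2$ with $p_1\in{\sum}^2|\mathrm{BLF}|$ and $p_2\in{\sum}^2|\mathrm{SLF}|$, then by (i) and (ii) we have $p_1=z^*Qz$ with $Q\succeq 0$ and $p_2=z^*R^\Gamma z$ with $R\succeq0$. So $p=z^*(Q+R^\Gamma)z$, and uniqueness gives $W=Q+R^\Gamma\in\mathcal D(m,n)$. Conversely, if $W=Q+R^\Gamma$ with $Q,R\succeq 0$, split $p=z^*Qz+z^*R^\Gamma z$ and apply (i) and (ii) to each summand.

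The only delicate point is the index bookkeeping in the identity $w^*Rw=z^*R^\Gamma z$, and checking that the transpose lands on the second factor, given the paper's convention $w=x\otimes\bar y$. I will verify it on elementary tensors $e_{ij}\otimes f_{kl}$ as above. The rest is uniqueness of the Gram matrix, which holds because the monomials $\{x_i\bar x_j y_k\bar y_l\}$ are linearly independent.
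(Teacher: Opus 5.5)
Your proof is correct and follows essentially the same route as the paper: express each squared modulus as $z^*vv^*z$ (respectively $w^*vv^*w$), invoke uniqueness of the Gram matrix from Lemma \ref{Grammatrix}, and use a spectral decomposition for the converse. Your explicit check that $w^*Rw=z^*R^{\Gamma}z$ for $w=x\otimes\bar y$ supplies what the paper leaves as ``the same as (i)''; it also handles the conjugation conventions carefully, whereas the paper's own write-up is slightly loose about them.
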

\begin{proof}
(i) If $f(x,y)$ is a bilinear form, then we can write
$$
f(x,y) = \sum_{i,j} f_{ij}x_iy_j = \langle x\otimes y, \overline{\operatorname{vec}(f)}\rangle = (x\otimes y)^* {\operatorname{vec}(f)} ,  
$$ 
where $\operatorname{vec}(f)$ the vector of coefficients of $f.$ Hence, if $p \in {\sum}^2|\mathrm{BLF} (m,n)| $, then there are bilinear forms $f_1, \ldots, f_k$ such that 
$$p(x,y) = \sum_{i=1}^k |f_i(x,y)|^2  = \sum_{i=1}^k (x\otimes y)^*{\operatorname{vec}(f_i)}{\operatorname{vec}(f_i)}^*(x\otimes y).  $$
By the uniqueness of the Gram matrix (Lemma \ref{Grammatrix}), $W = {\operatorname{vec}(f_1)}{\operatorname{vec}(f_1)}^* + \cdots + {\operatorname{vec}(f_k)}{\operatorname{vec}(f_k)}^* $ is positive semidefinite. 
Conversely, if $W$ is positive semidefinite, then there is a representation $W = \lambda_1v_1v_1^* + \cdots + \lambda_k v_kv_k^*,$ where each $v_i$ is a unit eigenvector corresponding to the eigenvalue $\lambda_i$ of $W.$ Hence,
$$p(x,y) =  \sum_{i=1}^k \lambda_i (x\otimes y)^{*}\, v_iv_i^* \,(x\otimes y)  = \sum_{i=1}^k \lambda_i |v_i^*(x\otimes y)|^2.$$

The proof of (ii) is the same as that of (i), and (iii) follows from (i), (ii) and Lemma \ref{Grammatrix}. 
\end{proof}
Let $W=W^* \in M_{mn}.$ Define
$$
V_W = \{x\otimes y \ | \ (x\otimes y)^* W (x\otimes y) =0\}.
$$
We say $W_1 \le W_2 $ iff $V_{W_1} \subset V_{W_2}.$ This define a partial order on $\mathcal{D}(m,n).$
\begin{defn}
A decomposable matrix $W\in \mathcal{D}(m,n)$ is said to be an \textit{minimal} if $V_W= \{0\}.$  That is $(x\otimes y)^* W x\otimes y>0$ for every nonzero $x\otimes y,$ $x\in \C^m, y\in \C^n.$
\end{defn}
\begin{lemma}\label{LemmaEdge}
	Let $p(x,y) = (x\otimes y)^*W(x\otimes y)$ be a decomposable form, where $W= Q+R^{\Gamma}$ for some positive semidefinite matrices $Q,R.$ Then the following are equivalent.
\begin{enumerate}
	\item $W$ is minimal.
	\item $\min_{\|x\|=1, \|y\|=1} p(x,y) >0.$
	\item $\ker Q \cap \ker R$ has no nonzero product vectors.
\end{enumerate}
\end{lemma}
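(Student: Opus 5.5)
Our plan is to prove (1)$\Leftrightarrow$(2) by a compactness argument and (1)$\Leftrightarrow$(3) by decomposing $p$ into its two sum-of-squares pieces.

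\emph{(1)$\Leftrightarrow$(2).} By definition, $W$ is minimal iff $p(x,y)>0$ for every nonzero product vector $x\otimes y$. Since $p$ is homogeneous of degree two in $x$ and of degree two in $y$, we have $p(\lambda x,\mu y)=|\lambda|^2|\mu|^2p(x,y)$. Hence positivity on nonzero product vectors is equivalent to positivity on the set $\{\|x\|=1,\|y\|=1\}$. This set is compact (a product of two spheres) and $p$ is continuous, so the minimum there is attained. Therefore the minimum is strictly positive iff $p>0$ at every point of that set, which gives (1)$\Leftrightarrow$(2).

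\emph{(1)$\Leftrightarrow$(3).} Write
\[
p(x,y)=(x\otimes y)^*Q(x\otimes y)+(x\otimes y)^*R^{\Gamma}(x\otimes y).
\]
The key identity is $(x\otimes y)^*R^{\Gamma}(x\otimes y)=(x\otimes\bar y)^*R(x\otimes\bar y)$. We verify this on elementary tensors $R=A\otimes B$: the left side is $(x^*Ax)(y^*B^ty)$, and $y^*B^ty=\overline{y}^{\,*}B\,\overline{y}$ because $y^*B^ty=(y^*B^ty)^t=\bar y^*B\bar y$. The identity then extends to all $R$ by linearity; this is exactly the content of Lemma~\ref{Grammatrix} and Proposition~\ref{SOS}(ii). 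Both terms are nonnegative since $Q,R\succeq 0$. So $p(x,y)=0$ iff both terms vanish, and for a psd matrix $v^*Qv=0$ iff $Qv=0$. Hence
\[
p(x,y)=0 \iff x\otimes y\in\ker Q \ \text{ and } \ x\otimes\bar y\in\ker R .
\]

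\emph{Main obstacle.} The conjugation on $y$ means statement (3) must be read with care. If (3) is taken literally as "no nonzero $x\otimes y$ lies in $\ker Q\cap\ker R$", the equivalence can fail, because the zero set of $p$ involves $x\otimes y\in\ker Q$ together with $x\otimes\bar y\in\ker R$, not both conditions on the same vector. We would therefore interpret (3) in the conjugated sense: there is no nonzero $(x,y)$ with $x\otimes y\in\ker Q$ and $x\otimes\bar y\in\ker R$. Equivalently, in the paper's convention $R=Q'^{T_1}$, the kernel of $R$ is taken with respect to the $w$-variables. With that reading, the displayed equivalence gives $V_W=\{0\}$ exactly when (3) holds. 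We would state this reading explicitly in the proof.
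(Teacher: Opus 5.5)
Your argument is the paper's own: homogeneity of $p$ for (1)$\Leftrightarrow$(2), where you also supply the compactness step the paper leaves implicit, and the splitting $p(x,y)=\|Q^{1/2}(x\otimes y)\|^2+\|R^{1/2}(x\otimes\bar y)\|^2$ for (1)$\Leftrightarrow$(3). Your caveat about the conjugation is correct, and the paper's last step ignores it: in $\C^2\otimes\C^2$ take $v=e_1\otimes(1,i)^t$, $Q=I-\frac12 vv^*$ and $R=Q^{\Gamma}=I-\frac12\bar v\bar v^*$; then $\ker Q\cap\ker R=\operatorname{span}\{v\}\cap\operatorname{span}\{\bar v\}=\{0\}$, yet $W=Q+R^{\Gamma}=2Q$ vanishes on the product vector $v$, so (3) must be read as you propose, with $x\otimes y\in\ker Q$ and $x\otimes\bar y\in\ker R$.
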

\begin{proof}
 The equivalence of (1) and (2) follows from the fact that $p(tx,sy) =t^2s^2p(x,y)$ for any positive numbers $s,t.$ 

Now, we prove the equivalence of (1) and (3). 	We have $$(x\otimes y)^* Q (x\otimes y) = \|Q^{\frac{1}{2}}(x\otimes y)\|^2. $$ 
Using $\mathrm{Tr}(X^{\Gamma}Y)=\mathrm{Tr}(X Y^{\Gamma})$, we also have
\[
 (x\otimes y)^*R(x\otimes y)= \mathrm{Tr}(R^{\Gamma} (x^*x\otimes (y^*y)^T))= \mathrm{Tr}(R^{\Gamma} (x^*x\otimes (\bar{y}^*\bar{y}))) =\|(R)^{\frac{1}{2}}(x\otimes \bar{y})\|^2.
\]
Hence, $$p(x,y)= \|Q^{\frac{1}{2}}(x\otimes y)\|^2 + \|(R)^{\frac{1}{2}}(x\otimes \bar{y})\|^2 >0,$$ for all nonzero $x\otimes y\in \C^m\otimes\C^n$ if and only if $\ker Q \cap \ker R$ has no nonzero product vectors $x\otimes y.$
\end{proof}

Denote $\mathbb{H} (M_m\otimes M_n)$ the space of Hermitian matrices in $(M_m\otimes M_n).$ We equip $\mathbb{H} (M_m\otimes M_n)$ with the Hilbert-Schmidt inner product
\[
\langle X,Y\rangle := \mathrm{Tr}(XY), \enskip
X =X^*,Y=Y^* \in M_m\otimes M_n.
\]
Then partial transpose on the second subsystem $^{\Gamma}$ is self-adjoint, that is, 
for all $X =X^*,Y=Y^*\in M_m\otimes M_n,$ we have
\[
\langle X^{\Gamma},Y\rangle = \langle X, Y^{\Gamma} \rangle.
\]
Recall that for a cone $K \subset \mathbb{H} (M_m\otimes M_n),$
\[
K^* := \{\, X : \mathrm{Tr}(XY)\ge 0, \enskip \forall Y\in K \,\}
\]
denotes its dual cone. Then it is well-known that the cone of positive semidefinite matrices is self-dual:
\[
\{X\succeq 0\}^* = \{X\succeq 0\}.
\]
In addition, the partial transpose on the second subsystem is self-adjoint and involutive, we have the following lemma (see \cite{Keybook} for more detail).
\begin{lemma}\label{dualD}
 $ \mathcal D^* = \{\, X \in M_m\otimes M_n : X\succeq 0 \ \& \  X^{\Gamma}\succeq 0 \}. $
\end{lemma}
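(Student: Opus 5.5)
We prove the two inclusions separately, working throughout in the real vector space $\mathbb{H}(M_m\otimes M_n)$ with the Hilbert--Schmidt inner product. We use two facts recorded above: the positive semidefinite cone is self-dual, and $\Gamma$ is self-adjoint and involutive. Write $\mathcal P=\{X\succeq 0\}$. By definition, $\mathcal D(m,n)=\mathcal P+\mathcal P^{\Gamma}$, where $\mathcal P^{\Gamma}=\{R^{\Gamma}:R\succeq 0\}$.

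First we show that $\mathcal D^*$ contains the right-hand side. Let $X\succeq 0$ with $X^{\Gamma}\succeq 0$, and let $W=Q+R^{\Gamma}\in\mathcal D(m,n)$ with $Q,R\succeq 0$. Then
\[
\mathrm{Tr}(XW)=\mathrm{Tr}(XQ)+\mathrm{Tr}(XR^{\Gamma})=\mathrm{Tr}(XQ)+\mathrm{Tr}(X^{\Gamma}R)\ge 0 .
\]
Both traces are nonnegative because the trace of a product of two positive semidefinite matrices is nonnegative. This is self-duality of $\mathcal P$, or directly $\mathrm{Tr}(AB)=\mathrm{Tr}(A^{1/2}BA^{1/2})\ge 0$. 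Hence $X\in\mathcal D^*$.

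Next we show the reverse inclusion. Let $X\in\mathcal D^*$. Taking $R=0$ gives $\mathrm{Tr}(XQ)\ge 0$ for all $Q\succeq 0$, so $X\in\mathcal P^*=\mathcal P$. In concrete terms, choosing $Q=vv^*$ yields $v^*Xv\ge 0$ for every $v$. Taking $Q=0$ gives $0\le \mathrm{Tr}(XR^{\Gamma})=\mathrm{Tr}(X^{\Gamma}R)$ for all $R\succeq 0$, so $X^{\Gamma}\succeq 0$ in the same way.

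Two points need care, but neither is a real obstacle. First, $\Gamma$ maps Hermitian matrices to Hermitian matrices: $(A\otimes B)^*=A^*\otimes B^*$, and transposition commutes with adjoint. This is needed so that $X^{\Gamma}$ lies in $\mathbb H$ and the trace identity is applied within that space. Second, the identity $\mathrm{Tr}(X^{\Gamma}Y)=\mathrm{Tr}(XY^{\Gamma})$ should be verified. By linearity it suffices to check it on elementary tensors, where
\[
\mathrm{Tr}\big((A\otimes B^t)(C\otimes D)\big)=\mathrm{Tr}(AC)\,\mathrm{Tr}(B^tD)
\]
and $\mathrm{Tr}(B^tD)=\mathrm{Tr}(BD^t)$. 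We also note that the dual is taken inside $\mathbb H$, so no closure issue arises: $\mathcal D$ is a sum of two closed cones, and the computation above characterizes the dual directly.

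Together the two inclusions give $\mathcal D^*=\{X: X\succeq 0,\ X^{\Gamma}\succeq 0\}$, which is the PPT cone.
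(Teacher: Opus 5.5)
Your proof is correct and follows the route the paper indicates. The paper gives no written proof: it cites \cite{Keybook} and appeals to the self-duality of the positive semidefinite cone together with the self-adjointness of $\Gamma$. Your two-inclusion argument, testing against $Q$ with $R=0$ and against $R$ with $Q=0$ and using $\mathrm{Tr}(XR^{\Gamma})=\mathrm{Tr}(X^{\Gamma}R)$, is exactly that reasoning written out in full.
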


Clearly, a real valued polynomial can be written as a difference of sums of squares. On the other hand, given a real valued biquadratic form $f(x,y) = (x\otimes y)^* W x\otimes y $ with Gram matrix $W$, there exist a real number $\delta$ such that $f(x,y) + \delta \|x\otimes y\|^2$ is decomposable. Indeed, we choose $\delta$ to be at least the minimal eigenvalue of $W.$ Then $W+ \delta I$ is positive semidefinite. 
   
In our attempt to construct positive indecomposable biquadratic forms, we usually pay attention to a class of the forms:
$$
F(x,y) - \varepsilon ||x\otimes y || \quad \enskip (x\in \C^m, y\in \C^n),
$$   
where $F(x,y)$ is a decomposable biquadratic form.
\begin{theorem}\label{Theorem1}
	Let $p$ be a decomposable biquadratic form 
	$$
	p(x,y) = (x\otimes y)^* \left( Q  +R^{\Gamma}\right)(x\otimes y)  \quad  \enskip (x\in \C^m, y\in \C^n),
	$$
	where $Q$ and $R$ are positive semidefinite square matrices in $M_m\otimes M_n,$ Let $\varepsilon>0 $ and 
	$p_{\varepsilon} (x,y) := p(x,y) - \varepsilon \| x\otimes y\|^2$.  Then the followings hold.
\begin{itemize}
	\item[(i)] $Q+R^{\Gamma}$ is minimal if and only if there exists a positive number $\delta$ such that  
	the biquadratic form $p_{\varepsilon} $ is positive semidefinite for every $\varepsilon \le \delta.$
	\item[(ii)] $p_{\varepsilon}$ is decomposable if and only if $Q+R^{\Gamma}-\varepsilon I$ is decomposable, where $I$ is the identity matrix of order $mn.$
\end{itemize}
\end{theorem}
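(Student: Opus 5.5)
For part (i) I will work through the equivalence (1)$\Leftrightarrow$(2) of Lemma \ref{LemmaEdge} and the bi-homogeneity of $p$. Put
\[
\mu := \min_{\|x\|=1,\ \|y\|=1} p(x,y),
\]
which is attained because the product of the two unit spheres is compact and $p$ is continuous. Since $\|x\otimes y\| = \|x\|\,\|y\|$ and $p(tx,sy)=t^2s^2p(x,y)$ for $t,s>0$, for nonzero $x,y$ we get
\[
p_\varepsilon(x,y) = \|x\|^2\|y\|^2\Bigl(p\bigl(\tfrac{x}{\|x\|},\tfrac{y}{\|y\|}\bigr)-\varepsilon\Bigr).
\]
Therefore $p_\varepsilon$ is positive semidefinite if and only if $\varepsilon \le \mu$.

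For the forward direction of (i): if $Q+R^\Gamma$ is minimal, Lemma \ref{LemmaEdge} gives $\mu>0$, and $\delta:=\mu$ works. For the converse: if $p_\varepsilon$ is positive semidefinite for some $\varepsilon$ with $0<\varepsilon\le\delta$ (for instance $\varepsilon=\delta$), then $\mu\ge\varepsilon>0$, so $W$ is minimal by Lemma \ref{LemmaEdge}. The intended reading of the statement is for positive $\varepsilon$; for $\varepsilon\le 0$ positivity of $p_\varepsilon$ is trivial and carries no information.

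For part (ii) the key step is to identify the Gram matrix of $p_\varepsilon$. We have
\[
\|x\otimes y\|^2=(x\otimes y)^*I(x\otimes y),
\]
so $p_\varepsilon(x,y)=(x\otimes y)^*(Q+R^\Gamma-\varepsilon I)(x\otimes y)$. The matrix $Q+R^\Gamma-\varepsilon I$ is Hermitian, and by the uniqueness part of Lemma \ref{Grammatrix} it is \emph{the} Gram matrix of $p_\varepsilon$. Proposition \ref{SOS}(iii) then says $p_\varepsilon$ is decomposable if and only if this Gram matrix lies in $\mathcal D(m,n)$, which is exactly the claim.

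There is no real obstacle here. The only point needing care is invoking uniqueness of the Gram matrix, so that decomposability of the form is a property of one specific matrix rather than of some representation. Lemma \ref{Grammatrix} states this directly: $z(x,y)$ spans $\C^{mn}$, and a Hermitian $W$ with $(x\otimes y)^*W(x\otimes y)=0$ for all $x,y$ must vanish by polarization in $x$ and $y$ separately.
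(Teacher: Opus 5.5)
Your proposal is correct and follows essentially the same route as the paper. Part (i) uses the scaling identity $p_\varepsilon(x,y)=\|x\|^2\|y\|^2\bigl(p(x/\|x\|,\,y/\|y\|)-\varepsilon\bigr)$ together with Lemma~\ref{LemmaEdge}, and part (ii) identifies $Q+R^{\Gamma}-\varepsilon I$ as the unique Gram matrix of $p_\varepsilon$ before invoking Proposition~\ref{SOS}(iii). Your version (``$p_\varepsilon$ is psd iff $\varepsilon\le\mu$'') is slightly sharper than the paper's forward direction, which wrongly claims $p_\varepsilon>0$ even at $\varepsilon=\delta$, where only $p_\varepsilon\ge 0$ holds.
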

\begin{proof}
 Suppose that $W=Q+R^{\Gamma}$ is minimal. By Lemma \ref{LemmaEdge}, we have
	$$\delta:= \inf_{\|x\| =1, \| y\|=1} p(x,y) >0.$$
	For every positive $\varepsilon \le \delta,$ and all nonzero product vector $x\otimes y,$  
	$$
	p_{\varepsilon} (x,y) =\|x\|^2\|y\|^2 \left(p(x_1,y_1) - \varepsilon \right)  \ge \| x\|^2 \|y\|^2 (\delta - \varepsilon) >0, 
	$$
	where $x = \|x\| x_1 $ and $y= \|y\| y_1.$ \\
Conversely, there exists a positive number $\delta$ such that  
the biquadratic form $p_{\varepsilon}$ is positive semidefinite for every $\varepsilon \le \delta.$ Then 
$$
\min_{\| x\| =1, \| y \| =1 } p(x,y) \ge \delta \min_{\| x\| =1, \| y \| =1} \| x\otimes y\|^2 =\delta >0. 
$$
The proof of (ii) follows immediately from Proposition \ref{SOS} and the fact that the Gram matrix of $p_{\varepsilon}$ is $W-\varepsilon I,$ where $W=Q+R^{\Gamma}.$
\end{proof}

\begin{corollary}\label{IndecomBQF1}
Let $p$ and $p_{\varepsilon}$ be given as in Theorem \ref{Theorem1}(i). Suppose further that $\Pi:=Q+R^{\Gamma}$ is a projection and ${\Pi}^{\Gamma}$ a positive contraction. Then $p_{\varepsilon}$ is indecomposable for every $0<\varepsilon <\delta.$ 
\end{corollary}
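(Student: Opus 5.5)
By Theorem \ref{Theorem1}(ii), $p_\varepsilon$ is decomposable if and only if $\Pi-\varepsilon I\in\mathcal D(m,n)$. So the plan is to show $\Pi-\varepsilon I\notin\mathcal D$ by exhibiting a witness in the dual cone. By Lemma \ref{dualD}, it suffices to find $X\succeq 0$ with $X^\Gamma\succeq 0$ and $\mathrm{Tr}\bigl((\Pi-\varepsilon I)X\bigr)<0$. Positive semidefiniteness of $p_\varepsilon$ for $\varepsilon\le\delta$ comes from Theorem \ref{Theorem1}(i); for this we read the hypothesis as including minimality of $\Pi$, as in Theorem \ref{Theorem1}(i). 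Hence only indecomposability needs proof.

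The natural candidate is $X=I-\Pi$, the complementary projection, which is $\succeq 0$ because $\Pi$ is a projection. Its partial transpose is
\[
X^\Gamma=I^\Gamma-\Pi^\Gamma=I-\Pi^\Gamma,
\]
since the identity is invariant under $\Gamma$. The hypothesis that $\Pi^\Gamma$ is a positive contraction, i.e.\ $0\preceq\Pi^\Gamma\preceq I$, gives exactly $X^\Gamma\succeq 0$. Thus $X\in\mathcal D^*$. Then
\[
\mathrm{Tr}\bigl((\Pi-\varepsilon I)(I-\Pi)\bigr)=\mathrm{Tr}(\Pi-\Pi^2)-\varepsilon\,\mathrm{Tr}(I-\Pi)=-\varepsilon\,(mn-\operatorname{rank}\Pi),
\]
using $\Pi^2=\Pi$. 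This is strictly negative whenever $\varepsilon>0$ and $\Pi\ne I$. Hence $\Pi-\varepsilon I\notin\mathcal D$, and by Proposition \ref{SOS}(iii) and Theorem \ref{Theorem1}(ii) the form $p_\varepsilon$ is not decomposable.

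The main obstacle is the degenerate case $\Pi=I$. There the witness $X=I-\Pi$ vanishes, and indeed $p_\varepsilon=(1-\varepsilon)\|x\otimes y\|^2$ is decomposable. The intended reading, consistent with the introduction's phrase ``non-trivial projections'', is that $\Pi$ is a proper projection, and I would state this explicitly.

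A second point to check is that $\Pi\ne I$ is compatible with $\Pi$ being minimal, i.e.\ that $\ker\Pi$ contains no product vector. This is what makes $\delta>0$, so that the range $0<\varepsilon<\delta$ is nonempty; it is part of the setting of Theorem \ref{Theorem1}(i), and minimality supplies it. I would close by noting that the self-adjointness of $\Gamma$ used in Lemma \ref{dualD} is what justifies pairing against $X$ directly.
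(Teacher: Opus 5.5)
Your argument is correct and is essentially the paper's proof: the paper also takes $M=I-\Pi$ as the witness in $\mathcal D^*$ (via Lemma \ref{dualD}), computes $\mathrm{Tr}((\Pi-\varepsilon I)M)=-\varepsilon\,\mathrm{Tr}(M)$, and concludes with Proposition \ref{SOS}. Your remark about the degenerate case is also right. The paper's strict inequality $-\varepsilon\,\mathrm{Tr}(M)<0$ tacitly requires $\Pi\neq I$, and for $\Pi=I$ the form $(1-\varepsilon)\|x\otimes y\|^2$ is decomposable, so the hypothesis that $\Pi$ is a proper projection should indeed be stated explicitly.
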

\begin{proof}
		Let $M= I - \Pi,$ where $I$ is the identity matrix of order $mn.$
	By the hypothesis, $M$ and $M^{\Gamma}$ are positive semidefinite. 
	Moreover,
	\[
	\mathrm{Tr}((\Pi - \varepsilon I) M)
	=\mathrm{Tr}\left((\Pi-\varepsilon I)(I-\Pi)\right)
	=-\varepsilon\,\mathrm{Tr}(M)<0.
	\]
	By Lemma \ref{dualD}, $(\Pi - \varepsilon I)$ cannot lie in the decomposable Gram cone $\mathcal D (m,n).$ By Proposition \ref{SOS}, $p_{\varepsilon}$	is indecomposable for every $0<\varepsilon <\delta.$	
\end{proof}
\begin{example}\label{Exa1}
Let us consider the real symmetric matrix $\Pi\in\mathbb R^{9\times 9}$ satisfying

\[
18\Pi=
\begin{pmatrix}
	11 & -7 & 2 & 2 & 2 & 2 & 2 & 2 & 2\\
	-7 & 11 & 2 & 2 & 2 & 2 & 2 & 2 & 2\\
	2 & 2 & 11 & 2 & 2 & -7 & 2 & 2 & 2\\
	2 & 2 & 2 & 11 & 2 & 2 & -7 & 2 & 2\\
	2 & 2 & 2 & 2 & 2 & 2 & 2 & 2 & 2\\
	2 & 2 & -7 & 2 & 2 & 11 & 2 & 2 & 2\\
	2 & 2 & 2 & -7 & 2 & 2 & 11 & 2 & 2\\
	2 & 2 & 2 & 2 & 2 & 2 & 2 & 11 & -7\\
	2 & 2 & 2 & 2 & 2 & 2 & 2 & -7 & 11
\end{pmatrix}.
\]
Then $\Pi^2=\Pi = \Pi^*$ and $\mathrm{rank}(\Pi)=5$ and $W=\Pi$ satisfies the conditions of Corollary \ref{IndecomBQF1}. Hence, the biquadratic form 
$$F_{\varepsilon} (x,y) = (x\otimes y)^* \Pi (x\otimes y) -\varepsilon \| x\otimes y \|^2\hspace{1cm} (x,y\in \C^3)$$
 is indecomposable for every $0< \varepsilon \le  \min_{\| x\| =1, \| y \| =1 } (x\otimes y)^* \Pi (x\otimes y).$
\end{example}
Next, we will show that $W=\Pi$ satisfies the hypothesis of Corollary \ref{IndecomBQF1}.
\begin{proof}[Proof of Example \ref{Exa1}]
It is straightforward to check that $W=\Pi$ and $W^{\Gamma}$  are projections and rank of $\Pi$ is $5.$
Let $\mathcal N=\ker(\Pi)\subset\mathbb C^9$. A direct computation yields $\dim \mathcal N=4$ and a basis of $\mathcal N$ consisting of the following four vectors (displayed as $3\times 3$ matrices in the $xy^{t}$-reshape):
\[
K_1=
\begin{pmatrix}
	-\frac12&-\frac12&0\\
	0&1&0\\
	0&0&0
\end{pmatrix},\quad
K_2=
\begin{pmatrix}
	-1&-1&1\\
	0&0&1\\
	0&0&0
\end{pmatrix}, \]  
\[ K_3=
\begin{pmatrix}
	-1&-1&0\\
	1&0&0\\
	1&0&0
\end{pmatrix},\quad
K_4=
\begin{pmatrix}
	-1&-1&0\\
	0&0&0\\
	0&1&1
\end{pmatrix}.
\]
In other word,
\[
\ker(\Pi)=\{\mathrm{vec}(K): K\in \mathrm{span}_{\mathbb C}\{K_1,K_2,K_3,K_4\}\}.
\]
Let $z(x,y)=\mathrm{vec}(x y^{t})$ corresponds to the rank-one matrix $xy^{t}$.

\textbf{Claim 1.}
$
\ker(\Pi)\cap \{ \mathrm{vec}(xy^{t}) : x,y\neq 0\}=\{0\}.
$

If $\mathrm{vec}(xy^{t})\in\ker(\Pi)$, then $xy^{t}\in\mathrm{span}\{K_1,K_2,K_3,K_4\}$ and has rank one.
By \textbf{Claim 1}, this forces $xy^{t}=0$, hence $x=0$ or $y=0$.

Now, we prove \textbf{Claim 1.} 
Suppose $K=aK_1+bK_2+cK_3+dK_4$ is a product vector, then $K$ has rank $\le 1.$ We will show that  $a=b=c=d=0$.
	Write 
	\[
	K=
	\begin{pmatrix}
		-\frac a2-b-c-d & -\frac a2-b-c-d & b\\
		c & a & b\\
		c & d & d
	\end{pmatrix}.
	\]
	If $\mathrm{rank}(K)\le 1$, then all $2\times 2$ minors of $K$ vanish. We get
	\[
	c(d-a)=0,\qquad c(d-b)=0,\qquad d(a-b)=0,
	\]
	\[
	b(3a+2b+2c+2d)=0,\qquad b(a+2b+4c+2d)=0,\qquad d(a+4b+2c+2d)=0.
	\]
	We now eliminate the following cases.
	
	\smallskip
	\noindent\emph{Case 1: $b\neq 0$.}
	Then
	\[
	3a+2b+2c+2d=0,\qquad a+2b+4c+2d=0.
	\]
	Subtracting gives $c=a$. Substituting into $a+2b+4c+2d=0$ yields
	\begin{align}
		5a+2b+2d=0. \label{ast} 
	\end{align}
	If $d=0$, then $d(a-b)=0$ is automatic, while $c(d-a)=a(0-a)=-a^2=0$ forces $a=0$ and then (\ref{ast}) gives $b=0$, contradicting $b\neq 0$.
	Hence $d\neq 0$. Then $d(a+4b+2c+2d)=0$ implies
	\[
	a+4b+2c+2d=0.
	\]
	With $c=a$ this becomes $3a+4b+2d=0$. Together with $(\ref{ast})$,
	\[
	\begin{cases}
		5a+2b+2d=0,\\
		3a+4b+2d=0.
	\end{cases}
	\quad
	\textrm{ We have}\
	 a=b.
	\]
	Then $3a+4b+2d=0$ gives $7b+2d=0$, so $d=-\frac{7}{2}b$.
	Now use $c(d-a)=0$ with $c=a=b$:
	\[
	0=c(d-a)=b\Big(-\frac{7}{2}b-b\Big)=-\frac{9}{2}b^2,
	\]
	so $b=0$, contradiction. Therefore $b\neq 0$ is impossible.
	
	\smallskip
	\noindent\emph{Case 2: $b=0$.}
	Then $K$ becomes
	\[
	K=
	\begin{pmatrix}
		-\frac a2-c-d & -\frac a2-c-d & 0\\
		c & a & 0\\
		c & d & d
	\end{pmatrix}.
	\]
	Similar arguments above, we imply that $d=0$.
	
	With $b=d=0$, we have
	\[
	K=
	\begin{pmatrix}
		-\frac a2-c & -\frac a2-c & 0\\
		c & a & 0\\
		c & 0 & 0
	\end{pmatrix}.
	\]
	Repeated the similar arguments above, we imply that $a=b=c=d=0.$


	

\end{proof}

\begin{theorem}[\cite{Lewenstein2000}] \label{edgeW}
	Let $H_A,H_B$ be finite-dimensional complex Hilbert spaces and let
	$\Gamma=\mathrm{id}\otimes T$ denote the partial transpose with respect to a fixed product basis.
	Let $\rho\in\mathbb H(H_A\otimes H_B)$ satisfy the PPT conditions
	\[
	\rho\succeq 0,\qquad \rho^\Gamma\succeq 0.
	\]
	For a product vector $x\otimes y$ we write $(x\otimes y)^\Gamma:=x\otimes \overline y$.
	
	Consider the following statements:
	\begin{enumerate}
		\item[(A)] (\emph{Edge (subtraction) definition}) There do not exist $\varepsilon>0$ and a nonzero product vector
		$x\otimes y$ such that
		\[
		\rho-\varepsilon\, (x\otimes y)(x\otimes y)^* \succeq 0
		\quad\text{and}\quad
		\rho^\Gamma-\varepsilon\,(x\otimes \overline y)(x\otimes \overline y)^* \succeq 0.
		\]
		\item[(B)] (\emph{Range-intersection condition}) There exists no nonzero product vector $x\otimes y$ such that $x\otimes y$ belongs to both the ranges of $\rho$ and $\rho^\Gamma.$
		\item[(C)] (\emph{Kernel-orthogonality condition}) There exists no nonzero product vector $x\otimes y$ such that
		\[
		\langle u,\,x\otimes y\rangle=0\ \ \forall u\in\ker(\rho),
		\qquad
		\langle v,\,x\otimes \overline y\rangle=0\ \ \forall v\in\ker(\rho^\Gamma).
		\]
	\end{enumerate}
	Then
	\[(A)\iff (B)\iff (C).	\]
\end{theorem}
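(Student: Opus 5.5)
We will prove the two equivalences $(B)\iff(C)$ and $(A)\iff(B)$ separately. Both rest on one fact about a positive semidefinite matrix $M$: a vector $v$ lies in $\operatorname{ran}(M)$ if and only if there is $\varepsilon>0$ with $M-\varepsilon vv^*\succeq 0$. Throughout we fix the standing interpretation $(x\otimes y)^\Gamma=x\otimes\overline y$. Accordingly, condition (B) is read as: there is no nonzero product vector $x\otimes y$ with $x\otimes y\in\operatorname{ran}(\rho)$ and $x\otimes\overline y\in\operatorname{ran}(\rho^\Gamma)$. This matches (A) and (C), and it is the version in \cite{Lewenstein2000}.

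\emph{Step 1: $(B)\iff(C)$.} Since $\rho$ and $\rho^\Gamma$ are Hermitian, $\operatorname{ran}(\rho)=\ker(\rho)^\perp$ and $\operatorname{ran}(\rho^\Gamma)=\ker(\rho^\Gamma)^\perp$. Hence $x\otimes y\perp\ker\rho$ is the same as $x\otimes y\in\operatorname{ran}\rho$. Likewise $x\otimes\overline y\perp\ker\rho^\Gamma$ is the same as $x\otimes\overline y\in\operatorname{ran}\rho^\Gamma$. The negations of (B) and (C) therefore describe exactly the same product vectors, which gives the equivalence.

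\emph{Step 2: the range lemma.} Let $M\succeq 0$ and $v\neq 0$.

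First suppose $M-\varepsilon vv^*\succeq 0$. For $u\in\ker M$ we get $0\le u^*(M-\varepsilon vv^*)u=-\varepsilon|v^*u|^2$. So $v\perp\ker M$, that is, $v\in\operatorname{ran}M$.

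Conversely suppose $v\in\operatorname{ran}M$, and let $\lambda_{\min}>0$ be the smallest nonzero eigenvalue of $M$. On $\operatorname{ran}M$ we have $M\ge\lambda_{\min}P$, where $P$ is the projection onto $\operatorname{ran}M$. Also $vv^*\le\|v\|^2P$, because $v\in\operatorname{ran}M$. Taking $\varepsilon=\lambda_{\min}/\|v\|^2$ then gives $M-\varepsilon vv^*\succeq \lambda_{\min}P-\lambda_{\min}P=0$.

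This step is the only genuinely quantitative point in the proof. We expect it to be the main obstacle, though it is mild: the required $\varepsilon$ depends on $v$, so we must check that a single $\varepsilon$ can serve both $\rho$ and $\rho^\Gamma$.

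\emph{Step 3: $(A)\iff(B)$.}

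Suppose (A) fails. Then there are $\varepsilon>0$ and a nonzero product vector $x\otimes y$ with $\rho-\varepsilon(x\otimes y)(x\otimes y)^*\succeq0$ and $\rho^\Gamma-\varepsilon(x\otimes\overline y)(x\otimes\overline y)^*\succeq 0$. Step 2 puts $x\otimes y\in\operatorname{ran}\rho$ and $x\otimes\overline y\in\operatorname{ran}\rho^\Gamma$, so (B) fails.

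Suppose instead (B) fails, witnessed by $x\otimes y$. Step 2 gives $\varepsilon_1>0$ for $\rho$ and $\varepsilon_2>0$ for $\rho^\Gamma$. Set $\varepsilon=\min(\varepsilon_1,\varepsilon_2)$. Then both subtracted matrices remain positive semidefinite, because $M-\varepsilon vv^*=(M-\varepsilon_i vv^*)+(\varepsilon_i-\varepsilon)vv^*$. So (A) fails.

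Combining Steps 1 and 3 yields $(A)\iff(B)\iff(C)$.

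We also record a remark that would complete the picture, although the statement only needs the two matrix conditions. The subtracted operator $\sigma=\rho-\varepsilon(x\otimes y)(x\otimes y)^*$ satisfies $\sigma^\Gamma=\rho^\Gamma-\varepsilon(x\otimes\overline y)(x\otimes\overline y)^*$. This holds because $\bigl((x\otimes y)(x\otimes y)^*\bigr)^\Gamma=xx^*\otimes(yy^*)^t=(x\otimes\overline y)(x\otimes\overline y)^*$. Thus $\sigma$ is again a PPT matrix, which is why (A) is called the ``subtraction'' definition of edge states.
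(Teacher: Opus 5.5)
The paper does not prove this theorem. It quotes it from \cite{Lewenstein2000} and only uses it afterwards, in Corollary~\ref{EdgePPTEnt} and Section~4.

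Your proof is correct and complete, and it is the standard argument. The identity $\operatorname{ran}M=(\ker M)^{\perp}$ for Hermitian $M$ gives (B)$\iff$(C). The range lemma, applied to $\rho$ and $\rho^\Gamma$ with the smaller of the two $\varepsilon$'s, gives (A)$\iff$(B). Your choice $\varepsilon=\lambda_{\min}/\|v\|^{2}$ is cruder than the optimal value $1/(v^{*}M^{+}v)$, where $M^{+}$ is the Moore--Penrose inverse. The qualitative equivalence needs nothing sharper.

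You were also right to read (B) as requiring $x\otimes y\in\operatorname{ran}\rho$ and $x\otimes\overline{y}\in\operatorname{ran}\rho^{\Gamma}$; this is how the paper itself invokes it in Section~4. The choice is not cosmetic: under the literal wording, with the same vector $x\otimes y$ in both ranges, the stated equivalence is false. For example, take $x\neq 0$, $y=(1,i)^{t}$ and $\rho=(x\otimes y)(x\otimes y)^{*}$. Then $\rho^{\Gamma}=(x\otimes\overline{y})(x\otimes\overline{y})^{*}$, and $x\otimes y$, $x\otimes\overline{y}$ are linearly independent. Hence $\operatorname{ran}\rho\cap\operatorname{ran}\rho^{\Gamma}=\{0\}$ and the literal (B) holds. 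But (A) fails with $\varepsilon=1$, since both subtracted matrices are $0$.
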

Here, (A) in the theorem is the standard definition of an \emph{edge PPT state} as introduced by
Lewenstein--Kraus--Cirac--Horodecki \cite[Sec.~III]{Lewenstein2000}.

\begin{corollary}\label{EdgePPTEnt}
Let $\rho$ be a PPT entangled edge. Let $W=P+Q^{\Gamma}$, where $P,Q$ be orthogonal projections on $\ker \rho, \ker \rho^{\Gamma},$ respectively.
Then $\delta = \min_{\vert x\vert =1, \vert y \vert =1 } (x\otimes y)^* W (x\otimes y) >0$ and for every $0<\varepsilon\le \delta$, the biquadratic form $F_{\varepsilon}(x,y)= (x\otimes y)^* W (x\otimes y)- \varepsilon \vert x\otimes y \vert^2$ is positive semidefinite  but not decomposable. 
\end{corollary}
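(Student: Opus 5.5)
The plan is to check the three ingredients in order: (1) $W$ lies in $\mathcal D(m,n)$ and is minimal, so $\delta>0$; (2) Theorem \ref{Theorem1}(i) then gives positivity of $F_\varepsilon$; (3) $\rho$ itself serves as a dual-cone witness against decomposability.

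\emph{Step 1: $\delta>0$.} Since $P$ and $Q$ are orthogonal projections, they are positive semidefinite, so $W=P+Q^{\Gamma}\in\mathcal D(m,n)$. As in the proof of Lemma \ref{LemmaEdge},
\[
(x\otimes y)^*W(x\otimes y)=\|P(x\otimes y)\|^2+\|Q(x\otimes \overline y)\|^2 .
\]
This vanishes exactly when $x\otimes y\perp\ker\rho$ and $x\otimes\overline y\perp\ker\rho^{\Gamma}$. Equivalently, $x\otimes y\in\operatorname{ran}\rho$ and $x\otimes\overline y\in\operatorname{ran}\rho^{\Gamma}$, because both matrices are Hermitian. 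This is precisely what condition (C) of Theorem \ref{edgeW} excludes for a nonzero product vector. Since $\rho$ is an edge state (condition (A)), Theorem \ref{edgeW} gives (C). Hence $W$ is minimal, and by Lemma \ref{LemmaEdge} the minimum $\delta$ over the compact set $\|x\|=\|y\|=1$ is attained and strictly positive.

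\emph{Step 2: positivity.} I apply Theorem \ref{Theorem1}(i) to $p(x,y)=(x\otimes y)^*W(x\otimes y)$. Its proof shows that $F_\varepsilon(x,y)\ge\|x\|^2\|y\|^2(\delta-\varepsilon)\ge0$ for every $0<\varepsilon\le\delta$, so each $F_\varepsilon$ is positive semidefinite.

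\emph{Step 3: indecomposability.} Imitating Corollary \ref{IndecomBQF1}, I use $\rho$ itself as the dual witness instead of $I-\Pi$. By Lemma \ref{dualD}, $\rho\in\mathcal D^*$, since $\rho\succeq0$ and $\rho^{\Gamma}\succeq0$. Now $P\rho=0$ because $P$ projects onto $\ker\rho$, and $\Gamma$ is self-adjoint for the trace pairing. Therefore
\[
\mathrm{Tr}(W\rho)=\mathrm{Tr}(P\rho)+\mathrm{Tr}(Q\rho^{\Gamma})=0+0=0,
\]
and so
\[
\mathrm{Tr}\bigl((W-\varepsilon I)\rho\bigr)=-\varepsilon\,\mathrm{Tr}\rho<0,
\]
because $\rho\ne0$ is a state. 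Thus $W-\varepsilon I\notin\mathcal D(m,n)$. The Gram matrix of $F_\varepsilon$ is $W-\varepsilon I$ and is unique by Lemma \ref{Grammatrix}, so Proposition \ref{SOS}(iii) (equivalently Theorem \ref{Theorem1}(ii)) shows that $F_\varepsilon$ is not decomposable.

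\emph{Main obstacle.} The only delicate point is Step 1: matching the zero set of the form with the edge condition, and in particular keeping track of the conjugate $\overline y$ in the second factor. I would rely on formulation (C) of Theorem \ref{edgeW}, which already carries the correct conjugation. Entanglement of $\rho$ is not really used beyond $\rho\ne0$ and the edge property.
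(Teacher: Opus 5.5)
Your proof is correct and follows essentially the same route as the paper. The edge property, via condition (C) of Theorem \ref{edgeW}, gives $\delta>0$, and positivity follows from the bound in the proof of Theorem \ref{Theorem1}(i). Then $\rho$ itself serves as the dual witness: $\mathrm{Tr}(P\rho)=\mathrm{Tr}(Q\rho^{\Gamma})=0$ forces $\mathrm{Tr}((W-\varepsilon I)\rho)<0$, so $W-\varepsilon I\notin\mathcal D(m,n)$. Your write-up is also more careful than the paper's, which skips the zero-set computation (including the $\overline{y}$ conjugation you track) and writes $\mathrm{Tr}(A\rho)=-8\varepsilon$ with $I_8$; the correct general value is $-\varepsilon\,\mathrm{Tr}\rho$, as you have it.
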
 
\begin{proof} Set
\[
\delta := \min_{\|x\|=\|y\|=1} \langle x \otimes y, W(x \otimes y) \rangle.
\]
For Horodecki PPT entangled states, by Theorem \ref{edgeW}, this $\delta$ is strictly positive. For any $0 < \varepsilon \le \delta$, the operator $A := W - \varepsilon I_8$ is block-positive (i.e., $P_{\varepsilon} (x,y)= (x\otimes y)^* A (x\otimes y) \ge 0$ for all $x\in \C^2,y\in \C^4$) but detects $\rho$ in the sense that:
\[
\mathrm{Tr}(A\rho) = -8 \varepsilon < 0,
\]
because $P\rho = 0$ and $Q\rho^\Gamma = 0$, hence $\mathrm{Tr}(P\rho) = 0$ and $\mathrm{Tr}(Q^{\Gamma}\rho) = \mathrm{Tr}(Q\rho^\Gamma) = 0.$ Hence, $A \notin \mathcal{D}(m,n).$ Applying Proposition \ref{SOS},  we get the conclusion.
\end{proof}
\section{Choi Polynomials}
Let $\phi: M_m \to M_n$ be a linear map. Let $P_{\phi}$ denote the polynomial defined by:$$P_{\phi}(x,y) = y^*\phi(xx^*)y \hspace{1cm} (x\in \mathbb{C}^m, y\in \mathbb{C}^n).$$We call $P_{\phi}$ \textit{the Choi polynomial of} $\phi$. (Note: M.D. Choi constructed a real biquadratic form on $\mathbb{R}^3 \times \mathbb{R}^3$ that is nonnegative everywhere but is not a sum of squares of real bilinear forms \cite{Choi75a}. This result demonstrated the existence of a positive linear map on $M_3$ that is not decomposable.)
\begin{example}
	Let $F$ be a biquadratic form on $\C^2\times \C^2$ defined by
	$$F(x,y) = 2|x_1|^2|y_1|^2 - 2 \sqrt{-1}\, x_2\bar{x}_1|y_1|^2 + 3\sqrt{-1}\, x_1\bar{x}_2 |y_1|^2 + 3|x_2|^2|y_2|^2. $$
	Then there is a linear map $\phi: \ M_2 \to M_2$ determined by
	$$
	\phi(e_{11}) = \left(\begin{array}{ll}
		2 & 3\sqrt{-1}\\
		-2\sqrt{-1}  & 0
	\end{array}\right),  \quad 
	\phi(e_{22}) = \left(\begin{array}{ll}
		0 & 0\\
		0  & 3
	\end{array}\right) 
	$$
	$$
	\phi(e_{12}) = \left(\begin{array}{ll}
		0& 0\\
		0  & 0
	\end{array}\right), \quad
	\phi(e_{21}) =  \left(\begin{array}{ll}
		0 & 0\\
		0  & 0
	\end{array}\right),
	$$
	such that the Choi polynomial is $F(x,y).$ It is clear that $\phi$ is not self-adjoint and 
	$F\left(\left(\begin{array}{c}1\\
		1
	\end{array}
	\right),
	\left(\begin{array}{c}
		1\\
		0
	\end{array}
	\right) \right)= 2 + \sqrt{-1}$ is not real.
\end{example}

\begin{proposition}\label{11coresp}
	Let $\phi, \psi$ be linear maps from $M_m$ to $M_n.$ Then the following statements hold true.
	\begin{enumerate}
		\item $P_{\lambda \phi + \mu \psi}(x,y) = \lambda P_{\phi}(x,y) + \mu P_{\psi}(x,y)$ for all $x\in \C^m, y \in \C^n,$ and any complex numbers $\lambda, \mu.$
		\item $P_{\phi}(x,y)=0$ for all $x\in \C^m, y \in \C^n$ if and only if $\phi(X) = 0$ for all $X\in M_m.$
		\item The correspondence that associates each linear map $\phi \in B[M_m, M_n]$  with its Choi polynomial $P_{\phi}$ is one-to-one between $B(M_m,M_n)$ and the set of all biquadratic forms in $x\in \C^m, y\in \C^n.$ 
	\end{enumerate}
\end{proposition}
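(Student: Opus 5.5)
The plan is to work directly from the definition $P_\phi(x,y)=y^*\phi(xx^*)y$ and expand everything in the standard basis. Part (1) is immediate: linearity of $\phi\mapsto \phi(xx^*)$ in $\phi$ and of $A\mapsto y^*Ay$ in $A$ gives
\[
P_{\lambda\phi+\mu\psi}(x,y)=y^*(\lambda\phi+\mu\psi)(xx^*)y=\lambda P_\phi(x,y)+\mu P_\psi(x,y).
\]

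For part (2), write $xx^*=\sum_{i,j}x_i\bar x_j e_{ij}$ and $\phi(e_{ij})=\sum_{k,l}\phi(e_{ij})_{kl}\,f_kf_l^*$. This gives
\[
P_\phi(x,y)=\sum_{i,j,k,l}\phi(e_{ij})_{kl}\,x_i\bar x_j\,\bar y_k y_l .
\]
So $P_\phi$ is a biquadratic form in the sense of Definition \ref{defn1}, with coefficient $p_{ijlk}=\phi(e_{ij})_{kl}$. The key step is that the monomials $x_i\bar x_j y_k\bar y_l$ are linearly independent as functions on $\C^m\times\C^n$. To see this, treat $x,\bar x,y,\bar y$ as independent variables via Wirtinger derivatives: applying $\partial_{x_i}\partial_{\bar x_j}\partial_{y_k}\partial_{\bar y_l}$ isolates each coefficient. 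Alternatively, one can argue as in Lemma \ref{Grammatrix}, where uniqueness of the Gram matrix gives exactly this independence.

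Hence $P_\phi\equiv 0$ forces every entry of every $\phi(e_{ij})$ to vanish, so $\phi=0$ on the basis and therefore on all of $M_m$. The converse direction is trivial. This polarization step is where the complex setting differs from the real case treated before Remark 1. There only $\phi(e_{ij})+\phi(e_{ji})$ was determined; here $x_i\bar x_j$ and $x_j\bar x_i$ are independent monomials, so $\phi(e_{ij})$ and $\phi(e_{ji})$ are separated. I expect this to be the only point needing care.

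For part (3), injectivity of $\phi\mapsto P_\phi$ follows from (1) and (2): if $P_\phi=P_\psi$, then $P_{\phi-\psi}=0$, so $\phi=\psi$. For surjectivity, take an arbitrary biquadratic form $p(x,y)=\sum p_{ijkl}x_i\bar x_j y_k\bar y_l$ and define $\phi$ on the basis by $\phi(e_{ij})=\sum_{k,l}p_{ijlk}\,f_kf_l^*$, extended linearly. The expansion above then gives $P_\phi=p$.

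Alternatively, one can note that both spaces have complex dimension $m^2n^2$: for $B(M_m,M_n)$ this is clear, and for biquadratic forms it follows from the linear independence of the monomials. Injectivity then already implies bijectivity. I would present the explicit construction, since it is also what the preceding example uses.
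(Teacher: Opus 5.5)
Your proof is correct. Parts (1) and (3) coincide with the paper's argument: the same coefficient assignment, with the $(k,l)$ entry of $\phi(e_{ij})$ equal to $p_{ijlk}$, and injectivity from $P_{\phi-\psi}=0$. Part (2), however, goes a different way.

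The paper argues without coordinates. From $y^*\phi(xx^*)y=0$ for all $y\in\C^n$ it infers $\phi(xx^*)=0$ by complex polarization, which is valid even though $\phi(xx^*)$ need not be Hermitian. It then uses that the matrices $xx^*$ span the Hermitian matrices, and that every $X$ equals $X_1+iX_2$ with $X_1,X_2$ Hermitian.

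You instead expand $P_\phi$ in coordinates and isolate each entry $\phi(e_{ij})_{kl}$ with $\partial_{x_i}\partial_{\bar x_j}\partial_{y_l}\partial_{\bar y_k}$. In effect you prove that the monomials $x_i\bar x_j y_k\bar y_l$ are linearly independent as functions on $\C^m\times\C^n$.

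The paper's route is shorter. Yours gives more:
\begin{itemize}
\item It shows that the coefficients of a biquadratic form are uniquely determined. So the map read off from the coefficients in (3) does not depend on how the form is written.
\item It shows the space of forms has dimension $m^2n^2$, which is what justifies your dimension-count alternative for (3).
\item It locates the contrast with the real case in the independence of $x_i\bar x_j$ and $x_j\bar x_i$, rather than in the Cartesian decomposition step.
\end{itemize}

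Keep the Wirtinger argument rather than the appeal to Lemma \ref{Grammatrix}. That lemma's uniqueness proof already presupposes this coefficient identification, so citing it here would be circular.
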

\begin{proof}
	(1) is straightforward.
	
	(2) Suppose $P_{\phi} =0.$ Then $y^*\phi(xx^*)y=0$ for all $x\in \C^m, y \in \C^n.$ Hence, $\phi(xx^*) =0$ for all $x\in \C^m.$ As a consequence, $\phi(X) =0$ for all Hermitian $X\in M_m.$ Any matrix $X\in M_m$ can be decomposed as $X = X_1+ iX_2,$ where the $X_j$ are Hermitian. Hence $\phi (X) =0.$   
	
	(3) Suppose $p$ is a biquadratic forms. Then $p$ can be written 
	$p(x,y) = \sum c_{ijkl} x_i\bar{x}_j y_k \overline{y}_l.$ We define a linear map $\phi:\ M_m\to M_n $ such that the $(l,k)$ entry of $\phi(e_{ij})$ is $c_{ijkl}.$   
	Then, for $x\in \C^m, y\in \C^{n},$ we have
	\begin{align*}
		y^*\phi(xx^*)y & = \sum_{l=1}^n\overline{y_l}f_l^* \phi[(\sum_{i=1}^m x_i e_i)(\sum_{j=1}^m \overline{x}_j e^*_j)] \sum_{k=1}^n y_kf_k \\
		& = \sum_{i,j=1}^m\sum_{k,l=1}^n  x_i\overline{x}_j y_k \overline{y}_l [f_l^* \phi(e_i e_j^*) f_k]\\
		& = \sum_{i,j=1}^m\sum_{k,l=1}^n  c_{ijkl} x_i\overline{x}_j y_k \overline{y}_l = p(x,y).
	\end{align*}
	By (2), the corresponding between $\phi$ and $P_{\phi}$ is one to one and onto.
	
	Indeed, if $\phi_1, \phi_2 \in B(M_m, M_n)$ satisfies $P_{\phi_1} = P_{\phi_2}$, then, $P_{\phi_1-\phi_2} = 0$. Hence, $\phi_1 = \phi_2$.
\end{proof}

The Choi–Jamiołkowski isomorphism establishes a correspondence between a linear map $\phi: M_m \to M_n$ and its Choi matrix $C_{\phi} \in M_m\otimes M_n$, given by $C_\phi = \sum_{i,j=1}^m e_{ij} \otimes \phi(e_{ij})$.
It is straightforward to compute that for all $x \in \C^m$,$y \in \C^n$ 
$$ 
P_{\phi}(x,y) = \langle C_\phi(\overline{x} \otimes y), \overline{x} \otimes y\rangle,
$$
where $\overline{x} = (\overline{x_1}, \dots, \overline{x_m})^t \in \C^m$.

Let us recall (see, e.g., \cite{Keybook, OsakaBook24}) that
$$B_1(M_m \otimes M_n) = \{X = X^* \in M_m \otimes M_n \mid (x \otimes y)^* X (x \otimes y) \ge 0, \quad \forall x \in \mathbb{C}^m, y \in \mathbb{C}^n\}.$$

\begin{theorem}\label{Thrm1}
	Let $\phi$ be a linear map from $M_m $ to $M_n.$ Then the following statements hold.
\begin{enumerate}
	\item $\phi$ is self-adjoint, if and only if $C_{\phi} = C_{\phi}^*$, if and only if $P_{\phi}$ is real-valued for all $x\in \C^m$, $y\in \C^n$ (in other word, $P_{\phi}$ is Hermitian symmetric).
	\item $\phi$ is positive, if and only if $C_{\phi}$ belongs to $B_1(M_m\otimes M_n),$ if and only if $P_{\phi}(x,y) \ge 0$ for all $x,y.$
	\item $\phi$ is completely positive, if and only if $C_{\phi}$ is positive, if and only if the Choi polynomial $P_{\phi}$ of $\phi$ is a sum of squares of sesquilinear forms (of the form $ x^* A y$). 
	\item $\phi$ is completely copositive, if and only if the partial tranpose $C_{\phi}^{\Gamma}$ is positive, if and only if the Choi polynomial $P_{\phi}$ of $\phi$ is a sum of squares of bilinear forms (of the form $x^t B y$).
	\item $\phi$ is decomposable, if and only if $C_{\phi} \in PPT[m,n],$ if and only if the Choi polynomial $P_{\phi}$ of is a sum of squares of bilinear and sesquilinear forms.
\end{enumerate}
\end{theorem}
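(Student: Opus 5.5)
The whole theorem rests on the identity $P_{\phi}(x,y)=\langle C_\phi(\overline{x}\otimes y),\overline{x}\otimes y\rangle=w(x,y)^*C_\phi\, w(x,y)$ with $w=\overline{x}\otimes y$, recorded just before the statement. The plan is to read this as saying that $C_\phi$ is exactly the Gram matrix $R$ of Lemma \ref{Grammatrix} for the form $P_\phi$. Uniqueness in that lemma then gives $Q=C_\phi^{T_1}$ for the Gram matrix in the variables $z=x\otimes y$. Each item then becomes a statement about $C_\phi$, $C_\phi^{T_1}$ or $C_\phi^{\Gamma}$, and I would use the known equivalences between map-theoretic properties and Choi matrices (Choi's theorem, Størmer's decomposability criterion) as the middle link.

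For (1), $\phi$ is self-adjoint (i.e.\ $\phi(X^*)=\phi(X)^*$) if and only if $\phi(e_{ji})=\phi(e_{ij})^*$ for all $i,j$, which is exactly $C_\phi=C_\phi^*$. By Lemma \ref{Grammatrix}, $P_\phi$ is real-valued if and only if its Gram matrix $R=C_\phi$ is Hermitian, since $w(x,y)$ runs over a spanning set of $\mathbb C^{mn}$.

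For (2), $\phi$ is positive if and only if $\phi(xx^*)\succeq 0$ for all $x$, since rank-one projections generate the positive cone. This holds if and only if $y^*\phi(xx^*)y=P_\phi(x,y)\ge 0$ for all $x,y$. Replacing $x$ by $\overline{x}$, the identity shows this is the same as $(u\otimes y)^*C_\phi(u\otimes y)\ge0$ for all product vectors, i.e.\ $C_\phi\in B_1$.

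For (3), Choi's theorem gives that $\phi$ is completely positive if and only if $C_\phi\succeq0$. Proposition \ref{SOS}(ii), which is really Proposition \ref{SOS}(i) in the $w$-variables, gives that a form whose $w$-Gram matrix is positive semidefinite is exactly a sum of squared moduli of sesquilinear forms. Concretely, if $C_\phi=\sum_r v_rv_r^*$, then $P_\phi=\sum_r|v_r^*(\overline{x}\otimes y)|^2$, and each $v_r^*(\overline{x}\otimes y)$ has the form $x^*A_ry$. The converse uses uniqueness of the Gram matrix.

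For (4), complete copositivity means $\phi\circ t$ is completely positive. I would check that $C_{\phi\circ t}=C_\phi^{T_1}=(C_\phi^{\Gamma})^{T}$, so positivity of $C_\phi^{\Gamma}$ is equivalent to positivity of the $z$-Gram matrix $Q=C_\phi^{T_1}$, because full transposition preserves positive semidefiniteness. Proposition \ref{SOS}(i) then turns this into a sum of squares of bilinear forms.

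For (5), I would use Størmer's characterization: $\phi=\phi_1+\phi_2$ with $\phi_1$ CP and $\phi_2$ co-CP if and only if $C_\phi=A+B^{\Gamma}$ with $A,B\succeq0$. Adding the conclusions of (3) and (4) gives the polynomial statement, and the converse again comes from uniqueness of Gram matrices. The phrase ``$C_\phi\in PPT[m,n]$'' in the statement has to be read as ``$C_\phi$ lies in the decomposable cone $\{A+B^\Gamma\}$'', the dual of the PPT cone by Lemma \ref{dualD}. I would prove it in that form.

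The main obstacle is bookkeeping of conjugations and partial transposes: keeping straight that $C_\phi$ is the $w$-Gram matrix and not the $z$-Gram matrix, and that $T_1$ versus $\Gamma$ differ by a full transpose. Before assembling the pieces, I would first verify $P_\phi(x,y)=w^*C_\phi w$ entrywise against the coefficient convention in Proposition \ref{11coresp}(3).
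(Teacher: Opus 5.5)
Your proposal is correct, but it proves a different leg of the triangle (map, Choi matrix, polynomial) than the paper does. The paper takes the map/Choi-matrix equivalences as known and argues map/polynomial directly. A Kraus term $X\mapsto V^*XV$ (resp.\ $V^*X^tV$) has Choi polynomial $|x^*Vy|^2$ (resp.\ $|x^tVy|^2$). Conversely, a square $|x^*Ay|^2$ is recognized as the Choi polynomial of $X\mapsto A^*XA$, with the injectivity in Proposition \ref{11coresp} implicitly identifying $\phi$ with that map. You instead prove the Choi-matrix/polynomial equivalences by reading $C_\phi$ as the Gram matrix of $P_\phi$ in the variables $\overline{x}\otimes y$, so that $C_\phi^{T_1}$ is the Gram matrix in $x\otimes y$. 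After that, (1), (3), (4) are instances of Lemma \ref{Grammatrix} and Proposition \ref{SOS}, and Choi's theorem supplies the link to maps. The two routes carry the same information, since a rank-one decomposition $C_\phi=\sum_r v_rv_r^*$ is a Kraus decomposition in disguise. Yours reuses the Section 2 machinery, handles arbitrary finite sums rather than one square ``for simplicity'', and makes the $T_1$ versus $\Gamma$ bookkeeping explicit. The paper's route hands you the CP and co-CP summands as explicit maps.

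Your reinterpretation of (5) is a necessary correction, not a gap. With $PPT[m,n]$ as the paper defines it ($C\succeq 0$ and $C^\Gamma\succeq 0$), the stated equivalence fails. The transpose map on $M_2$ is co-CP, hence decomposable, with Choi polynomial $|x^ty|^2$, yet its Choi matrix $\sum_{i,j}e_{ij}\otimes e_{ji}$ is the flip operator, which has eigenvalue $-1$. The right condition is $C_\phi\in\mathcal{D}(m,n)$, as you say; the paper's proof passes over this by citing the Choi-matrix side as well known.

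One small precision in (1). Real-valuedness forces $C_\phi=C_\phi^*$ because the Gram matrix is unique, which comes from polarization in each tensor factor. It does not follow merely from the vectors $\overline{x}\otimes y$ spanning $\C^{mn}$, since a nonzero Hermitian form can vanish on a spanning set.
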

\begin{proof} The equivalence between these properties of $\phi$ and of the Choi matrix $C_{\phi}$ are well-known (e.g., see \cite[Section 3.3]{OsakaBook24}). Hence, we need to prove the equivalence between $\phi$ and its Choi polynomial. 
	
(1) If $\phi$ is self-adjoint (i.e., it maps a Hermitian matrix to a Hermitian one),then 
	$$ 
\overline{P_{\phi}(x,y)} = (y^*\phi(xx^*)y)^* = y^* (\phi(xx^*))^*y = y^*\phi(xx^*)y \in \R.
	$$
Conversely, if $P_{\phi}(x,y)$ is real for all $x,y$ then $\phi(xx^x)$ is Hermitian for every $x.$ As a consequence, $\phi$ maps a Hermitian matrix to a Hermitian matrix. Hence, it is self-adjoint.
 
(2) This part follows from the definition of Choi polynomial and that $\phi$ is positive iff $\phi(xx^*) \ge 0$ for every $x\in \C^m.$
	
(3) If $\phi (X) = V^* X V$ then 
$$
P_{\phi}(x,y)   =  y^* V^*(xx^*)V y  = |x^*Vy|^2.   
$$
Conversely, if $P_{\phi}(x,y)$ is a sum of squares of sesquilinear forms, for simplicity, assume $P_{\phi}(x,y) = |f(x,y)|^2,$ where $f$ is a sesquilinear form. We can write 
$f(x,y) = x^*Ay,$ where $A$ is an $m\times n$-matrix. Hence,
$$
P_{\phi}(x,y) = (x^*Ay)^*(x^*Ay) = y^*A^*xx^*Ay = y^*\phi(xx^*)y, \quad  \phi (X) = A^*XA.
$$ 
\quad (4) If $\phi (X) = V^* X^t V$ then 
$$
P_{\phi}(x,y)   =  y^* V^*(xx^*)^tV y  = |x^tVy|^2.   
$$
The converse statement is the same argument as the one in (3).

(5) This part follows from (3) and (4). 
\end{proof}

In 1963, St\o rmer \cite{Stormer63} showed that every positive linear map $\phi$ from $M_2$ to $M_2$ is decomposable. By \cite[Theorem 8.2]{Stormer63} every unital extreme linear map
from $M_2$ to $M_2$ is unitarily equivalent to the one of the form:
$$
\left( \begin{array}{ll}
	a & b  \\
	c  & d	  
\end{array}  \right)  \rightarrow 
\left( \begin{array}{ll}
	a & \alpha b + \beta c \\
	\overline{\alpha}c + \overline{\beta}b  & \gamma a+ \epsilon b+ \overline{\epsilon} c+\delta d	  
\end{array} \right),
$$
where $|\epsilon|^2 = 2\gamma (\delta - |\alpha|^2-|\beta|^2)$ in the case $\gamma \ne 0$ and $|\alpha|=1$ or $|\beta| =1$ in the case $\gamma =0.$
We will consider the case where $\gamma = 0$ (so $\epsilon = 0$) and $|\alpha| = 1$. Let us denote this map by $\phi_0$ (including the non-unital case). 
Then, let $X=(x_{ij})$, 
$$
\phi_0(X) = 
\left( \begin{array}{ll}
	x_{11} & \alpha x_{12} + \beta x_{21} \\
	\overline{\alpha}x_{21} + \overline{\beta}x_{12}  & \delta x_{22},	  
\end{array} \right), $$
where $|\alpha| =1$ and $\sqrt{\delta} = {|\beta|+1}.$
\begin{remark}
$\phi_0$ is unital if and only if $\beta = 0$; that is, $\phi_0$ is extreme in the sense of \cite{Stormer63} if and only if $\beta = 0$. In this case, $\phi_0$ is completely positive.
\end{remark}
\begin{proof}
$\phi_0 (1) =1,$ if and only if, $\delta =1,$ if and only if, $\beta =0.$ 
The Choi matrix $C_{\phi_0}$ in this case is
$$C_{\phi_0} = \begin{pmatrix} 1 & 0 & 0 & \alpha \\ 0 & 0 & 0 & 0 \\ 0 & 0 & 0 & 0 \\ \bar{\alpha} & 0 & 0 & 1 \end{pmatrix}.$$
The eigenvalues are $\{0, 2\}.$
\end{proof}
In the case $\beta \ne 0,$ the map $\phi_0$ is not extreme. Using the Choi polynomial, we have the following decomposition of $\phi_0.$
\begin{corollary} 
The positive linear map $\phi_0$ is decomposable but is neither completely positive nor copositive, provided that $\beta \ne 0.$ In addition, $\phi_0$ can be decomposed uniquely as
$$\phi_0 =\phi_1 +\phi_2,$$
where $\phi_1$ is completely positive, $\phi_2$ is completely copositive and  
$$
\phi_1(xx^*) = \left( \begin{array}{ll}
	\delta^{-1/2}|x_1|^2 & x_1\overline{x}_2 \\
	\overline{x}_1x_2   & \sqrt{\delta} |x_{2}|^2,	  
\end{array} \right), \quad  
\phi_2(xx^*) = |\beta| \left( \begin{array}{ll}
\delta^{-1/2}|x_1|^2 & x_2\overline{x}_1 \\
	 \overline{x}_2x_1   & \sqrt{\delta}|x_{2}|^2,	  
\end{array} \right).
$$
\end{corollary}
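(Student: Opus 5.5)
The plan is to compute the Choi polynomial $P_{\phi_0}$ explicitly and write it as a sum of squared moduli of one sesquilinear and one bilinear form. I then read off decomposability from Theorem \ref{Thrm1}(5). Failure of complete (co)positivity and uniqueness I get from the $4\times 4$ Choi matrix. Put $X=xx^*$, so $x_{12}=x_1\bar x_2$ and $x_{21}=x_2\bar x_1$. Then
\[
P_{\phi_0}(x,y)=|x_1|^2|y_1|^2+\delta|x_2|^2|y_2|^2+2\,\mathrm{Re}\bigl(\bar y_1y_2(\alpha x_1\bar x_2+\beta x_2\bar x_1)\bigr).
\]
Write $\beta=|\beta|e^{i\theta}$. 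I claim that
\[
P_{\phi_0}=\bigl|\delta^{-1/4}\bar x_1y_1+\bar\alpha\,\delta^{1/4}\bar x_2y_2\bigr|^2+|\beta|\,\bigl|\delta^{-1/4}x_1y_1+e^{i\theta}\delta^{1/4}x_2y_2\bigr|^2 .
\]
(Up to placing the phases correctly, possibly conjugated.) The cross terms reproduce $\alpha x_1\bar x_2\bar y_1y_2$ and $\beta x_2\bar x_1\bar y_1y_2$ plus their conjugates. The diagonal terms match because $(1+|\beta|)\delta^{-1/2}=1$ and $(1+|\beta|)\delta^{1/2}=\delta$, both of which follow from $\sqrt\delta=1+|\beta|$. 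By Theorem \ref{Thrm1}(3),(4) the first square is the Choi polynomial of a completely positive $\phi_1$ of the form $A^*XA$. The second is the Choi polynomial of a completely copositive $\phi_2$ of the form $B^*X^tB$. Theorem \ref{Thrm1}(5) then gives decomposability. The formulas for $\phi_1(xx^*)$ and $\phi_2(xx^*)$ are read off directly, with the unimodular phases $\alpha$ and $e^{i\theta}$ sitting in the off-diagonal entries; the displayed formulas in the statement correspond to the case where these phases equal $1$, or are absorbed by a diagonal unitary.

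To see that $\phi_0$ is not completely positive, I use the Choi matrix $C_{\phi_0}$ in the basis $e_i\otimes f_k$. Its nonzero entries are diagonal entries $1$ and $\delta$ at positions $(1,1)$ and $(4,4)$, the entry $\alpha$ at $(1,4)$, and the entry $\beta$ at $(2,3)$ (with conjugates), while its diagonal entries at positions $2$ and $3$ vanish. The principal $2\times2$ minor on indices $\{2,3\}$ is $-|\beta|^2<0$, so $C_{\phi_0}\not\succeq0$. Dually, $C_{\phi_0}^\Gamma$ swaps the roles of $\alpha$ and $\beta$: it puts $\alpha$ on the block with zero diagonal, giving minor $-|\alpha|^2=-1<0$. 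So $\phi_0$ is not completely copositive either.

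For uniqueness, suppose $\phi_0=\phi_1'+\phi_2'$ with $\phi_1'$ completely positive and $\phi_2'$ completely copositive. At the matrix level, $C:=C_{\phi_0}=Q+R^\Gamma$ with $Q=C_{\phi_1'}\succeq0$ and $R=C_{\phi_2'}^\Gamma\succeq0$. Since the partial transpose preserves diagonals, $0=C_{22}=Q_{22}+R_{22}$ forces $Q_{22}=R_{22}=0$, and likewise at index $3$. Positive semidefiniteness then kills rows and columns $2,3$ of both $Q$ and $R$.

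It follows that $Q$ lives on indices $\{1,4\}$ with off-diagonal entry exactly $\alpha$. $R^\Gamma$ lives on indices $\{2,3\}$ with entry $\beta$, so $R$ lives on $\{1,4\}$ with off-diagonal entry $\beta$. Writing $q_1+r_1=1$ and $q_4+r_4=\delta$, positivity requires $q_1q_4\ge1$ and $r_1r_4\ge|\beta|^2$. Cauchy--Schwarz gives
\[
\sqrt{q_1q_4}+\sqrt{r_1r_4}\le\sqrt{(q_1+r_1)(q_4+r_4)}=\sqrt\delta=1+|\beta|.
\]
So equality holds throughout. This forces $(q_1,q_4)\propto(r_1,r_4)$, hence $q_1=\delta^{-1/2}$, $q_4=\sqrt\delta$, and $(r_1,r_4)=|\beta|(\delta^{-1/2},\sqrt\delta)$, which is exactly the decomposition above. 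The main obstacle is this equality-case bookkeeping, together with matching the phase conventions of the displayed $\phi_1,\phi_2$.
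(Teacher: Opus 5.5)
Your proof is correct, and on two points it proves more than the paper's argument does.

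\textbf{Existence.} This step follows the paper's idea: write $P_{\phi_0}$ as one sesquilinear square plus one bilinear square, then invoke Theorem~\ref{Thrm1}. You should fix the phase you hedged on. The cross term of $|\delta^{-1/4}\bar x_1y_1+c\,\delta^{1/4}\bar x_2y_2|^2$ is $2\,\mathrm{Re}(c\,x_1\bar x_2\bar y_1y_2)$, so $c=\alpha$, not $\bar\alpha$. Your $e^{i\theta}$ in the bilinear square is right.

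\textbf{Uniqueness.} The paper posits the ansatz $|x^*Ay|^2+|x^tBy|^2$, with a single square of each type, and matches coefficients. That only gives uniqueness among decompositions whose two parts each have Choi rank one. You instead work with $C_{\phi_0}=Q+R^\Gamma$ for arbitrary $Q,R\succeq 0$. The zero diagonal entries at positions $2,3$ kill those rows and columns, leaving only $q_1,q_4,r_1,r_4$. The equality case of Cauchy--Schwarz then determines $Q$ and $R$ completely; in fact $R_{14}=\bar\beta$ rather than $\beta$, which is harmless since only $|\beta|$ enters. This gives uniqueness among all decompositions into a completely positive plus a completely copositive part.

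\textbf{Failure of complete (co)positivity.} The paper never checks separately that $\phi_0$ is neither completely positive nor completely copositive. Your principal minors, $-|\beta|^2$ in $C_{\phi_0}$ and $-|\alpha|^2$ in $C_{\phi_0}^\Gamma$, settle this directly.

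\textbf{Phases in the displayed formulas.} Your caveat is accurate. The displayed $\phi_1,\phi_2$ hold literally only when $\alpha=1$ and $\beta>0$; otherwise they hold only after conjugating by diagonal unitaries. The paper drops these phases through the identity $A^*xx^*A=|A|xx^*|A|$ and its analogue for $B$. For diagonal $A$ that identity fails unless $\bar a_{11}a_{22}=|a_{11}a_{22}|$, i.e.\ unless $\alpha=1$.
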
 
\begin{proof}
We have
\begin{align*}
P_{\phi}(x,y) & = |x_1|^2|y_1|^2 + \alpha x_1 \overline{x}_2 \overline{y}_1 y_2 + \overline{\alpha}  \overline{x}_1 x_2y_1\overline{y}_2 + \beta \overline{x}_1x_2 \overline{y}_1 y_2 + \overline{\beta}x_1 \overline{x}_2 y_1\overline{y}_2 + \delta |x_2|^2|y_2|^2.
\end{align*}
Let $A = (a_{ij})$ and $B=(b_{kl})$ be two matrices in $M_2.$ Let
$ F(x,y) := |x^*Ay|^2 + |x^tBy|^2. $
Then 
\begin{align*}
F(x,y) & = \left|{\sum_{i,k}a_{ik}\overline{x}_i y_k}\right|^2 + \left|{\sum_{i,k}b_{ik}{x}_i y_k}\right|^2 \\
       & = \sum_{i,j,k,l} a_{ik}\overline{a}_{jl}\overline{x}_i x_j y_k \overline{y}_l +  \sum_{i,j,k,l} b_{ik}\overline{b}_{jl}{x}_i \overline{x}_j y_k \overline{y}_l.
\end{align*}
 Identify the coefficients of $F(x,y)$ and $P_{\phi_0}$ we get $a_{12} =a_{21} =b_{12}=b_{21} =0$ and 
 \begin{align*}
 	|a_{11}|^2+|b_{11}|^2 & =1,\\
 	|a_{22}|^2+|b_{22}|^2 & = \delta  = (1+|\beta|)^2 ,\\
 	\overline{a}_{11}a_{22} = \alpha, \ & \  \overline{b}_{11} b_{22} =\beta.
 \end{align*}
Therefore, 
$$
A = e^{i\theta_1} \left(\begin{array}{ll}
	\frac{\overline{\alpha}}{\sqrt{1+|\beta|}} & 0 \\
	0 & 0 
  \end{array} \right) \textrm{  and  }\ 
B = e^{i\theta_2} \left(\begin{array}{ll}
	\frac{\sqrt{|\beta|}}{\sqrt{1+|\beta|}} & 0 \\
	0 & \frac{\beta \sqrt{1+|\beta|}}{\sqrt{|\beta|}}
 \end{array} \right) ,
$$
for any $\theta_1, \theta_2.$ However, the corresponding biquadratic forms of $A$ and $B$: 
\begin{align*}
F_1(x,y)=|x^*Ay|^2 & = y^*A^*xx^*Ay = y^*|A|xx^*|A|y, \\
F_2(x,y)=|x^tBy|^2 & = y^*B^*(x^t)^*x^tBy = y^*|B|(xx^*)^t|B|y
\end{align*} 
are independent of $\theta_1$ and $\theta_2.$ Hence, the resulting biquadratic forms $F_1$ and $F_2$ are uniquely determined. As a consequence, by Lemma \ref{11coresp}, we can get the CP ($\phi_1$) and coCP ($\phi_2$) map uniquely. Recovery of the map $\phi_1,\phi_2$ from the biquadratic forms is followed from the identity $\phi_1(xx^*) = A^*xx^*A=  |A| xx^* |A| $ and $\phi_2(xx^*) = |B|(xx^*)^t|B|.$
\end{proof}

Let $\phi:\ M_m \longrightarrow M_n$ be a linear map. Suppose that $\phi (M_m(\R)) \subset M_n(\R).$ Then there exists the restriction of $\phi$ on $M_m(\R)$, denoted by $\phi_{\R}$ which is a linear map from $M_m(\R)$ to $M_n(\R)$ by $\phi_{\R} (X) = \phi (X)$ for $X\in M_m(\R).$
\begin{proposition}\label{Decom_Cong}
Let $\phi:\ M_m \longrightarrow M_n$ be a self-adjoint linear map with $\phi (M_m(\R)) \subset M_n(\R).$  If $\phi$ is decomposable, then the restriction $\phi_{\R}:\ \cS_m \to \cS_n $ is a congruence.
\end{proposition}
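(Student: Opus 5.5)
By Theorem \ref{Thrm1}(5), decomposability of $\phi$ means its Choi polynomial is a sum of squared moduli of sesquilinear and bilinear forms:
\[
P_\phi(x,y)=\sum_r |x^*A_r y|^2+\sum_s |x^tB_s y|^2, \qquad x\in\C^m,\ y\in\C^n,
\]
with complex $m\times n$ matrices $A_r,B_s$. The plan is to restrict this identity to real vectors $x\in\R^m$, $y\in\R^n$ and produce real matrices $V_i$ with $P_{\phi_\R}(x,y)=\sum_i (x^tV_iy)^2$. Then Proposition \ref{real1}(b) gives that $\phi_\R$ is a congruence map.

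First I check that the restriction makes sense. Since $\phi$ is self-adjoint and maps real matrices to real matrices, it sends real symmetric matrices to real Hermitian matrices, that is, to $\cS_n$. So $\phi_\R:\cS_m\to\cS_n$ is well defined. For real $x,y$ we also have $xx^*=xx^t$, so $P_\phi(x,y)=y^t\phi_\R(xx^t)y=P_{\phi_\R}(x,y)$.

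Next I handle the key step, which is making the bilinear forms real. For real $x,y$, both $x^*A_ry$ and $x^tB_sy$ equal $x^tCy$ with $C=A_r$ or $C=B_s$. Write $C=C_1+iC_2$ with $C_1,C_2$ real. Then
\[
|x^tCy|^2=(x^tC_1y)^2+(x^tC_2y)^2,
\]
because $x^tC_1y$ and $x^tC_2y$ are real. So every term splits into two squares of real bilinear forms, and $P_{\phi_\R}$ is a sum of squares of real bilinear forms, as required.

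I expect the only point needing care to be the last step: Proposition \ref{real1} is stated in the real setting, with $T:\cS_m\to\cS_n$ determined by its biquadratic form, and its conclusion gives a congruence $T(X)=\sum V_i^tXV_i$ on $\cS_m$. The congruence it produces is $X\mapsto\sum_r (C_{1,r}^tXC_{1,r}+C_{2,r}^tXC_{2,r})$, which has the same Choi polynomial on real vectors as $\phi_\R$. By the Remark on the one-to-one correspondence between maps $\cS_m\to\cS_n$ and real biquadratic forms, this congruence agrees with $\phi_\R$ on $\cS_m$.

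Alternatively, one can bypass polynomials. By the Choi characterization, $\phi=\sum_r A_r^*(\cdot)A_r+\sum_s B_s^*(\cdot)^tB_s$. On a symmetric $X$ we have $X^t=X$, so every term becomes a congruence by a complex matrix. Since $\phi(X)$ is real, $\phi(X)=\mathrm{Re}\,\phi(X)$, and taking real parts gives
\[
\mathrm{Re}(C^*XC)=C_1^tXC_1+C_2^tXC_2 \quad\text{for real symmetric } X.
\]
I would include this as a cross-check.
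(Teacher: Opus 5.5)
Your proposal is correct and follows essentially the same route as the paper: restrict the sum-of-squares representation of $P_\phi$ to real $x,y$, write each coefficient matrix as $C=C_1+iC_2$ so that $|x^tCy|^2=(x^tC_1y)^2+(x^tC_2y)^2$, and conclude with Proposition~\ref{real1}(b). Your extra checks are sound and fill in details the paper leaves implicit, namely that $\phi_\R$ maps $\cS_m$ into $\cS_n$ and the operator-level cross-check via $\mathrm{Re}(C^*XC)=C_1^tXC_1+C_2^tXC_2$.
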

\begin{proof}
For an $m\times n$ matrix $V=(v_{ij}),$ we can write $V = Re(V) + iIm(V),$ where $Re(V) = (Re(v_{kl}))$ the real part and $Im(V) = (Im(v_{kl}))$ the imaginary part (which are matrices with real coefficients). Then for any $x\in \R^m$ and $y\in \R^n$, we have
$$ 
y^*V^*xx^*Vy = |x^*Vy|^2 = |x^tRe(V)y|^2+ |x^t Im(V)y|^2 .
$$	
Hence, $y^*V^*xx^*Vy$ is a sum of squares of real bilinear forms. Similarly, $y^*V^*(xx^*)^tVy$ is also a sum of squares of real bilinear forms.
Hence, the Choi polynomial $P_{\phi_{\R}}(x,y)$ is the sum of squares of real bilinear forms and by Proposition \ref{real1}, we get the conclusion.
\end{proof}
\begin{proposition}
Let $\phi:\ M_2 \longrightarrow M_n$ be a linear map with $\phi (M_2(\R)) \subset M_n(\R).$ If $\phi$ is positive then the restriction $\phi_{\R}:\  \cS_2 \to \cS_n$ is a congruence map. In addition, there exists a completely positive map $\phi_1$ from $M_2$ to $M_n$ such that $\phi = \phi_1$ on $\cS_2$ and $(\phi-\phi_1)(e_{12}) =-(\phi-\phi_1)(e_{21})$ is anti-symmetric. 
\end{proposition}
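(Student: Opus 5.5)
Since $\phi$ is positive, Theorem \ref{Thrm1}(2) gives $P_{\phi}(x,y)=y^*\phi(xx^*)y\ge 0$ for all $x\in\C^2$, $y\in\C^n$. Because $\phi(M_2(\R))\subset M_n(\R)$, the restriction $\phi_{\R}$ maps $\cS_2$ into real matrices. It is positive on real rank-one projections $xx^t$, so it maps positive elements of $\cS_2$ to positive, hence symmetric, real matrices. Thus $\phi_{\R}:\cS_2\to\cS_n$ is a positive linear map, and its Choi polynomial $P_{\phi_{\R}}(x,y)=y^t\phi(xx^t)y$ is a psd real biquadratic form on $\R^2\times\R^n$. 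The Corollary after Proposition \ref{real1} (Calderon's theorem) then gives that $\phi_{\R}$ is a congruence map:
\[
\phi(X)=\sum_{i=1}^k V_i^tXV_i \qquad \text{for all } X\in\cS_2,
\]
with real $2\times n$ matrices $V_i$.

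For the second claim I will set $\phi_1(X):=\sum_i V_i^tXV_i=\sum_i V_i^*XV_i$ for $X\in M_2$, using that the $V_i$ are real. This is a Kraus form, so $\phi_1$ is completely positive (Theorem \ref{Thrm1}(3)), and $\phi_1=\phi$ on $\cS_2$ by construction. Let $\psi=\phi-\phi_1$, so $\psi$ vanishes on $\cS_2$. Then $\psi(e_{11})=\psi(e_{22})=0$ and $\psi(e_{12}+e_{21})=0$, which gives $\psi(e_{12})=-\psi(e_{21})$.

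To get antisymmetry, I use $(e_{12})^t=e_{21}$. Both $\phi$ and $\phi_1$ map real matrices to real matrices, and $\phi$ is self-adjoint since it is positive. So $\psi(e_{21})=\psi(e_{12}^*)=\psi(e_{12})^*=\psi(e_{12})^t$, using that $\psi(e_{12})$ is real. Combining with the previous step, $\psi(e_{12})^t=-\psi(e_{12})$, so $\psi(e_{12})$ is antisymmetric.

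The only delicate point is checking that $\phi_{\R}$ really lands in $\cS_n$ and is positive there, so that Calderon's theorem applies. This follows from the positivity and self-adjointness of $\phi$ together with the realness assumption. Everything else is bookkeeping.
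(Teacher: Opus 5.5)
Your proof is correct and follows the paper's argument essentially step for step: Calder\'on's theorem (via Proposition \ref{real1}) gives real matrices $V_i$ with $\phi_{\R}(X)=\sum_i V_i^tXV_i$ on $\cS_2$, $\phi_1$ is defined by the same $V_i$ on all of $M_2$, and antisymmetry comes from combining $\psi(e_{12}+e_{21})=0$ with $\psi(e_{12}^*)=\psi(e_{12})^*=\psi(e_{12})^t$. You are slightly more careful than the paper in checking that $\phi_{\R}$ maps $\cS_2$ into $\cS_n$ and in giving the correct size $2\times n$ for the $V_i$. One small point: the identity $\psi(e_{12}^*)=\psi(e_{12})^*$ needs $\phi_1$ to be Hermitian-preserving as well as $\phi$, which is immediate from its Kraus form.
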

\begin{proof}
Since $\phi$ is positive, its restriction $\phi_{\R}$ on $M_m(\R)$ is positive. By Proposition \ref{real1}, the Choi polynomial of $\phi_{\R}$ is nonnegative, that is $P_{\phi_{\R}}(x,y) \ge 0$ for all $x,y.$ By \cite[Theorem 1]{Calderon}, $P_{\phi_{\R}}$ is a sum of squares of real bilinear forms. By Proposition \ref{real1}, the restriction $\phi_{\R}$ from $\cS_2$ to $\cS_n$ is a congruence map. That is, there are $m\times 2$ real matrices $V_j$ such that
$$\phi_{\R} (Y) = \sum_{j} V_j^t Y V_j \hspace{1cm} \forall Y =Y^t \in M_2(\R).$$
Define $\phi_1$ from $M_2$ to $M_n$ by 
$$\phi_{1} (X) = \sum_{j} V_j^t X V_j \hspace{1cm} \forall X \in M_2.$$
Then $\phi_1 =\phi$ on $\cS_2.$ Let $ A= (\phi -\phi_1) (e_{12}), $ then 
$0=(\phi -\phi_{1}) (e_{12}+e_{21}) = A + (\phi -\phi_1) (e_{21}).$
Hence $-A = (\phi -\phi_1)(e_{12}^*) = A^* = A^t.$
\end{proof}
\begin{remark}
 Let $\phi_{\varepsilon}:\ M_2 \longrightarrow M_4$ be a linear map whose Choi polynomial 
 $$
 P_{\phi_{\varepsilon}}(x,y) = F_{\varepsilon}(x,y) = (x\otimes y)^* \Pi (x\otimes y) -\varepsilon \|x\otimes y\|^2 \qquad (x\in \C^2, y\in \C^4), \quad   0<\varepsilon \le \delta,
 $$
 where $\Pi, \delta$ and $F_{\varepsilon}(x,y)$ are defined in Example \ref{Exa1}. As in Example \ref{Exa1}, the Choi polynomial is positive semidefinite and not decomposable. By Theorem \ref{Thrm1}, $\phi_{\varepsilon}$ is indecomposable. 
 It is clear that $\phi_{\varepsilon} (M_2(\R)) \subset M_4(\R).$
 \end{remark}

\subsection{Application of optimization to checking positivity}
A Choi polynomial of a self-adjoint linear map is a real-valued  biquadratic form. As the same aregument in Lemma \ref{LemmaEdge}, we get the following corollary.
\begin{corollary}
	Let $\phi :\ M_m \longrightarrow M_n $ be a self-adjoint linear map. Then the following statements are equivalent. 
	\begin{itemize}
		\item[(i)]  $\phi$ is positive.
		\item[(ii)]  For every positive real number $r$,
		$$\min_{||x||=||y||=r} P_{\phi}({x},y) \ge 0. $$
		\item[(iii)] There is a real number $r>0$ such that 
		$$\min_{||x||=||y||=r} P_{\phi}({x},y) \ge 0. $$
	\end{itemize}
\end{corollary}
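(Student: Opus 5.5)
We prove the cycle (i)$\Rightarrow$(ii)$\Rightarrow$(iii)$\Rightarrow$(i). The main tool is Theorem \ref{Thrm1}, together with the bihomogeneity of the Choi polynomial already used in Lemma \ref{LemmaEdge}.

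First, since $\phi$ is self-adjoint, Theorem \ref{Thrm1}(1) tells us that $P_{\phi}$ is real-valued. So the minima in (ii) and (iii) make sense. Moreover, $P_\phi$ is continuous and each sphere $\{\|x\|=\|y\|=r\}$ is compact, so the minima are attained.

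For (i)$\Rightarrow$(ii): by Theorem \ref{Thrm1}(2), positivity of $\phi$ gives $P_{\phi}(x,y)\ge 0$ for all $x,y$. In particular this holds on every sphere $\|x\|=\|y\|=r$, so each minimum is $\ge 0$. The step (ii)$\Rightarrow$(iii) is trivial: take any $r>0$, for instance $r=1$.

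For (iii)$\Rightarrow$(i), fix $r>0$ with $\min_{\|x\|=\|y\|=r}P_{\phi}(x,y)\ge 0$.
\begin{itemize}
\item Take arbitrary $x\in\C^m$ and $y\in\C^n$. If $x=0$ or $y=0$, then $P_\phi(x,y)=0$, since $P_\phi$ is biquadratic.
\item Otherwise, set $x_1=r x/\|x\|$ and $y_1=r y/\|y\|$, so that $\|x_1\|=\|y_1\|=r$.
\item From $P_{\phi}(tx,sy)=|t|^2|s|^2P_{\phi}(x,y)$ we get
\[
P_{\phi}(x,y)=\frac{\|x\|^2\|y\|^2}{r^4}\,P_{\phi}(x_1,y_1)\ge 0 .
\]
\end{itemize}
Hence $P_\phi\ge 0$ everywhere, and Theorem \ref{Thrm1}(2) gives that $\phi$ is positive.

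No step here is a genuine obstacle. The one point needing care is the scaling identity, which is immediate from $P_\phi(x,y)=y^*\phi(xx^*)y$. The proof also needs the self-adjointness hypothesis, which guarantees the values are real so that the inequalities are meaningful.
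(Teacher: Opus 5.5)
Your proposal is correct and follows essentially the paper's route: Theorem~\ref{Thrm1}(2) for (i)$\Rightarrow$(ii), triviality for (ii)$\Rightarrow$(iii), and the bihomogeneous rescaling $P_{\phi}(x,y)=\frac{\|x\|^2\|y\|^2}{r^4}P_{\phi}(x_1,y_1)$ onto the sphere of radius $r$ for (iii)$\Rightarrow$(i). Your version is in fact slightly tidier. The paper passes through the intermediate inequality $P_{\phi}(x,y)\ge P_{\phi}(x_0,y_0)$, which fails when the minimum is positive and $\|x\|\|y\|<r^2$; only $\ge 0$ is needed, as you write. The paper also does not treat the case $x=0$ or $y=0$ separately, which you do.
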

\begin{proof} 	(i)$\Rightarrow$(ii).  By Theorem \ref{Thrm1}(2), $\phi$ is positive if and only if $P_{\phi}(x,y)$ is nonnegative for all $x,y.$ Hence,
	$$\min_{||x||=r, ||y||=r} P_{\phi}({x},y) \ge 0. $$
	(iii)$\Rightarrow$(i)
	Suppose that there is a positive real number $r$ such that $$\min_{||x||=r, ||y||=r} P_{\phi}({x},y) \ge 0. $$ 
	There is a point $(x_0,y_0) \in \C^m\times \C^n$ such that $\| x_0\| =r=\| y_0\|$ and 
	$$\min_{\|x\|=r, \|y\|=r} P_{\phi}(x,y) = P_{\phi}(x_0,y_0)\ge 0. $$
	Since $P_{\phi}(x,y)$ is homogeneous, we have
	$$P_{\phi}(x,y) = \frac{||x||^2 ||y||^2}{r^4} P_{\phi}(\frac{r}{||x||} x, \frac{r}{||y||} y) \ge P_{\phi}(x_0,y_0) \ge 0.$$ 
	The implication (ii) $\Rightarrow$ (iii) is immediate.
\end{proof}
The characterization of positive linear maps are also related to the problem of determining whether a polynomial is bounded below or not. 
\begin{corollary}\label{lowerbound}
	Let $\phi :\ M_m \longrightarrow M_n $ be a self-adjoint linear map. Then $\phi$ is positive if and only if $P_{\phi}$ is bounded below, i.e, 
	$$\inf\{ P_{\phi}(x,y)  \ | \ x\in \C^m, y\in \C^n \} > -\infty. $$
\end{corollary}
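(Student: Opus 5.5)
The plan is to use homogeneity of the Choi polynomial together with Theorem \ref{Thrm1}(2), which says that $\phi$ is positive if and only if $P_{\phi}(x,y)\ge 0$ for all $x,y$. Since $\phi$ is self-adjoint, Theorem \ref{Thrm1}(1) guarantees that $P_{\phi}$ is real-valued. Hence the infimum in the statement is a well-defined element of $[-\infty,\infty)$.

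\emph{Necessity.} If $\phi$ is positive, then $P_{\phi}\ge 0$ everywhere by Theorem \ref{Thrm1}(2). Moreover $P_{\phi}(0,0)=0$, so the infimum equals $0>-\infty$. Thus $P_{\phi}$ is bounded below.

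\emph{Sufficiency.} I argue by contraposition. Suppose $\phi$ is not positive. By Theorem \ref{Thrm1}(2) there are $x_0\in\C^m$ and $y_0\in\C^n$ with $P_{\phi}(x_0,y_0)=c<0$. The key identity is the scaling law already used in Lemma \ref{LemmaEdge}: for real $t>0$,
\[
P_{\phi}(tx_0,y_0)=y_0^*\phi(t^2x_0x_0^*)y_0=t^2P_{\phi}(x_0,y_0)=t^2c .
\]
Letting $t\to\infty$ gives $P_{\phi}(tx_0,y_0)\to-\infty$, so the infimum is $-\infty$. This contradicts boundedness below, hence $\phi$ is positive.

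There is no real obstacle in this argument. The only point needing care is that the infimum makes sense, which is exactly where the self-adjointness hypothesis enters through Theorem \ref{Thrm1}(1). The core of the proof is simply that a homogeneous form of positive degree that is bounded below must be nonnegative.
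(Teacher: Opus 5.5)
Your proof is correct and follows the paper's argument. Necessity comes from Theorem \ref{Thrm1}(2), and sufficiency from picking a point with $P_{\phi}(x_0,y_0)<0$ and scaling to drive $P_{\phi}$ to $-\infty$; the only cosmetic difference is that the paper scales both variables, getting $P_{\phi}(tx_0,ty_0)=t^4P_{\phi}(x_0,y_0)$, whereas you scale only $x_0$ and get a factor $t^2$, which works equally well.
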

Note that if $P_{\phi} $ is Hermitian symmetric and if we replace $x_i= a_i+ b_i\sqrt{-1},$ $y_j=c_j + d_j \sqrt{-1},$ where $a,b \in \R^m$ and $c,d \in \R^n$ then the obtained polynomial $F(a,b,c,d) = P_{\phi}(a+b\sqrt{-1}, c+d \sqrt{-1})$ is a real polynomial in $a,b,c,d$ (with real coefficients). \textit{There are some criteria on lower bounds} of real polynomials, see \cite{Marshall} and the references therein. 
\begin{proof}[Proof of Corollary \ref{lowerbound}]
	If $\phi$ is positive, then $P_{\phi}(x,y) \ge 0$ for all $x, y$ (by Theorem \ref{Thrm1}). Thus, $P_{\phi}$ is bounded below. Conversely, suppose that $P_{\phi}$ is bounded below, but assume for the sake of contradiction that there exists a point $(x_0,y_0) \in \C^m\times \C^n$ such that $P_{\phi}(x_0,y_0) < 0$. Consider the curve defined by $x(t) = tx_0$ and $y(t) = ty_0$ for $t \in \R$. Then, we have
	$$P_{\phi}(tx_0,ty_0) = t^4 P_{\phi}(x_0,y_0) \to -\infty \quad \text{as} \quad t \to \infty.$$
	This contradicts the assumption that $P_{\phi}$ is bounded below.
\end{proof}

\begin{proposition}\label{producteigenvector}
	Let $\phi :\ M_m \longrightarrow M_n $ be a linear map. Suppose that the Choi matrix $C_{\phi}$ has a negative eigenvalue $\lambda<0$ with an eigenvector $\omega.$ 
	\begin{itemize}
		\item[(i)] If the solution set (the algebraic set)
		$$
		S:= \{(x,y) \in \C^m\times \C^n \ | \ [x_1y_1 \cdots  x_1y_n \ x_2y_1 \cdots x_my_1 \cdots x_my_n] = \omega^t  \}
		$$
		is non-empty, then $\phi$ is not positive.
		\item[(ii)] If $\omega$ is a product vector, then $\phi$ is not positive.
	\end{itemize}
\end{proposition}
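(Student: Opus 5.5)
The plan is to use the identity recorded just before Theorem \ref{Thrm1}, $P_{\phi}(x,y)=\langle C_\phi(\overline{x}\otimes y),\overline{x}\otimes y\rangle$, and to produce a pair $(x,y)$ with $P_\phi(x,y)<0$. By Theorem \ref{Thrm1}(2), that shows $\phi$ is not positive.

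Part (ii) follows from part (i), so I would prove (i) first. Take $(x,y)\in S$. The defining equations say exactly that the coordinate vector of $x\otimes y$, in the lexicographic ordering fixed in the definition of $z(x,y)$, equals $\omega$; that is, $x\otimes y=\omega$. Now set $u=\overline{x}$, so that $\overline{u}\otimes y=x\otimes y=\omega$. Then
\[
P_\phi(u,y)=\langle C_\phi\,\omega,\omega\rangle=\lambda\|\omega\|^2<0,
\]
because $\omega\neq 0$ is an eigenvector for $\lambda<0$. Hence $P_\phi$ takes a negative value, and $\phi$ is not positive.

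One point needs care. If $\phi$ is not self-adjoint, $C_\phi$ need not be Hermitian and $\lambda$ could a priori be complex. Since the statement supposes $\lambda<0$ is real, the computation above is still valid for any eigenpair: $\langle C_\phi\omega,\omega\rangle=\lambda\langle\omega,\omega\rangle$ is a negative real number. So no self-adjointness hypothesis is required.

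For (ii), suppose $\omega=a\otimes b$ is a product vector. Then $(a,b)\in S$, so $S$ is nonempty and (i) applies. Equivalently, and more directly, take $x=\overline{a}$ and $y=b$; then $P_\phi(\overline{a},b)=\lambda\|a\|^2\|b\|^2<0$. The only real obstacle anywhere in the argument is bookkeeping: matching the index ordering in $S$ with the Kronecker ordering of $x\otimes y$, and keeping track of the conjugation on $x$ in the Choi–Jamiołkowski formula. Neither involves a genuine difficulty.
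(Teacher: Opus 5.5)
Your proposal is correct and matches the paper's proof essentially step for step. For $(x_0,y_0)\in S$ you evaluate the Choi polynomial at $(\overline{x_0},y_0)$ to get $\omega^*C_\phi\omega=\lambda\|\omega\|^2<0$, and you reduce (ii) to (i) by noting that $\omega=a\otimes b$ gives $(a,b)\in S$. Your remark that only the trivial direction of Theorem \ref{Thrm1}(2) is used, so no self-adjointness of $\phi$ or Hermiticity of $C_\phi$ is needed, is accurate and makes explicit something the paper leaves implicit.
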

\begin{proof}
	(i) Suppose $(x_0,y_0) \in S.$ That is, $x_0\otimes y_0= \omega.$ Then
	$$ 
	P_{\phi}(\overline{x}_0,y_0) = (x_0\otimes y_0)^* C_{\phi} x_0\otimes y_0 = \lambda ||x_0\otimes y_0||^2 <0. 
	$$
	Now, by Theorem \ref{Thrm1}, $\phi$ is not positive.\\
	(ii) If $\omega$ is a product vector, then $\omega$ belongs to $S$ mentioned in (i).
\end{proof}

\subsection{Application of positive linear operators to sum of squares}
 
It is clear that a square of an element in $|\textrm{SLF}(m,n)| \cup |\textrm{BLF}(m,n)|$ is a Hermitian biquadratic form. On the other hand, if $\phi$ is self-adjoint, the Choi polynomial $P_{\phi}$ is Hermitian symmetric. Denote by $HBF(m,n)$ the set of all Hermitian symmetric biquadratic forms in $x\in \C^m, y\in \C^n$ and $HBF(m,n)_+$ the subset of all $F(x,y) \in HBF(m,n)$ satisfying $F(x,y) \ge 0$ for all $x,y.$ It is clear that 
\begin{equation}\label{sumsquare}
	{\sum}^2 |\textrm{SLF}(m,n)| + {\sum}^2|\textrm{BLF}(m,n)| \subset HBF(m,n)_+. 
\end{equation}
An interesting problem is to classify the pairs $(m, n)$ for which the reverse inclusion of (\ref{sumsquare}) holds. It is well known that in the real case, this reverse inclusion holds if either $n=2$ or $m=2$, but fails if both $m$ and $n$ are at least 3 (see \cite{Choi75a}). In the complex case, we can apply well-known results from the theory of decomposable maps.
\begin{corollary} The equality
	\begin{equation*}
		{\sum}^2 |\textrm{SLF}(m,n)| + {\sum}^2|\textrm{BLF}(m,n)| = HBF(m,n)_+
	\end{equation*}
	holds if and only if $n+m\le 5.$
\end{corollary}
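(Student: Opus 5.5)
The plan is to translate the polynomial statement into a statement about maps using the Choi polynomial correspondence, and then to invoke the classical Størmer--Woronowicz theorem together with the known indecomposable examples.

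First we identify the two sides. By Proposition \ref{11coresp}, every biquadratic form in $x\in\C^m$, $y\in\C^n$ is $P_\phi$ for a unique $\phi\in B(M_m,M_n)$. By Theorem \ref{Thrm1}(1), $P_\phi$ lies in $HBF(m,n)$ exactly when $\phi$ is self-adjoint. By Theorem \ref{Thrm1}(2), $P_\phi\in HBF(m,n)_+$ exactly when $\phi$ is positive. By Theorem \ref{Thrm1}(5) (equivalently Proposition \ref{SOS}(iii)), $P_\phi$ lies in ${\sum}^2|\mathrm{SLF}(m,n)|+{\sum}^2|\mathrm{BLF}(m,n)|$ exactly when $\phi$ is decomposable. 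Hence the asserted equality is equivalent to the statement that every positive map $M_m\to M_n$ is decomposable.

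Next we treat the cases $m+n\le 5$. If $m=1$ or $n=1$, every positive map is completely positive, which is trivial and can be checked directly from the Gram matrix. For $(m,n)=(2,2)$ we invoke Størmer \cite{Stormer63}. For $(m,n)\in\{(2,3),(3,2)\}$ we invoke Woronowicz's theorem (cf.\ \cite{Horodecki1997, Keybook}). In all these cases every positive map is decomposable, so the reverse inclusion of (\ref{sumsquare}) holds.

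For the converse we need an indecomposable positive map whenever $m+n\ge 6$ with $m,n\ge 2$, i.e.\ whenever $(m,n)$ is not one of the pairs above. The base cases are $(2,4)$, $(4,2)$ and $(3,3)$.

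\begin{itemize}
\item For $(3,3)$, take Example \ref{Exa1}. By Corollary \ref{IndecomBQF1}, $F_\varepsilon\in HBF(3,3)_+$ is indecomposable.
\item For $(2,4)$, take a PPT entangled state in $\C^2\otimes\C^4$ (Horodecki's state) and apply Corollary \ref{EdgePPTEnt} to produce an indecomposable psd form.
\item For $(4,2)$, swap the roles of $x$ and $y$. The map $F(x,y)\mapsto F(y,x)$ preserves positivity and sends bilinear forms to bilinear forms and sesquilinear forms to (conjugates of) sesquilinear forms, whose moduli are unchanged.
\end{itemize}

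To pass to larger sizes we embed. If $F(x,y)$ is psd and indecomposable on $\C^{m}\times\C^{n}$, view it as a form on $\C^{m'}\times\C^{n'}$ with $m'\ge m$, $n'\ge n$ that depends only on the first coordinates. It stays psd. If it were decomposable, we would restrict the variables to $x=(x_1,\dots,x_m,0,\dots)$ and $y$ likewise. Each bilinear or sesquilinear summand restricts to a form of the same type, which gives a decomposition on the smaller space, a contradiction. Every $(m,n)$ with $m,n\ge2$ and $m+n\ge6$ dominates $(2,4)$, $(4,2)$ or $(3,3)$, so this covers all remaining cases.

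The main obstacle is the $(2,4)$ base case. Example \ref{Exa1} only supplies $3\times3$, and the remark after Proposition \ref{Decom_Cong} is dimensionally inconsistent, so for $2\times4$ I would rely on the existence of a PPT entangled edge state. Failing that, I would cite Woronowicz's explicit indecomposable map $M_2\to M_4$ and transfer it via Theorem \ref{Thrm1}.
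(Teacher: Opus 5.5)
Your proof is correct for $m,n\ge 2$. For the ``only if'' half it takes a genuinely different route from the paper.

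The paper's argument is one line. After the same translation through Theorem \ref{Thrm1} and Proposition \ref{11coresp} that you carry out, it quotes the St\o rmer--Woronowicz classification \cite{Woronowicz76} for both directions. You quote it only for the ``if'' half and build the counterexamples from the paper's own Section~2 machinery:
\begin{itemize}
\item the Tiles-UPB projection of Example \ref{Exa1} with Corollary \ref{IndecomBQF1} for $(3,3)$;
\item Corollary \ref{EdgePPTEnt} for $(2,4)$;
\item the swap $x\leftrightarrow y$ for $(4,2)$, which is correct since $|y^{*}Bx|=|x^{*}B^{*}y|$ and $y^{t}Ax=x^{t}A^{t}y$;
\item zero-padding followed by restriction to a coordinate subspace to reach every larger $(m,n)$.
\end{itemize}
This makes the hard half a consequence of results proved in the paper. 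It also makes explicit the monotonicity in $(m,n)$ that the paper's citation uses silently.

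The price is a longer argument that still rests on two external inputs, while the ``if'' half is no more self-contained than before. The first input is that $\ker\Pi$ in Example \ref{Exa1} contains no product vectors (the UPB property), which the paper only sketches. The second is the existence of an \emph{edge} PPT entangled state in $\C^{2}\otimes\C^{4}$. Corollary \ref{EdgePPTEnt} needs $\rho$ to be edge, not merely PPT entangled, since otherwise $\delta=0$. Horodecki's states qualify, because his entanglement proof verifies exactly condition (B) of Theorem \ref{edgeW}. More generally, any PPT entangled state splits as a separable part plus an edge PPT entangled part \cite{Lewenstein2000}. Also, Woronowicz's theorem should be cited as \cite{Woronowicz76}, not \cite{Horodecki1997}.

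One sentence of yours is false, and it reflects a defect in the statement itself rather than something you could repair. Your observation that every positive map is completely positive when $m=1$ or $n=1$ holds for \emph{all} $n$. For $m=1$ a psd form is $|x|^{2}y^{*}Ay$ with $A\succeq 0$, i.e.\ a sum of terms $\lambda\,|\bar x\,(v^{*}y)|^{2}$ with $\lambda\ge 0$. So equality holds for, e.g., $(m,n)=(1,5)$ although $m+n=6$. Hence your ``i.e.\ whenever $(m,n)$ is not one of the pairs above'' is wrong: $(1,n)$ and $(n,1)$ with $n\ge 5$ fall in neither of your lists.

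The paper's one-line proof has the same gap, because the ``well-known fact'' it cites is only true for $m,n\ge 2$. The correct statement is: for $m,n\ge 2$, equality holds iff $m+n\le 5$ (equivalently $mn\le 6$); without that restriction, equality holds iff $\min(m,n)=1$ or $m+n\le 5$. State the hypothesis $m,n\ge 2$ explicitly instead of hiding it in the ``i.e.''.
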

\begin{proof}
	This follows from Theorem \ref{Thrm1} and the well-known fact that every positive linear map $\phi: \ M_m \to M_n$ is decomposable if and only if $m+n \le 5$ (see \cite{Woronowicz76}).
\end{proof}
\textit{We are interested in characterization of nonnegative biquadratic forms which do (or do not) belong to $\sum^2 |\textrm{SLF}(m,n)| + \sum^2|\textrm{BLF}(m,n)| .$} However, by Theorem \ref{Thrm1}, a real-valued biquadratic form $p$ is decomposable if and only if its corresponding self-adjoint linear map $\phi$ is decomposable, where $P_{\phi} =p.$ 
In Section 2, we study some classes of decomposable/indecomposable biquadratic forms, in particular the case where the Gram matrix of the biquadratic forms can be written as $W-\varepsilon I,$ where $W$ is minimal which is discussed before Lemma \ref{LemmaEdge}. The problem is still open in general .



\section{Examples}
In this section, we present several classes of examples (either new or extending previously well-known ones) of indecomposable linear maps. To achieve this, by Theorem \ref{Thrm1}, we construct indecomposable biquadratic forms $p(x,y)$. Then, the corresponding indecomposable map $\phi$ is determined as follows: the $(k, l)$-entry of $\phi(e_{ij})$ is the coefficient $p_{ijkl}$ of the monomial $x_i\bar{x}_jy_l\bar{y}_k$ in the polynomial $p(x,y)$ (by Proposition \ref{11coresp}). For example, as in   Example \ref{Exa1}, the polynomial  $F_{\varepsilon}$ is indecomposable, so by Theorem \ref{Thrm1}, its corresponding map $\Phi_{\varepsilon} : \ M_3 \to M_3$ is indecomposable for every $0<\varepsilon \le \delta \cong 0.0284.$

\subsection{Indecomposable maps from a given edge PPT entangled state}

We now illustrate the main results of the previous sections by means of a concrete and classical example, namely the $2\otimes 4$ Horodecki family of PPT entangled states. The point of this example is that Theorem \ref{edgeW} and Corollary \ref{EdgePPTEnt} provide a systematic way to construct positive semidefinite but indecomposable biquadratic forms from an edge PPT entangled state, while Theorem \ref{Thrm1} then converts such forms into indecomposable linear maps. Thus the example below serves both as an application of the abstract theory and as an explicit model for the general mechanism. See Horodecki~\cite{Horodecki1997}.
We consider the PPT entangled state $\rho_a  \in M_2(\mathbb C)\otimes M_4(\mathbb C)\cong M_8(\mathbb C)$ in the case of $a=1/2$, as studied in \cite{Horodecki1997}, as follows:
\[
\rho=
\begin{pmatrix}
\frac19 & 0 & 0 & 0 & 0 & \frac19 & 0 & 0 \\
0 & \frac19 & 0 & 0 & 0 & 0 & \frac19 & 0 \\
0 & 0 & \frac19 & 0 & 0 & 0 & 0 & \frac19 \\
0 & 0 & 0 & \frac19 & 0 & 0 & 0 & 0 \\
0 & 0 & 0 & 0 & \frac16 & 0 & 0 & \frac{\sqrt3}{18} \\
\frac19 & 0 & 0 & 0 & 0 & \frac19 & 0 & 0 \\
0 & \frac19 & 0 & 0 & 0 & 0 & \frac19 & 0 \\
0 & 0 & \frac19 & 0 & \frac{\sqrt3}{18} & 0 & 0 & \frac16
\end{pmatrix}.
\]
\begin{proposition}
Let \(P\) and \(Q\) be the orthogonal projections onto \(\ker\rho\) and \(\ker\rho^\Gamma\), respectively, and define
\[
W:=P+Q^\Gamma.
\]
Then the following assertions hold.

\begin{enumerate}
\item[\rm (i)] \(\rho\) is an edge PPT entangled state.
\item[\rm (ii)] The quantity
\[
\delta:=\min_{\|x\|=\|y\|=1}(x\otimes y)^*W(x\otimes y)
\]
is strictly positive.
\item[\rm (iii)] For every \(0<\varepsilon\le \delta\), the biquadratic form
\[
F_\varepsilon(x,y):=(x\otimes y)^*W(x\otimes y)-\varepsilon\|x\otimes y\|^2,
\qquad x\in\mathbb C^2,\ y\in\mathbb C^4,
\]
is positive semidefinite but not decomposable.
\item[\rm (iv)] Consequently, if \(\Phi_\varepsilon:M_2(\mathbb C)\to M_4(\mathbb C)\) is the linear map whose Choi polynomial is
$P_{\Phi_\varepsilon}=F_\varepsilon,$
then \(\Phi_\varepsilon\) is indecomposable.
\end{enumerate}
\end{proposition}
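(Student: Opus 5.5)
The plan is to verify the structural facts about $\rho$ directly and then feed them into Theorem \ref{edgeW}, Corollary \ref{EdgePPTEnt} and Theorem \ref{Thrm1}, in that order.

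For (i), we first record that $\rho\succeq 0$ and $\rho^\Gamma\succeq 0$. This is the known PPT property of the Horodecki family for $a=1/2$. It can also be checked directly: $\rho$ is block diagonal up to permutation into $2\times 2$ blocks, each with nonnegative determinant. Next we compute the kernels explicitly. Writing $\{e_1,e_2\}\otimes\{f_1,\dots,f_4\}$, the kernel of $\rho$ is spanned by the three vectors $e_1\otimes f_k-e_2\otimes f_{k+1}$ for $k=1,2,3$, suitably corrected in the last block. Together with the $\sqrt3$ entries, this gives a $3$-dimensional $\ker\rho$, since $\operatorname{rank}\rho=5$. A similar small computation gives $\ker\rho^\Gamma$.

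To prove the edge property we use the kernel-orthogonality condition (C) of Theorem \ref{edgeW}. Suppose $x\otimes y$, with $x=(x_1,x_2)$, is orthogonal to $\ker\rho$ and $x\otimes\bar y$ is orthogonal to $\ker\rho^\Gamma$. These conditions give linear relations of the form $\bar x_1 \bar y_k$ proportional to $\bar x_2\bar y_{k+1}$, and the analogous relations with $y$ replaced by $\bar y$.
\begin{itemize}
\item If $x_1=0$ or $x_2=0$, these relations force $y=0$.
\item Otherwise, setting $t=x_2/x_1$, they force $y$ to be a geometric-type sequence in both $t$ and $\bar t$.
\item The extra $\sqrt3$-coupling in the last block is incompatible with both simultaneously unless $y=0$.
\end{itemize}
This case split is the main obstacle, and I expect it to be the tedious part. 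Entanglement then follows because an edge PPT state that is not separable is entangled. Alternatively, this is already known from \cite{Horodecki1997} via the range criterion, which is exactly (B).

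For (ii), take $W=P+Q^\Gamma$ with $P,Q\succeq 0$. By Lemma \ref{LemmaEdge}, minimality of $W$ is equivalent to $\ker P\cap\ker Q$ having no nonzero product vectors in the appropriate sense. Here
\[
(x\otimes y)^*W(x\otimes y)=\|P(x\otimes y)\|^2+\|Q(x\otimes \bar y)\|^2 .
\]
This vanishes exactly when $x\otimes y\perp\ker\rho$ and $x\otimes\bar y\perp\ker\rho^\Gamma$, which is excluded by (C). Compactness of the product of unit spheres then gives $\delta>0$.

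For (iii), positive semidefiniteness of $F_\varepsilon$ for $\varepsilon\le\delta$ follows as in Theorem \ref{Theorem1}(i). For indecomposability, we use the witness argument of Corollary \ref{EdgePPTEnt}. Since $P\rho=0$ and $Q\rho^\Gamma=0$,
\[
\mathrm{Tr}((W-\varepsilon I)\rho)=\mathrm{Tr}(P\rho)+\mathrm{Tr}(Q\rho^\Gamma)-\varepsilon\,\mathrm{Tr}\rho=-\varepsilon<0 .
\]
Because $\rho\in\mathcal D^*$ by Lemma \ref{dualD}, we get $W-\varepsilon I\notin\mathcal D(2,4)$. Proposition \ref{SOS}(iii) and Theorem \ref{Theorem1}(ii) then show that $F_\varepsilon$ is not decomposable.

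Finally, (iv) is immediate from Proposition \ref{11coresp} and Theorem \ref{Thrm1}(2),(5). Since $F_\varepsilon$ is real-valued and psd, $\Phi_\varepsilon$ is a positive self-adjoint map. Since $F_\varepsilon$ is not a sum of squared moduli of bilinear and sesquilinear forms, $\Phi_\varepsilon$ is not decomposable.
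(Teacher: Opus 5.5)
Your proof is correct and follows the same chain as the paper: condition (C) (equivalently (B)) of Theorem \ref{edgeW} gives the edge property, the witness argument of Corollary \ref{EdgePPTEnt} gives (ii)--(iii), and Theorem \ref{Thrm1} gives (iv). The real difference is where the key input comes from. The paper just quotes Horodecki's range-criterion computation, whereas you verify (C) directly, and that verification does close.

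Indeed, $\ker\rho$ is spanned by $e_1\otimes f_1-e_2\otimes f_2$, $e_1\otimes f_2-e_2\otimes f_3$ and $e_1\otimes f_3-e_2\otimes f_4+\tfrac{1}{\sqrt3}\,e_2\otimes f_1$. Likewise $\ker\rho^\Gamma$ is spanned by $e_1\otimes f_2-e_2\otimes f_1+\tfrac{1}{\sqrt3}\,e_2\otimes f_4$, $e_1\otimes f_3-e_2\otimes f_2$ and $e_1\otimes f_4-e_2\otimes f_3$. For $x_1x_2\neq 0$ and $t=x_2/x_1$, $\ker\rho$ gives $y_1=ty_2$ and $y_2=ty_3$, while $\ker\rho^\Gamma$ gives $y_3=\bar t y_2$ and $y_4=\bar t y_3$. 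Hence $(1-|t|^2)y_2=0$, and when $|t|=1$ the third $\ker\rho$ relation reduces to $t^2y_2/\sqrt3=0$; either way $y=0$.

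Your version is self-contained. Your identity $(x\otimes y)^*W(x\otimes y)=\|P(x\otimes y)\|^2+\|Q(x\otimes\bar y)\|^2$ also makes explicit why $\delta>0$, which Corollary \ref{EdgePPTEnt} only asserts. Your value $-\varepsilon$ for the trace is the right one; the paper's $-8\varepsilon$ should be $-\varepsilon\,\mathrm{Tr}\rho=-\varepsilon$. The citation route is shorter but outsources exactly this computation.

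Three small repairs are needed.
\begin{enumerate}
\item $\rho$ is not a direct sum of $2\times 2$ blocks. The vectors $e_1\otimes f_3$, $e_2\otimes f_1$, $e_2\otimes f_4$ span a singular positive semidefinite $3\times 3$ block, and $e_1\otimes f_2$, $e_2\otimes f_1$, $e_2\otimes f_4$ do so for $\rho^\Gamma$. Positivity there should be checked via all principal minors.
\item The $\ker\rho^\Gamma$ relations are $x_1\bar y_{k+1}=x_2\bar y_k$. They shift in the opposite direction rather than being ``$y$ replaced by $\bar y$''. This is precisely why the two ratios are $1/t$ and $\bar t$, and why compatibility forces $|t|=1$.
\item ``An edge PPT state that is not separable is entangled'' is circular. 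Say instead that a separable $\rho=\sum_i p_i(x_i\otimes y_i)(x_i\otimes y_i)^*$ has $x_i\otimes y_i\in\mathrm{Ran}\,\rho$ and $x_i\otimes\bar y_i\in\mathrm{Ran}\,\rho^\Gamma$, contradicting (B). Alternatively, note that the block-positive $W-\varepsilon I$ from (iii) satisfies $\mathrm{Tr}((W-\varepsilon I)\rho)<0$.
\end{enumerate}
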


\begin{proof}
It follows from \cite{Horodecki1997} that there exists no product vector $ x \otimes y\in \mathrm{Ran}(\rho_a)$ such that $x \otimes \overline{y} \in \mathrm{Ran}(\rho_a^\Gamma).$  
Hence
Theorem \ref{edgeW} applies, and therefore Corollary \ref{EdgePPTEnt} yields \(\rho\) is an edge PPT entangled state, 
\[
\delta=\min_{\|x\|=\|y\|=1}(x\otimes y)^*W(x\otimes y)>0,
\]
and shows that for every $0<\varepsilon\le\delta$ the form
\[
F_\varepsilon(x,y)=(x\otimes y)^*W(x\otimes y)-\varepsilon\|x\otimes y\|^2
\]
is positive semidefinite but not decomposable.

Now let $\Phi_\varepsilon$ be the unique linear map associated with $F_\varepsilon$
through the Choi-polynomial correspondence. Since
\[
P_{\Phi_\varepsilon}(x,y)=F_\varepsilon(x,y)\ge 0
\enskip  \text{for all }x\in\mathbb C^2,\ y\in\mathbb C^4,
\]
Theorem \ref{Thrm1} (2) implies that $\Phi_\varepsilon$ is positive. Since $F_\varepsilon$
is not decomposable, Theorem \ref{Thrm1} (5) implies that $\Phi_\varepsilon$ is not
decomposable. Therefore $\Phi_\varepsilon$ is an indecomposable positive map.
\end{proof}

\subsection{Indecomposable maps $\Phi_{a;m,n,\varepsilon}$}

Let $m, n \in \mathbb{N}$ with $n \ge m$, and set $r := n - m$. For
$$\varepsilon = (\varepsilon_0, \dots, \varepsilon_r), \qquad 0 < \varepsilon_\alpha \le 1,$$
define
\[
\Phi_{a;m,n,\varepsilon}: M_m(\mathbb{C}) \to M_n(\mathbb{C}), \qquad
\Phi_{a;m,n,\varepsilon}(X) = a \Tr(X)I_n - \sum_{\alpha=0}^{r} \varepsilon_\alpha V_\alpha X V_\alpha^*,
\]
where the isometries $V_\alpha: \mathbb{C}^m \to \mathbb{C}^n$ are given by
\[
V_\alpha e_p = f_{p+\alpha}, \qquad p = 1, \dots, m, \quad \alpha = 0, \dots, r,
\]
and $\{e_p\}$ (resp. $\{f_q\}$) is the standard basis for $\mathbb{C}^m$ (resp. $\mathbb{C}^n$).

If $\varepsilon_{\alpha} =1$ for every $\alpha,$ then the map $\Phi_{a;m,n,\bold{1}}$ is the same as $\Phi_{a;m,n}$ in \cite{Mlynik2025}. As we can see below, using sum of squares arguments and Theorem \ref{Thrm1}, we can reprove some main results in \cite{Mlynik2025} for the weighted maps $\Phi_{a;m,n,\varepsilon}.$ In some cases, we even give necessary and sufficient condition when such a map is decomposable.
 
The Choi polynomial of $\Phi_{a;m,n,\varepsilon}$ is determined as
\[
P_{\Phi_{a;m,n,\varepsilon}}(x,y)
= y^*\Phi_{a;m,n,\varepsilon} (xx^*)y = 
a\|x\|^2\|y\|^2
-
\sum_{\alpha=0}^{r}\varepsilon_\alpha
\left|\sum_{p=1}^m x_p\overline{y_{p+\alpha}}\right|^2, \quad \forall x\in \C^m, y\in \C^n.
\]
\begin{proposition}\label{Phi_eps} \rm
	Let $\Phi_{a;m,n,\varepsilon}:M_m(\mathbb C)\to M_n(\mathbb C) $ as above. Its Choi polynomial is determined by
		\[
	\begin{aligned}
		P_{\Phi_{a;m,n,\varepsilon}}(x,y)
		&=
		\sum_{\alpha=0}^{r}\varepsilon_\alpha
		\sum_{1\le p<q\le m}
		\left|x_p\overline{y_{q+\alpha}}-x_q\overline{y_{p+\alpha}}\right|^2
		+\sum_{p=1}^m\sum_{j=1}^n (a-s_j)|x_p y_j|^2,
	\end{aligned}
	\]
	where
	\[
	s_j:=\sum_{\alpha=\max(0,j-m)}^{\min(r,j-1)}\varepsilon_\alpha,
	\qquad j=1,\dots,n.
	\]
	In particular, if
	\[
	a\ge \max_{1\le j\le n}s_j,
	\]
	then $P_{\Phi_{a;m,n,\varepsilon}}$ is decomposable; hence $\Phi_{a;m,n,\varepsilon}$ is decomposable.
\end{proposition}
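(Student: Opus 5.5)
The plan is to expand each squared modulus in the Choi polynomial with a Lagrange-type identity and then regroup the diagonal terms. For a fixed $\alpha$, set $u_p := x_p$ and $v_p := y_{p+\alpha}$ for $p=1,\dots,m$. Then $\sum_p x_p\overline{y_{p+\alpha}} = \langle u, v\rangle$, so the Lagrange identity gives
\[
\Bigl|\sum_{p=1}^m u_p\overline{v_p}\Bigr|^2=\Bigl(\sum_{p}|u_p|^2\Bigr)\Bigl(\sum_{q}|v_q|^2\Bigr)-\sum_{1\le p<q\le m}|u_p\overline{v_q}-u_q\overline{v_p}|^2 .
\]
Each $|u_p\overline{v_q}-u_q\overline{v_p}|^2$ expands into the diagonal terms $|u_pv_q|^2+|u_qv_p|^2$ plus the cross terms. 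The cross terms match those coming from $|\sum u_p\overline{v_p}|^2$ with the opposite sign. I would check this identity directly by expanding both sides; it is routine.

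Substituting into $P_{\Phi_{a;m,n,\varepsilon}}$ gives
\[
P(x,y)=a\|x\|^2\|y\|^2-\sum_{\alpha}\varepsilon_\alpha\Bigl(\sum_{p=1}^m|x_p|^2\Bigr)\Bigl(\sum_{q=1}^m|y_{q+\alpha}|^2\Bigr)+\sum_\alpha\varepsilon_\alpha\sum_{p<q}|x_p\overline{y_{q+\alpha}}-x_q\overline{y_{p+\alpha}}|^2 .
\]
Next I regroup the middle term by the index $j=q+\alpha$. For fixed $j$, the admissible values of $\alpha$ are those with $0\le\alpha\le r$ and $1\le j-\alpha\le m$, that is, $\max(0,j-m)\le\alpha\le\min(r,j-1)$. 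The middle term therefore equals $\sum_p\sum_j s_j|x_p|^2|y_j|^2$. Since $a\|x\|^2\|y\|^2=\sum_{p,j}a|x_py_j|^2$, the displayed formula in the statement follows. The only care needed here is this index bookkeeping for $s_j$.

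For the decomposability claim, I would note that each term $|x_p\overline{y_{q+\alpha}}-x_q\overline{y_{p+\alpha}}|^2$ is the squared modulus of $\overline{x^*Ay}$ for a suitable $m\times n$ matrix $A$, namely $A=e_pf_{q+\alpha}^*-e_qf_{p+\alpha}^*$ after conjugation. Since $|\bar z|=|z|$, this is the squared modulus of the sesquilinear form $x^*Ay$. Likewise, $|x_py_j|^2=|x^te_pf_j^ty|^2$ is the squared modulus of a bilinear form. When $a\ge s_j$ for all $j$, all coefficients $\varepsilon_\alpha$ and $a-s_j$ are nonnegative. Hence $P$ is a sum of squared moduli of sesquilinear and bilinear forms, so it is decomposable, and Theorem \ref{Thrm1}(5) then shows that $\Phi_{a;m,n,\varepsilon}$ is decomposable.

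I expect the main obstacle to be making sure the cross terms cancel exactly, with the correct conjugations in the Lagrange identity. Since all the $y$'s appear conjugated consistently, the identity applies verbatim with $v=(y_{1+\alpha},\dots,y_{m+\alpha})$.
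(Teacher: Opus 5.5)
There is a genuine gap, and it is exactly the step you flagged. The Lagrange identity as you wrote it, with $\overline{v_q},\overline{v_p}$ inside the modulus, is false over $\mathbb{C}$, so the cross-term check fails. Indeed,
\[
|u_p\overline{v_q}-u_q\overline{v_p}|^2=|u_pv_q|^2+|u_qv_p|^2-2\,\mathrm{Re}\bigl(u_p\overline{u_q}\,v_p\overline{v_q}\bigr),
\]
while the $\{p,q\}$ cross terms of $|\sum_p u_p\overline{v_p}|^2$ are $2\,\mathrm{Re}\bigl(u_p\overline{u_q}\,\overline{v_p}v_q\bigr)$. These cancel only when $v_p\overline{v_q}\in\mathbb{R}$, e.g.\ for real vectors. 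In fact $\sum_{p<q}|u_p\overline{v_q}-u_q\overline{v_p}|^2=\|u\|^2\|v\|^2-|\sum_p u_pv_p|^2$.

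Here is a concrete failure. Take $m=n=2$, $r=0$, $\varepsilon_0=1$ and $x=y=(1,i)^t$. Then $y^*x=2$, so $P_{\Phi_{a;2,2,\varepsilon}}(x,y)=4a-4$. The displayed right-hand side instead gives $|x_1\overline{y_2}-x_2\overline{y_1}|^2+(a-1)\|x\|^2\|y\|^2=4+4(a-1)=4a$. Equivalently, for $a=\varepsilon_0=1$ the displayed formula would write the reduction map $X\mapsto \mathrm{Tr}(X)I-X$ as a sum of squared moduli of sesquilinear forms. That would make it completely positive, which it is not. The paper's own proof states the Lagrange identity with the same conjugations. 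So the displayed formula in the statement is itself valid only for real $x,y$ and cannot be proved as written.

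The repair is to use the correct complex identity
\[
\|u\|^2\|v\|^2-|\langle u,v\rangle|^2=\sum_{1\le p<q\le m}|u_pv_q-u_qv_p|^2.
\]
To check it, write the right side as $\tfrac12\sum_{p,q}|u_pv_q-u_qv_p|^2$ and recognize the cross sum as $\bigl|\sum_p u_p\overline{v_p}\bigr|^2$. With $u=x$ and $v=(y_{1+\alpha},\dots,y_{m+\alpha})$, the first sum becomes $\sum_\alpha\varepsilon_\alpha\sum_{p<q}|x_py_{q+\alpha}-x_qy_{p+\alpha}|^2$. Your bookkeeping for $s_j$ is correct and carries over unchanged.

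However, $x_py_{q+\alpha}-x_qy_{p+\alpha}=x^t(e_pf_{q+\alpha}^t-e_qf_{p+\alpha}^t)y$ is a bilinear form, not the conjugate of a sesquilinear one. So your identification via $\overline{x^*Ay}$ must be dropped. For $a\ge\max_j s_j$, the Choi polynomial is then a sum of squared moduli of bilinear forms only. Hence $\Phi_{a;m,n,\varepsilon}$ is completely copositive by Theorem~\ref{Thrm1}(4), and therefore decomposable. The final decomposability conclusion survives, but only with the corrected formula.
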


\begin{proof}
	Let $x=(x_1,\dots,x_m)\in\mathbb C^m$ and $y=(y_1,\dots,y_n)\in\mathbb C^n$. We have
	\[
	\begin{aligned}
		P_{\Phi_{a;m,n,\varepsilon}}(x,y)
		&=
		y^*\Bigl(a\,\Tr(xx^*)I_n-\sum_{\alpha=0}^{r}\varepsilon_\alpha V_\alpha xx^*V_\alpha^*\Bigr)y\\
		&=
		a\|x\|^2\|y\|^2-\sum_{\alpha=0}^{r}\varepsilon_\alpha\, |y^*V_\alpha x|^2.
	\end{aligned}
	\]
	Now, by the definition of $V_\alpha$,
	we have
	\[
	y^*V_\alpha x=\sum_{p=1}^m x_p\overline{y_{p+\alpha}}.
	\]
	Therefore
	\[
	P_{\Phi_{a;m,n,\varepsilon}}(x,y)
	=
	a\|x\|^2\|y\|^2
	-
	\sum_{\alpha=0}^{r}\varepsilon_\alpha
	\left|\sum_{p=1}^m x_p\overline{y_{p+\alpha}}\right|^2.
	\]
	
	For each $\alpha=0,\dots,r$, define
	\[
	u^{(\alpha)}:=(y_{1+\alpha},\dots,y_{m+\alpha})\in\mathbb C^m.
	\]
	Then
	\[
	\left|\sum_{p=1}^m x_p\overline{y_{p+\alpha}}\right|^2
	=
	|\langle x,u^{(\alpha)}\rangle|^2,
	\]
	so
	\[
	\begin{aligned}
		P_{\Phi_{a;m,n,\varepsilon}}(x,y)
		&=
		\sum_{\alpha=0}^{r}\varepsilon_\alpha
		\Bigl(\|x\|^2\|u^{(\alpha)}\|^2-|\langle x,u^{(\alpha)}\rangle|^2\Bigr)\\
		&\quad+
		\|x\|^2\Bigl(a\|y\|^2-\sum_{\alpha=0}^{r}\varepsilon_\alpha\|u^{(\alpha)}\|^2\Bigr).
	\end{aligned}
	\]
	By the Lagrange identity,
	\[
	\|x\|^2\|u\|^2-|\langle x,u\rangle|^2
	=
	\sum_{1\le p<q\le m}|x_p\overline{u_q}-x_q\overline{u_p}|^2
	\qquad (u\in\mathbb C^m).
	\]
	Applying this to $u=u^{(\alpha)}$ yields
	\[
	\|x\|^2\|u^{(\alpha)}\|^2-|\langle x,u^{(\alpha)}\rangle|^2
	=
	\sum_{1\le p<q\le m}
	\left|x_p\overline{y_{q+\alpha}}-x_q\overline{y_{p+\alpha}}\right|^2.
	\]
	This gives the first sum in the asserted decomposition.
	
	It remains to rewrite the second term. Observe that
	\[
	\|u^{(\alpha)}\|^2=\sum_{p=1}^m |y_{p+\alpha}|^2,
	\]
	so
	\[
	\sum_{\alpha=0}^{r}\varepsilon_\alpha \|u^{(\alpha)}\|^2
	=
	\sum_{\alpha=0}^{r}\varepsilon_\alpha\sum_{p=1}^m |y_{p+\alpha}|^2.
	\]
	Fix $j\in\{1,\dots,n\}$. The term $|y_j|^2$ appears in the inner sum exactly when
	\[
	j=p+\alpha
	\quad\text{for some }p\in\{1,\dots,m\},
	\]
	that is,
	\[
	j-m\le \alpha\le j-1.
	\]
	Since also $0\le \alpha\le r$, the coefficient of $|y_j|^2$ is precisely
	\[
	s_j=\sum_{\alpha=\max(0,j-m)}^{\min(r,j-1)}\varepsilon_\alpha.
	\]
	Hence
	\[
	\sum_{\alpha=0}^{r}\varepsilon_\alpha \|u^{(\alpha)}\|^2
	=
	\sum_{j=1}^n s_j |y_j|^2,
	\]
	and therefore
	\[
	\|x\|^2\Bigl(a\|y\|^2-\sum_{\alpha=0}^{r}\varepsilon_\alpha\|u^{(\alpha)}\|^2\Bigr)
	=
	\sum_{p=1}^m\sum_{j=1}^n (a-s_j)|x_p y_j|^2.
	\]
	Combining the two parts proves the formula
	\[
	\begin{aligned}
		P_{\Phi_{a;m,n,\varepsilon}}(x,y)
		&=
		\sum_{\alpha=0}^{r}\varepsilon_\alpha
		\sum_{1\le p<q\le m}
		\left|x_p\overline{y_{q+\alpha}}-x_q\overline{y_{p+\alpha}}\right|^2
		+\sum_{p=1}^m\sum_{j=1}^n (a-s_j)|x_p y_j|^2.
	\end{aligned}
	\]
	The first sum is a sum of squares of moduli of sesquilinear forms, while the second is a sum of squares of moduli of bilinear forms. Thus, if $a\ge \max_j s_j$, all coefficients $a-s_j$ are nonnegative and the form is decomposable. Consequently, $\Phi_{a;m,n,\varepsilon}$ is decomposable.
\end{proof}

\begin{corollary}[The case $r=1$] \rm
	Let $n=m+1$ and let
	\[
	\Phi_{a;m,m+1,\varepsilon}(X)
	=
	a\,\Tr(X)I_{m+1}
	-\varepsilon_0V_0XV_0^*
	-\varepsilon_1V_1XV_1^*.
	\]
	Then
	\[
	\begin{aligned}
		P_{\Phi_{a;m,m+1,\varepsilon}}(x,y)
		& =
		\varepsilon_0\sum_{1\le p<q\le m}
		|x_p\overline{y_q}-x_q\overline{y_p}|^2
		+ \varepsilon_1\sum_{1\le p<q\le m}
		|x_p\overline{y_{q+1}}-x_q\overline{y_{p+1}}|^2\\
	  	&\quad  + \sum_{p=1}^m (a-\varepsilon_0)|x_p y_1|^2 +
		\sum_{p=1}^m\sum_{j=2}^{m}(a-\varepsilon_0-\varepsilon_1)|x_p y_j|^2\\
		&\quad+
		\sum_{p=1}^m (a-\varepsilon_1)|x_p y_{m+1}|^2.
	\end{aligned}
	\]
	In particular, if $a\ge \varepsilon_0+\varepsilon_1,$
	then $\Phi_{a;m,m+1,\varepsilon}$ is decomposable.
\end{corollary}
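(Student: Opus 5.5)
This corollary is the special case $r=1$, i.e. $n=m+1$, of Proposition \ref{Phi_eps}. The plan is to specialize the general decomposition given there and then compute the coefficients $s_j$ explicitly.

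With $r=1$, the first sum in Proposition \ref{Phi_eps} runs over $\alpha\in\{0,1\}$. It splits into the two Lagrange-type sums
\[
\varepsilon_0\sum_{1\le p<q\le m}|x_p\overline{y_q}-x_q\overline{y_p}|^2
\quad\text{and}\quad
\varepsilon_1\sum_{1\le p<q\le m}|x_p\overline{y_{q+1}}-x_q\overline{y_{p+1}}|^2,
\]
which are the first two terms claimed.

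It remains to evaluate $s_j=\sum_{\alpha=\max(0,j-m)}^{\min(1,j-1)}\varepsilon_\alpha$ for $j=1,\dots,m+1$.
\begin{itemize}
\item For $j=1$, the range is $\alpha$ from $\max(0,1-m)=0$ to $\min(1,0)=0$, so $s_1=\varepsilon_0$.
\item For $2\le j\le m$, we have $j-m\le 0$ and $j-1\ge 1$, so the range is $\alpha\in\{0,1\}$ and $s_j=\varepsilon_0+\varepsilon_1$.
\item For $j=m+1$, we have $\max(0,1)=1$ and $\min(1,m)=1$, so $s_{m+1}=\varepsilon_1$.
\end{itemize}
Substituting these values into $\sum_{p}\sum_{j}(a-s_j)|x_py_j|^2$ gives the three remaining sums in the stated formula. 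If $m=1$, the middle range $2\le j\le m$ is empty and the formula still holds, with the Lagrange sums vanishing.

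For the last claim, recall $0<\varepsilon_\alpha\le 1$. Then
\[
\max_j s_j=\max(\varepsilon_0,\ \varepsilon_0+\varepsilon_1,\ \varepsilon_1)=\varepsilon_0+\varepsilon_1
\]
when $m\ge 2$; when $m=1$ the maximum is $\max(\varepsilon_0,\varepsilon_1)$, which is still at most $\varepsilon_0+\varepsilon_1$. So the hypothesis $a\ge\varepsilon_0+\varepsilon_1$ implies $a\ge\max_j s_j$, and the decomposability criterion of Proposition \ref{Phi_eps} applies. Equivalently, every coefficient in the displayed expansion is nonnegative, so the form is a sum of squared moduli of sesquilinear and bilinear forms, and Theorem \ref{Thrm1}(5) gives that $\Phi_{a;m,m+1,\varepsilon}$ is decomposable.

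The only point needing care is the index bookkeeping at the boundary values $j=1$ and $j=m+1$; the rest is direct substitution.
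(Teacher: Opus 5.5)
Your route is the paper's own: its proof is simply ``specialize Proposition \ref{Phi_eps} to $r=1$''. Your values $s_1=\varepsilon_0$, $s_j=\varepsilon_0+\varepsilon_1$ for $2\le j\le m$, $s_{m+1}=\varepsilon_1$ are right, and so is the $m=1$ remark. The trouble is upstream. The expansion in Proposition \ref{Phi_eps} is false over $\mathbb{C}$, so the displayed formula does not hold as stated, and no index bookkeeping can rescue it.

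That proposition uses $\|x\|^2\|u\|^2-|\sum_p x_p\overline{u_p}|^2=\sum_{p<q}|x_p\overline{u_q}-x_q\overline{u_p}|^2$. The correct complex Lagrange identity is instead
\[
\|x\|^2\|u\|^2-\Bigl|\sum_{p=1}^m x_p\overline{u_p}\Bigr|^2=\sum_{1\le p<q\le m}|x_pu_q-x_qu_p|^2 .
\]
The two right-hand sides differ for non-real vectors. Already at $u=x=(1,i)$ the left side is $0$, while $|x_1\overline{u_2}-x_2\overline{u_1}|^2=4$.

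For the corollary itself, take $m=2$, $x=(1,i)$, $y=(1,i,0)$. Then $y^*V_0x=2$ and $y^*V_1x=-i$, so $P_{\Phi_{a;2,3,\varepsilon}}(x,y)=4a-4\varepsilon_0-\varepsilon_1$. The displayed right-hand side instead evaluates to $4a-\varepsilon_1$, because $|x_1\overline{y_2}-x_2\overline{y_1}|^2=4$ whereas $|x_1y_2-x_2y_1|^2=0$.

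The repair is to drop the conjugations in the two Lagrange sums. The first line should read $\varepsilon_0\sum_{p<q}|x_py_q-x_qy_p|^2+\varepsilon_1\sum_{p<q}|x_py_{q+1}-x_qy_{p+1}|^2$. The diagonal part and your computation of the $s_j$ are unaffected. For real $x,y$ the two versions coincide, which is why the slip is invisible in the real case.

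The ``in particular'' claim survives, but your closing sentence must change. After the correction every term is the squared modulus of a bilinear form. So for $a\ge\varepsilon_0+\varepsilon_1$, Theorem \ref{Thrm1}(4) shows that $\Phi_{a;m,m+1,\varepsilon}$ is in fact completely copositive, hence decomposable. A quick test would have exposed the sign of the problem: for $r=0$, $m=2$, $a=\varepsilon_0=1$ the map is $\Tr(X)I_2-X$. That map is completely copositive but not completely positive, whereas the conjugated expansion would make its Choi polynomial a squared sesquilinear form.
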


\begin{proof}
	This is a specialization of the proposition for $r=1.$
\end{proof}

\begin{corollary}[The case $r=2$] \rm
	Let $n=m+2$ and
	\[
	\Phi_{a;m,m+2,\varepsilon}(X)
	=
	a\,\Tr(X)I_{m+2}
	-\varepsilon_0V_0XV_0^*
	-\varepsilon_1V_1XV_1^*
	-\varepsilon_2V_2XV_2^*.
	\]
	Then 
	\[
	\begin{aligned}
		P_{\Phi_{a;m,m+2,\varepsilon}}(x,y)
		&=
		\sum_{\alpha=0}^{2}\varepsilon_\alpha
		\sum_{1\le p<q\le m}
		\left|x_p\overline{y_{q+\alpha}}-x_q\overline{y_{p+\alpha}}\right|^2
		+\sum_{p=1}^m (a-\varepsilon_0)|x_p y_1|^2\\
		&+
		\sum_{p=1}^m (a-\varepsilon_0-\varepsilon_1)|x_p y_2|^2
		+\sum_{p=1}^m\sum_{j=3}^{m}(a-\varepsilon_0-\varepsilon_1-\varepsilon_2)|x_p y_j|^2\\
		&+
		\sum_{p=1}^m (a-\varepsilon_1-\varepsilon_2)|x_p y_{m+1}|^2
		+\sum_{p=1}^m (a-\varepsilon_2)|x_p y_{m+2}|^2.
	\end{aligned}
	\]
	
Therefore,	if $ a\ge \varepsilon_0+\varepsilon_1+\varepsilon_2,$
	then $\Phi_{a;m,m+2,\varepsilon}$ is decomposable.
	
\end{corollary}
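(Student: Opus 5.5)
This is a direct specialization of Proposition \ref{Phi_eps} to $r=2$, i.e. $n=m+2$. I would take the general identity
\[
P_{\Phi_{a;m,n,\varepsilon}}(x,y)=\sum_{\alpha=0}^{r}\varepsilon_\alpha\sum_{1\le p<q\le m}\left|x_p\overline{y_{q+\alpha}}-x_q\overline{y_{p+\alpha}}\right|^2+\sum_{p=1}^m\sum_{j=1}^n (a-s_j)|x_py_j|^2 ,
\]
set $r=2$, and compute the coefficients $s_j=\sum_{\alpha=\max(0,j-m)}^{\min(2,j-1)}\varepsilon_\alpha$ explicitly for $j=1,\dots,m+2$.

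The only real work is this case split on $j$, which amounts to tracking the summation range of $\alpha$. I assume $m\ge 2$, so that $j=1,2,m+1,m+2$ are distinct.
\begin{itemize}
\item For $j=1$ the range is $\alpha\in\{0\}$, giving $s_1=\varepsilon_0$.
\item For $j=2$ the upper limit is $\min(2,1)=1$ and the lower limit is $\max(0,2-m)=0$, giving $s_2=\varepsilon_0+\varepsilon_1$.
\item For $3\le j\le m$ the range is $\alpha\in\{0,1,2\}$, giving $s_j=\varepsilon_0+\varepsilon_1+\varepsilon_2$.
\item For $j=m+1$ the lower limit is $1$ and the upper limit is $2$, giving $s_{m+1}=\varepsilon_1+\varepsilon_2$.
\item For $j=m+2$ the range is $\alpha\in\{2\}$, giving $s_{m+2}=\varepsilon_2$.
\end{itemize}
Substituting these values reproduces the displayed formula term by term. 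The middle sum over $3\le j\le m$ is empty when $m=2$, which is consistent with the statement.

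For the decomposability claim, all $\varepsilon_\alpha>0$, so $\max_j s_j=\varepsilon_0+\varepsilon_1+\varepsilon_2$. The hypothesis $a\ge\varepsilon_0+\varepsilon_1+\varepsilon_2$ therefore makes every coefficient $a-s_j$ nonnegative. The "in particular" clause of Proposition \ref{Phi_eps}, or equivalently Theorem \ref{Thrm1}(5), then shows that $\Phi_{a;m,m+2,\varepsilon}$ is decomposable.

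I expect no genuine obstacle. The only point needing care is the boundary bookkeeping of the index range in $s_j$ near $j=1,2$ and $j=m+1,m+2$, together with the degenerate small-$m$ cases. For $m=1$ the boundary cases $j=1$ and $j=m+1$ coincide, so the formula must be read with overlapping blocks merged; the statement is clearly intended for $m\ge2$ and I would note this.
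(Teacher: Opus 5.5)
Your route is exactly the paper's: its whole proof is ``specialize Proposition~\ref{Phi_eps} at $r=2$'', and your evaluation of $s_1,\dots,s_{m+2}$ is correct. The trouble is that the identity you specialize is false. So neither your argument nor the paper's establishes the displayed formula, and the formula is in fact wrong.

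The error is in the Lagrange identity used in the proof of Proposition~\ref{Phi_eps}. With $\langle x,u\rangle=\sum_p x_p\overline{u_p}$, the correct version is
\[
\|x\|^2\|u\|^2-|\langle x,u\rangle|^2=\sum_{1\le p<q\le m}\left|x_pu_q-x_qu_p\right|^2,
\]
with no conjugates on $u$. The conjugated version already fails at $x=u=(1,i)$, where the left side is $0$ and the right side is $4$. For the statement itself, take $m=2$, $x=(1,i)$, $y=(1,i,0,0)$. Then $P_{\Phi_{a;2,4,\varepsilon}}(x,y)=4a-4\varepsilon_0-\varepsilon_1$, whereas the displayed right-hand side equals $4a-\varepsilon_1$. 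There is also a structural check. At $r=0$, $a=\varepsilon_0=1$, the conjugated identity would write the Choi polynomial of the reduction map $X\mapsto \Tr(X)I-X$ as a sum of squared moduli of sesquilinear forms. By Theorem~\ref{Thrm1}(3) that map would then be completely positive, yet its Choi matrix has the eigenvalue $1-m<0$.

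The repair is to replace every $|x_p\overline{y_{q+\alpha}}-x_q\overline{y_{p+\alpha}}|^2$ by $|x_py_{q+\alpha}-x_qy_{p+\alpha}|^2$; your $s_j$ bookkeeping is unaffected. The decomposability conclusion survives, but for a different reason than the paper gives. Every term is now the squared modulus of a \emph{bilinear} form, and $a\ge\varepsilon_0+\varepsilon_1+\varepsilon_2$ dominates every $s_j$. So Theorem~\ref{Thrm1}(4) shows $\Phi_{a;m,m+2,\varepsilon}$ is even completely copositive, hence decomposable.

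Two minor points. First, for $m=2$ the middle range is empty and $\max_j s_j=\max(\varepsilon_0+\varepsilon_1,\varepsilon_1+\varepsilon_2)$ is strictly smaller than $\varepsilon_0+\varepsilon_1+\varepsilon_2$. You should only claim $a\ge s_j$ for all $j$, not equality of the maximum. Second, for $m=1$ it is $j=2$, not $j=1$, that collides with $j=m+1$. The two displayed blocks on $|x_1y_2|^2$ then add to $2a-\varepsilon_0-2\varepsilon_1-\varepsilon_2$ instead of the true $a-\varepsilon_1$, so ``merging'' does not rescue that case.
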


\begin{proof}
	Again this follows from Proposition \ref{Phi_eps} when $r=2.$	
\end{proof}
Given a square matrix $X=X^*,$ let $\lambda_{\max} (X)$ denote the largest eigenvalue of $X.$
\begin{proposition}\label{Phi_eps_Pos} 
	\rm
	 $\Phi_{a;m,n,\varepsilon}$ is positive if and only if
	\[
	a\ge
	\sup_{\|x\|=1}
	\lambda_{\max}\!\left(\sum_{\alpha=0}^{r}\varepsilon_\alpha V_\alpha xx^*V_\alpha^*\right), \qquad \forall x\in \C^m, \|x\|=1.
	\]
\end{proposition}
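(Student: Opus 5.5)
The plan is to apply Theorem \ref{Thrm1}(2): $\Phi_{a;m,n,\varepsilon}$ is positive if and only if its Choi polynomial is nonnegative, that is,
\[
P_{\Phi_{a;m,n,\varepsilon}}(x,y)=a\|x\|^2\|y\|^2-y^*\Bigl(\sum_{\alpha=0}^{r}\varepsilon_\alpha V_\alpha xx^*V_\alpha^*\Bigr)y\ \ge 0
\qquad\text{for all } x\in\C^m,\ y\in\C^n .
\]
This identity was already computed just before Proposition \ref{Phi_eps}. Write $M(x):=\sum_{\alpha}\varepsilon_\alpha V_\alpha xx^*V_\alpha^*$. It is positive semidefinite, because each $\varepsilon_\alpha>0$, and it depends quadratically on $x$. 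The biquadratic form is homogeneous of degree two in $x$ and of degree two in $y$, so it suffices to check nonnegativity for $\|x\|=\|y\|=1$. This is the same scaling argument as in Lemma \ref{LemmaEdge} and the optimization corollary above.

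First fix a unit vector $x$. The condition $a-y^*M(x)y\ge 0$ for all unit $y$ is equivalent, by the Rayleigh quotient characterization of the largest eigenvalue of the Hermitian matrix $M(x)$, to
\[
a\ge \lambda_{\max}(M(x)).
\]
Requiring this for every unit $x$ gives exactly
\[
a\ge \sup_{\|x\|=1}\lambda_{\max}(M(x)).
\]
For the converse, suppose $a$ dominates the supremum. Then $P\ge0$ on the product of the unit spheres, hence everywhere by homogeneity, and positivity follows from Theorem \ref{Thrm1}(2). The case $x=0$ or $y=0$ is trivial.

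The only point needing care is that the supremum is finite and attained. The map $x\mapsto\lambda_{\max}(M(x))$ is continuous on the compact unit sphere, so the supremum is a maximum. Equivalently, it equals $\max_{\|x\|=\|y\|=1}\sum_\alpha\varepsilon_\alpha|y^*V_\alpha x|^2$, so the "sup" and the "for all" phrasing in the statement agree. Beyond this, no real obstacle is expected; the argument is a direct unwinding of Theorem \ref{Thrm1}(2).
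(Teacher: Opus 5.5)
Your argument is correct and is essentially the paper's own proof. The paper checks directly that $\Phi_{a;m,n,\varepsilon}(xx^*)=aI_n-\sum_{\alpha}\varepsilon_\alpha V_\alpha xx^*V_\alpha^*\ge 0$ for every unit $x$, which is exactly your Rayleigh-quotient condition $y^*\Phi_{a;m,n,\varepsilon}(xx^*)y\ge 0$ for unit $y$, only without going through the Choi polynomial; your remark that the supremum is attained is a harmless extra.
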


\begin{proof}
	$\Phi_{a;m,n,\varepsilon}$ is positive if and only if, for every unit vector $x\in \C^m,$  
	\[
	\Phi_{a;m,n,\varepsilon}(xx^*)\ge 0.
	\]
By the definition of $\Phi_{a;m,n,\varepsilon}$, we have
	\[
	\Phi_{a;m,n,\varepsilon}(xx^*)
	=
	aI_n-\sum_{\alpha=0}^{r}\varepsilon_\alpha V_\alpha xx^*V_\alpha^*.
	\]
	Hence, 
		\[ aI_n-\sum_{\alpha=0}^{r}\varepsilon_\alpha V_\alpha xx^*V_\alpha^* \ge 0 \Longleftrightarrow
	a\ge \lambda_{\max}\!\left(\sum_{\alpha=0}^{r}\varepsilon_\alpha V_\alpha xx^*V_\alpha^*\right),
	\qquad\text{for all }\|x\|=1.
	\]
\end{proof}

\begin{corollary}[The case $r=0$] \label{Cor_r0}  \rm
	Suppose $n=m$, so that
	\[
	\Phi_{a;m,m,\varepsilon}(X)=a\,\Tr(X)I_m-\varepsilon_0 X.
	\]
	Then
	\[
	\Phi_{a;m,m,\varepsilon}\text{ is decomposable}
	\iff
	\Phi_{a;m,m,\varepsilon}\text{ is positive}
	\iff
	a\ge \varepsilon_0.
	\]
\end{corollary}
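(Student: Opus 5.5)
The plan is to prove the cycle: decomposable $\Rightarrow$ positive $\Rightarrow$ $a\ge\varepsilon_0$ $\Rightarrow$ decomposable. The first implication holds because every decomposable map is a sum of a completely positive and a completely copositive map, hence positive. Equivalently, in the language of Theorem \ref{Thrm1}, a sum of squared moduli of bilinear and sesquilinear forms is nonnegative.

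For positive $\Rightarrow$ $a\ge\varepsilon_0$, I will use Proposition \ref{Phi_eps_Pos} with $r=0$ and $V_0=I_m$. The operator in question is then $\varepsilon_0 xx^*$ for a unit vector $x$. Its largest eigenvalue is $\varepsilon_0$, since $xx^*$ is a rank-one projection and $\varepsilon_0>0$. So positivity is equivalent to $a\ge\varepsilon_0$. Alternatively, one can evaluate the Choi polynomial directly at $y=\bar x$ with $\|x\|=1$, which gives $a-\varepsilon_0\ge 0$.

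For $a\ge\varepsilon_0$ $\Rightarrow$ decomposable, I will specialize Proposition \ref{Phi_eps} to $r=0$. The only summation index is $\alpha=0$, and for every $j\in\{1,\dots,m\}$ it lies in the range $\max(0,j-m)=0\le\alpha\le\min(0,j-1)=0$. Hence $s_j=\varepsilon_0$ for all $j$. The Choi polynomial then becomes
\[
P_{\Phi_{a;m,m,\varepsilon}}(x,y)=\varepsilon_0\sum_{1\le p<q\le m}|x_p\overline{y_q}-x_q\overline{y_p}|^2+(a-\varepsilon_0)\sum_{p,j}|x_py_j|^2 .
\]
When $a\ge\varepsilon_0$, both coefficients are nonnegative, so the polynomial is a sum of squared moduli of sesquilinear and bilinear forms. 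By the proposition, $\Phi_{a;m,m,\varepsilon}$ is decomposable.

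There is no real obstacle here. The only point needing care is the bookkeeping for $s_j$ in the degenerate case $r=0$, and checking that the variables paired in each square are of the right type. Under the paper's conventions, $x_p\overline{y_q}$ appears in $|x^*Ay|^2$-type or $|x^tBy|^2$-type squares up to conjugation. Since the modulus is conjugation-invariant, each term fits one of the two classes in Theorem \ref{Thrm1}(5).
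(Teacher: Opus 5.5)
Your route is exactly the paper's: decomposable $\Rightarrow$ positive; then Proposition~\ref{Phi_eps_Pos} with $V_0=I_m$, where $\lambda_{\max}(\varepsilon_0xx^*)=\varepsilon_0$, gives positive $\Rightarrow a\ge\varepsilon_0$; and Proposition~\ref{Phi_eps} with $s_j=\varepsilon_0$ gives the converse. Your bookkeeping for $s_j$ is right.

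The problem is the displayed decomposition. You took it from Proposition~\ref{Phi_eps}, and it is false for complex vectors. For $r=0$ one has $P_{\Phi_{a;m,m,\varepsilon}}(x,y)=a\|x\|^2\|y\|^2-\varepsilon_0\bigl|\sum_p x_p\overline{y_p}\bigr|^2$. The complex Lagrange identity gives
\[
\|x\|^2\|y\|^2-\Bigl|\sum_{p} x_p\overline{y_p}\Bigr|^2=\sum_{1\le p<q\le m}|x_py_q-x_qy_p|^2,
\]
with no conjugates on the right. Concretely, take $m=2$ and $x=y=(1,i)^t$. The true value is $4a-4\varepsilon_0$, while your formula returns $4\varepsilon_0+4(a-\varepsilon_0)=4a$. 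What you wrote is in fact the Choi polynomial of $X\mapsto a\Tr(X)I_m-\varepsilon_0X^t$. So your closing remark about checking that each square is ``of the right type'' misses the issue. The type classification of $|x_p\overline{y_q}-x_q\overline{y_p}|$ is fine, but that term does not occur in the Choi polynomial at all.

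The repair is easy and the conclusion survives. With the correct identity,
\[
P_{\Phi_{a;m,m,\varepsilon}}(x,y)=\varepsilon_0\sum_{1\le p<q\le m}\bigl|x^t(e_{pq}-e_{qp})y\bigr|^2+(a-\varepsilon_0)\sum_{p,j}\bigl|x^te_{pj}y\bigr|^2 .
\]
For $a\ge\varepsilon_0$ this is a sum of squared moduli of bilinear forms. So $\Phi_{a;m,m,\varepsilon}$ is even completely copositive by Theorem~\ref{Thrm1}(4), reflecting the familiar fact that the reduction map $X\mapsto\Tr(X)I-X$ is completely copositive. In particular it is decomposable.

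The same conjugation slip affects your alternative positivity check. Writing $P$ for the Choi polynomial, $P(x,\bar x)=a\|x\|^4-\varepsilon_0|x^tx|^2$, which equals $a-\varepsilon_0$ only when $|x^tx|=1$ (e.g.\ $x$ real). Evaluate at $y=x$ instead, which gives $P(x,x)=a-\varepsilon_0$ for every unit $x$.
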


\begin{proof}
	By Proposition \ref{Phi_eps},  $a\ge \varepsilon_0$ implies decomposability.
	Conversely, suppose that $\Phi_{a;m,m,\varepsilon}$ is decomposable, then the map is positive and by Proposition \ref{Phi_eps_Pos}, we have $a\ge \varepsilon_0$. 
\end{proof}

\begin{corollary} \label{Cor_m2_Pos} 
	\rm
	Assume $m=2$ and $n=2+r$. Then $\Phi_{a;2,2+r,\varepsilon}$ is positive if and only if
$	a\ge \lambda_{\max}(J_\varepsilon),$
	where
	\[
	J_\varepsilon=
	\begin{pmatrix}
		\varepsilon_0 & \frac12\sqrt{\varepsilon_0\varepsilon_1} & 0 & \cdots & 0\\
		\frac12\sqrt{\varepsilon_0\varepsilon_1} & \varepsilon_1 &
		\frac12\sqrt{\varepsilon_1\varepsilon_2} & \ddots & \vdots\\
		0 & \frac12\sqrt{\varepsilon_1\varepsilon_2} & \ddots & \ddots & 0\\
		\vdots & \ddots & \ddots & \varepsilon_{r-1} &
		\frac12\sqrt{\varepsilon_{r-1}\varepsilon_r}\\
		0 & \cdots & 0 & \frac12\sqrt{\varepsilon_{r-1}\varepsilon_r} & \varepsilon_r
	\end{pmatrix}.
	\]
\end{corollary}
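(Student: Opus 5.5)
The plan is to reduce the supremum in Proposition \ref{Phi_eps_Pos} to a maximization over a single complex parameter, and then show that the maximum is attained at the extreme value of that parameter.

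By Proposition \ref{Phi_eps_Pos}, $\Phi_{a;2,2+r,\varepsilon}$ is positive if and only if $a\ge \sup_{\|x\|=1}\lambda_{\max}(A_xA_x^*)$. Here $A_x$ is the $n\times(r+1)$ matrix whose $\alpha$-th column is $\sqrt{\varepsilon_\alpha}\,V_\alpha x$, so that $\sum_\alpha \varepsilon_\alpha V_\alpha xx^*V_\alpha^*=A_xA_x^*$. Since $A_xA_x^*$ and $A_x^*A_x$ have the same nonzero eigenvalues, I will work with the $(r+1)\times(r+1)$ Gram matrix
\[
G_x:=A_x^*A_x,\qquad (G_x)_{\alpha\beta}=\sqrt{\varepsilon_\alpha\varepsilon_\beta}\,\langle V_\beta x, V_\alpha x\rangle .
\]

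First I compute $G_x$ for $m=2$. We have $V_\alpha x=x_1f_{1+\alpha}+x_2f_{2+\alpha}$. This gives $\langle V_\alpha x,V_\alpha x\rangle=\|x\|^2=1$. For $\beta=\alpha+1$, the only overlap is on $f_{2+\alpha}$, so the inner product is $x_1\overline{x_2}$ or its conjugate, depending on the orientation. For $|\alpha-\beta|\ge 2$ the supports are disjoint, so the inner product is $0$. Hence $G_x$ is tridiagonal: the diagonal is $\varepsilon_0,\dots,\varepsilon_r$, and the off-diagonal entries are $\sqrt{\varepsilon_\alpha\varepsilon_{\alpha+1}}\,c$ and $\sqrt{\varepsilon_\alpha\varepsilon_{\alpha+1}}\,\bar c$ with $c=x_1\overline{x_2}$. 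By AM--GM, $|c|\le \tfrac12(|x_1|^2+|x_2|^2)=\tfrac12$, with equality when $|x_1|=|x_2|=1/\sqrt2$.

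Next I remove the phase of $c$. Conjugating a Hermitian tridiagonal matrix by a diagonal unitary $D=\mathrm{diag}(1,e^{i\theta},e^{2i\theta},\dots)$ multiplies the superdiagonal entries by a common phase. So $G_x$ is unitarily similar to the real symmetric tridiagonal matrix $J(t)$ with the same diagonal and off-diagonal entries $t\sqrt{\varepsilon_\alpha\varepsilon_{\alpha+1}}$, where $t=|c|\in[0,\tfrac12]$. Thus the supremum equals $\max_{0\le t\le 1/2}\lambda_{\max}(J(t))$, and $J(1/2)=J_\varepsilon$.

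The main step is monotonicity in $t$. Since all $\varepsilon_\alpha>0$, $J(t)$ is entrywise nonnegative. For nonnegative symmetric matrices, $\lambda_{\max}$ equals the Perron root, and it is monotone under entrywise increase. Directly: take a nonnegative unit Perron vector $v$ of $J(t)$. For $t'\ge t$,
\[
\lambda_{\max}(J(t'))\ge v^tJ(t')v\ge v^tJ(t)v=\lambda_{\max}(J(t)).
\]
The maximum is therefore attained at $t=\tfrac12$, which is realized by $x=(1,1)/\sqrt2$, and it equals $\lambda_{\max}(J_\varepsilon)$. Combining this with Proposition \ref{Phi_eps_Pos} gives the stated equivalence. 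I expect the only delicate points to be the correct orientation and conjugation in the off-diagonal entries of $G_x$, and justifying the diagonal-unitary phase removal.
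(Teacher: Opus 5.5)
Your proof is correct and follows the same route as the paper: reduce via Proposition~\ref{Phi_eps_Pos} to the tridiagonal Gram matrix $G(x)$, bound $|x_1x_2|\le\frac12$, and evaluate at $|x_1|=|x_2|=1/\sqrt2$. Your diagonal-unitary conjugation, which removes the phase of $x_1\overline{x_2}$, supplies a step the paper only asserts, namely that $G(x)$ ``is'' $J_\varepsilon$. Your nonnegative-Perron-vector argument supplies the other one: that $t\mapsto\lambda_{\max}(J(t))$ is nondecreasing, so the largest eigenvalue is maximized at $|x_1x_2|=\frac12$.
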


\begin{proof}
	By the previous proposition, positivity is equivalent to
	\[
	a\ge \sup_{\|x\|=1}
	\lambda_{\max}\!\left(\sum_{\alpha=0}^{r}\varepsilon_\alpha V_\alpha xx^*V_\alpha^*\right).
	\]
	
	Let $x=(x_1,x_2)\in\mathbb C^2$ with $\|x\|=1$. Define
	\[
	v_\alpha:=\sqrt{\varepsilon_\alpha}\,V_\alpha x,\qquad \alpha=0,\dots,r.
	\]
	Then
	\[
	\sum_{\alpha=0}^{r}\varepsilon_\alpha V_\alpha xx^*V_\alpha^*
	=
	\sum_{\alpha=0}^{r} v_\alpha v_\alpha^*.
	\]
	The nonzero eigenvalues of this matrix coincide with the nonzero eigenvalues of the Gram matrix
	\[
	G(x):=\bigl(\langle v_\beta,v_\alpha\rangle\bigr)_{\alpha,\beta=0}^{r}.
	\]
	Since
	\[
	V_\alpha x=x_1f_{\alpha+1}+x_2f_{\alpha+2},
	\]
	the supports of $V_\alpha x$ and $V_\beta x$ are disjoint unless $|\alpha-\beta|\le 1$. More precisely,
	\[
	\langle V_\beta x,V_\alpha x\rangle=
	\begin{cases}
		1,& \alpha=\beta,\\[1mm]
		\overline{x_1}x_2,& \beta=\alpha+1,\\[1mm]
		x_1\overline{x_2},& \alpha=\beta+1,\\[1mm]
		0,& |\alpha-\beta|\ge 2.
	\end{cases}
	\]
	Therefore
	\[
	G(x)=
	\begin{pmatrix}
		\varepsilon_0 & \sqrt{\varepsilon_0\varepsilon_1}\,\overline{x_1}x_2 & 0 & \cdots & 0\\
		\sqrt{\varepsilon_0\varepsilon_1}\,x_1\overline{x_2} & \varepsilon_1 &
		\sqrt{\varepsilon_1\varepsilon_2}\,\overline{x_1}x_2 & \ddots & \vdots\\
		0 & \sqrt{\varepsilon_1\varepsilon_2}\,x_1\overline{x_2} & \ddots & \ddots & 0\\
		\vdots & \ddots & \ddots & \varepsilon_{r-1} &
		\sqrt{\varepsilon_{r-1}\varepsilon_r}\,\overline{x_1}x_2\\
		0 & \cdots & 0 & \sqrt{\varepsilon_{r-1}\varepsilon_r}\,x_1\overline{x_2} & \varepsilon_r
	\end{pmatrix}.
	\]
	Now $ |x_1x_2|\le \frac12 $ 
	because $\|x\|=1$. 
	Thus the maximal possible largest eigenvalue is attained when
$|x_1x_2|=\frac12,$
	that is, when $|x_1|=|x_2|=1/\sqrt2$. For such a choice, $G(x)$ is $J_\varepsilon$:
	\[
	J_\varepsilon=
	\begin{pmatrix}
		\varepsilon_0 & \frac12\sqrt{\varepsilon_0\varepsilon_1} & 0 & \cdots & 0\\
		\frac12\sqrt{\varepsilon_0\varepsilon_1} & \varepsilon_1 &
		\frac12\sqrt{\varepsilon_1\varepsilon_2} & \ddots & \vdots\\
		0 & \frac12\sqrt{\varepsilon_1\varepsilon_2} & \ddots & \ddots & 0\\
		\vdots & \ddots & \ddots & \varepsilon_{r-1} &
		\frac12\sqrt{\varepsilon_{r-1}\varepsilon_r}\\
		0 & \cdots & 0 & \frac12\sqrt{\varepsilon_{r-1}\varepsilon_r} & \varepsilon_r
	\end{pmatrix}.
	\]
	Hence
	\[
	\sup_{\|x\|=1}
	\lambda_{\max}\!\left(\sum_{\alpha=0}^{r}\varepsilon_\alpha V_\alpha xx^*V_\alpha^*\right)
	=
	\lambda_{\max}(J_\varepsilon).
	\]
	Thus positivity, and therefore decomposability, is equivalent to $
	a\ge \lambda_{\max}(J_\varepsilon). $
\end{proof}

\begin{corollary}[The case $m=2$, $r=1$]\rm 
	For
	\[
	\Phi_{a;2,3,\varepsilon}(X)
	=
	a\,\Tr(X)I_3-\varepsilon_0V_0XV_0^*-\varepsilon_1V_1XV_1^*,
	\]
	the following are equivalent:
	\begin{enumerate}
		\item $\Phi_{a;2,3,\varepsilon}$ is positive;
		\item
		$\displaystyle
		a\ge
		\frac{\varepsilon_0+\varepsilon_1+\sqrt{\varepsilon_0^2-\varepsilon_0\varepsilon_1+\varepsilon_1^2}}{2}.
		$
	\end{enumerate}
\end{corollary}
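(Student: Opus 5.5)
The plan is to specialize Corollary \ref{Cor_m2_Pos} to $r=1$ and then compute the largest eigenvalue of the resulting $2\times 2$ matrix explicitly. For $m=2$ and $n=3$ we have $r=1$, and Corollary \ref{Cor_m2_Pos} states that $\Phi_{a;2,3,\varepsilon}$ is positive if and only if $a\ge \lambda_{\max}(J_\varepsilon)$, where
\[
J_\varepsilon=\begin{pmatrix}\varepsilon_0 & \frac12\sqrt{\varepsilon_0\varepsilon_1}\\ \frac12\sqrt{\varepsilon_0\varepsilon_1} & \varepsilon_1\end{pmatrix}.
\]
So the equivalence (1)$\iff$(2) reduces to evaluating $\lambda_{\max}(J_\varepsilon)$.

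For completeness I would also verify directly, in this $2\times 2$ case, the monotonicity step that Corollary \ref{Cor_m2_Pos} uses. Through Proposition \ref{Phi_eps_Pos}, positivity is equivalent to
\[
a\ge \sup_{\|x\|=1}\lambda_{\max}(G(x)),
\]
where $G(x)$ has diagonal entries $\varepsilon_0,\varepsilon_1$ and off-diagonal entries of modulus $t:=\sqrt{\varepsilon_0\varepsilon_1}\,|x_1x_2|$. A Hermitian $2\times 2$ matrix with these entries has largest eigenvalue
\[
\frac{\varepsilon_0+\varepsilon_1+\sqrt{(\varepsilon_0-\varepsilon_1)^2+4t^2}}{2}.
\]
This quantity is increasing in $t$. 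Since $|x_1x_2|\le \frac12$ when $\|x\|=1$, with equality at $|x_1|=|x_2|=1/\sqrt2$, the supremum is attained there and equals $\lambda_{\max}(J_\varepsilon)$. This confirms the reduction rigorously without relying on the general-$r$ argument.

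Finally, substituting $t^2=\varepsilon_0\varepsilon_1/4$ gives
\[
(\varepsilon_0-\varepsilon_1)^2+4t^2=\varepsilon_0^2-2\varepsilon_0\varepsilon_1+\varepsilon_1^2+\varepsilon_0\varepsilon_1=\varepsilon_0^2-\varepsilon_0\varepsilon_1+\varepsilon_1^2 .
\]
Hence
\[
\lambda_{\max}(J_\varepsilon)=\frac{\varepsilon_0+\varepsilon_1+\sqrt{\varepsilon_0^2-\varepsilon_0\varepsilon_1+\varepsilon_1^2}}{2},
\]
which is exactly condition (2). There is no real obstacle. The only point requiring care is the monotonicity of $\lambda_{\max}(G(x))$ in $|x_1x_2|$, and the explicit $2\times 2$ eigenvalue formula settles it.
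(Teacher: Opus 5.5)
Your proposal is correct and follows the paper's route: you specialize Corollary~\ref{Cor_m2_Pos} to $r=1$ and compute $\lambda_{\max}(J_\varepsilon)$ from the explicit $2\times 2$ eigenvalue formula. Your extra check is a useful addition. You show that $\lambda_{\max}(G(x))$ depends only on $|x_1x_2|$ and is increasing in it. This makes the monotonicity step rigorous and phase-independent, whereas the paper's general-$r$ argument only asserts it.
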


\begin{proof}
	By the preceding proposition, it suffices to compute the largest eigenvalue of
	\[
	J_\varepsilon=
	\begin{pmatrix}
		\varepsilon_0 & \frac12\sqrt{\varepsilon_0\varepsilon_1}\\[1mm]
		\frac12\sqrt{\varepsilon_0\varepsilon_1} & \varepsilon_1
	\end{pmatrix}.
	\]
The largest eigenvalue is
	\[
	\lambda_{\max}(J_\varepsilon)
	=
	\frac{\varepsilon_0+\varepsilon_1+\sqrt{\varepsilon_0^2-\varepsilon_0\varepsilon_1+\varepsilon_1^2}}{2}.
	\]
\end{proof}

\begin{remark}\rm
	The inequality $ a\ge \max_{1\le j\le n}s_j $
	is an explicit sufficient condition for decomposability for arbitrary $m$ and $r$. In general it need not be necessary. However, it becomes exact in some situations, notably when $r=0$.
\end{remark}
\subsubsection{Unweighted decomposability}
In this subsection, we consider the unweighted case $\Phi_{a;m,n,\varepsilon}$ (in the previous subsection) when $\varepsilon_\alpha=1$ for all $\alpha$. In this case,
we write $\Phi_{a;m,n}:= \Phi_{a;m,n, \bold{1}}.$  
Let $m,n\in\mathbb N$ with $n\ge m$, and put $r:=n-m$. Consider
\[
\Phi_{a;m,n}(X)
=
a\,\Tr(X)I_n-\sum_{\alpha=0}^{r}V_\alpha X V_\alpha^*,
\qquad X\in M_m(\mathbb C).
\]
where
\[
V_\alpha e_p=f_{p+\alpha},\qquad p=1,\dots,m,\ \alpha=0,\dots,r.
\]

Then
	\[
	\begin{aligned}
		P_{\Phi_{a;m,n}}(x,y)
		&=
		\sum_{\alpha=0}^{r}\sum_{1\le p<q\le m}
		\left|x_p\overline{y_{q+\alpha}}-x_q\overline{y_{p+\alpha}}\right|^2
		+
		\sum_{p=1}^m\sum_{j=1}^n (a-c_j)|x_p y_j|^2,
	\end{aligned}
	\]
	where
	\[
	c_j:=\#\{\alpha\in\{0,\dots,r\}:1+\alpha\le j\le m+\alpha\},
	\qquad j=1,\dots,n.
	\]
	Equivalently,
	\[
	c_j=
	\min(r,j-1)-\max(0,j-m)+1,
	\qquad j=1,\dots,n.
	\]
	In particular,
	\[
	\max_{1\le j\le n} c_j=r+1,
	\]
	and therefore, if
	\[
	a\ge r+1=n-m+1,
	\]
	then $P_{\Phi_{a;m,n}}$ is a sum of squares of moduli of sesquilinear forms and bilinear forms. Hence $\Phi_{a;m,n}$ is decomposable.
Hence, we get the following corollary.

\begin{corollary}
	\rm 
	Let $m,n\in\mathbb N$ with $n\ge m$, and put $r:=n-m$. Consider
	\[
	\Phi_{a;m,n}:M_m(\mathbb C)\to M_n(\mathbb C),\qquad
	\Phi_{a;m,n}(X)=a\,\Tr(X)I_n-\sum_{\alpha=0}^{r}V_\alpha X V_\alpha^*
	\]
	as above. Then the following hold.
	\begin{enumerate}
		\item[(i)] 
		If $	a\ge r+1=n-m+1,$ 
		then $\Phi_{a;m,n}$ is decomposable.
		\item[(ii)] {The case $r=0$.}
		The following are equivalent:
		\begin{enumerate}
			\item[(a)] $\Phi_{a;m,m}$ is decomposable;
			\item[(b)] $\Phi_{a;m,m}$ is positive;
			\item[(c)] $a\ge 1$.
		\end{enumerate}
		\item[(iii)] {The case $m=2$.}
		The following are equivalent:
		\begin{enumerate}
			\item[(a)] $\Phi_{a;2,2+r}$ is positive;
			\item[(b)] $a\ge 1+\cos\!\Bigl(\dfrac{\pi}{r+2}\Bigr)$.
		\end{enumerate}
	\end{enumerate}
\end{corollary}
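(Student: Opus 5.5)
Parts (i) and (ii) are specializations of results already established, so the plan is to dispatch them quickly and spend the effort on (iii).

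\emph{Part (i).} Specialize Proposition \ref{Phi_eps} to $\varepsilon_\alpha=1$ for all $\alpha$. Then $s_j=c_j$, where $c_j$ counts the $\alpha\in\{0,\dots,r\}$ with $1+\alpha\le j\le m+\alpha$. So $c_j\le r+1$, with equality for instance at $j=m$ (where $\alpha$ ranges over $\max(0,m-m)=0$ to $\min(r,m-1)$). Since $n\ge m$, one checks that $\max_j c_j=r+1$ whenever $m\ge r+1$, and in all cases $\max_j c_j\le r+1$. Hence $a\ge r+1$ gives $a\ge\max_j s_j$, and Proposition \ref{Phi_eps} yields decomposability.

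\emph{Part (ii).} This is Corollary \ref{Cor_r0} with $\varepsilon_0=1$. Concretely, $\Phi_{a;m,m}(X)=a\Tr(X)I_m-X$. Decomposable implies positive trivially. Positivity forces $a\ge 1$: test $X=e_1e_1^*$, which gives the eigenvalue $a-1$. Finally $a\ge1$ gives decomposability by (i) with $r=0$.

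\emph{Part (iii).} Apply Corollary \ref{Cor_m2_Pos} with all $\varepsilon_\alpha=1$, so positivity is equivalent to $a\ge\lambda_{\max}(J)$. Here $J=I_{r+1}+\tfrac12 T$, and $T$ is the $(r+1)\times(r+1)$ path adjacency matrix, with ones on the sub- and superdiagonal. The eigenvalues of $T$ are classical: $2\cos\bigl(\tfrac{k\pi}{r+2}\bigr)$ for $k=1,\dots,r+1$. One verifies this with the eigenvectors $v_j=\sin\bigl(\tfrac{jk\pi}{r+2}\bigr)$, using $\sin((j-1)\theta)+\sin((j+1)\theta)=2\cos\theta\sin(j\theta)$ together with the boundary conditions $v_0=v_{r+2}=0$. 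Thus $\lambda_{\max}(J)=1+\cos\bigl(\tfrac{\pi}{r+2}\bigr)$, which gives (a)$\iff$(b).

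\emph{Main obstacle.} The one point needing care is the reduction inside Corollary \ref{Cor_m2_Pos}. There, $G(x)$ has off-diagonal entries $\overline{x_1}x_2$ rather than the modulus $|x_1x_2|$, and the supremum is claimed to occur at $|x_1x_2|=\tfrac12$. I would justify this directly. Conjugating $G(x)$ by a diagonal unitary $\mathrm{diag}(\omega^k)$ with $|\omega|=1$ makes the off-diagonal entries the real numbers $t=|x_1x_2|$. The resulting matrix is $I+tT$, whose largest eigenvalue $1+2t\cos\bigl(\tfrac{\pi}{r+2}\bigr)$ is increasing in $t\in[0,\tfrac12]$. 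So the supremum is attained at $t=\tfrac12$, which closes the argument.
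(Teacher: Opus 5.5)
Your proposal is correct and follows the paper's proof. Part (i) specializes Proposition~\ref{Phi_eps} to $\varepsilon_\alpha=1$, part (ii) is Corollary~\ref{Cor_r0}, and part (iii) uses Corollary~\ref{Cor_m2_Pos} together with the classical eigenvalues $1+\cos\bigl(\tfrac{k\pi}{r+2}\bigr)$ of the tridiagonal Toeplitz matrix $J$. Your additions are refinements of the same route, not a different one: only $\max_j c_j\le r+1$ is needed (the paper's claim $\max_j c_j=r+1$ actually fails when $m<r+1$, e.g.\ $m=1$), and your diagonal-unitary reduction of $G(x)$ to $I+tT$ with $t=|x_1x_2|$ supplies the monotonicity step that the proof of Corollary~\ref{Cor_m2_Pos} merely asserts.
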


\begin{proof}
	Part (i) is exactly the unweighted decomposability criterion (Proposition \ref{Phi_eps})).
	
	(ii) follows from Corollary \ref{Cor_r0}. 
	
	(iii) (m=2)
	By Corollary \ref{Cor_m2_Pos}, we have
	\[
	\Phi_{a;2,2+r}\ \text{is positive}
	\iff
	\lambda_{\max}\! (J)), 
	\]
	where
		\[
	J=
	\begin{pmatrix}
		1 & \frac12 & 0 & \cdots & 0\\
		\frac12 & 1 & \frac12 & \ddots & \vdots\\
		0 & \frac12 & \ddots & \ddots & 0\\
		\vdots & \ddots & \ddots & 1 & \frac12\\
		0 & \cdots & 0 & \frac12 & 1
	\end{pmatrix}
	\in M_{r+1}(\mathbb C).
	\]
	The eigenvalues of this Toeplitz tridiagonal matrix are well known:
	\[
	\lambda_j(J)=1+\cos\!\Bigl(\frac{j\pi}{r+2}\Bigr),
	\qquad j=1,\dots,r+1.
	\]
	Hence
	\[
	\lambda_{\max}(J)=1+\cos\!\Bigl(\frac{\pi}{r+2}\Bigr).
	\]
	Therefore
	\[
	\Phi_{a;2,2+r}\ \text{is positive}
	\iff
	a\ge 1+\cos\!\Bigl(\frac{\pi}{r+2}\Bigr).
	\]
\end{proof}

\subsection{Indecomposable maps based on unextendible product bases}

We next show how the results of the previous sections apply to orthonormal unextendible families of product vectors. This yields a concrete class of positive semidefinite but indecomposable biquadratic forms, and therefore, via the Choi polynomial correspondence, a class of indecomposable positive maps. The argument is an immediate consequence of Corollary \ref{IndecomBQF1} together with the UPB construction in \cite[Theorem~3]{Terhal01}.

\begin{proposition}
Let
\[
E=\{z_1,\dots,z_k\}\subseteq \mathbb C^m\otimes \mathbb C^n
\]
be an orthonormal unextendible family of product vectors, and let
\[
P_E:=\sum_{i=1}^k z_i z_i^*
\]
be the orthogonal projection onto \(H_E:=\operatorname{span}E\). Define
\[
\delta_E:=\min_{\|x\|=\|y\|=1}\langle x\otimes y,\,P_E(x\otimes y)\rangle.
\]
Then \(\delta_E>0\). Moreover, for every \(0<\varepsilon\le \delta_E\), the biquadratic form
\[
P_\varepsilon(x,y):=(x\otimes y)^*(P_E-\varepsilon I)(x\otimes y),
\qquad x\in\mathbb C^m,\ y\in\mathbb C^n,
\]
is positive semidefinite and indecomposable.

Consequently, if \(\Phi_\varepsilon:M_m(\mathbb C)\to M_n(\mathbb C)\) denotes the unique linear map whose Choi polynomial is
$P_{\Phi_\varepsilon}=P_\varepsilon,$
then \(\Phi_\varepsilon\) indecomposable.
\end{proposition}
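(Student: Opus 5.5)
The plan is to reduce everything to Corollary \ref{IndecomBQF1}, with $Q=P_E$ and $R=0$, so that $\Pi:=P_E$ plays the role of $Q+R^{\Gamma}$.

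\textbf{Positivity of $\delta_E$.} First I will show that $\delta_E>0$. Since $P_E$ is a projection, we have $(x\otimes y)^*P_E(x\otimes y)=\|P_E(x\otimes y)\|^2$. This vanishes exactly when $x\otimes y\in H_E^{\perp}$. By unextendibility, $H_E^\perp$ contains no nonzero product vector, so the quantity is strictly positive on the compact set $\{\|x\|=\|y\|=1\}$. Hence the minimum $\delta_E$ is attained and is positive. Equivalently, condition (3) of Lemma \ref{LemmaEdge} holds with $\ker Q\cap\ker R=\ker P_E=H_E^\perp$, so $W=P_E$ is minimal. Then Theorem \ref{Theorem1}(i), or directly the homogeneity estimate $P_\varepsilon(x,y)\ge \|x\|^2\|y\|^2(\delta_E-\varepsilon)$, gives positive semidefiniteness for $0<\varepsilon\le\delta_E$.

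\textbf{Indecomposability.} Next I will verify the remaining hypothesis of Corollary \ref{IndecomBQF1}, namely that $\Pi^{\Gamma}=P_E^{\Gamma}$ is a positive contraction. Each $z_i=a_i\otimes b_i$ satisfies $(z_iz_i^*)^{\Gamma}=a_ia_i^*\otimes \overline{b_i}\,\overline{b_i}^{*}$, which is a rank-one projection. Orthonormality of $\{a_i\otimes b_i\}$ means that for each $i\neq j$ either $a_i\perp a_j$ or $b_i\perp b_j$. The latter is equivalent to $\overline{b_i}\perp\overline{b_j}$, so the vectors $a_i\otimes\overline{b_i}$ are again orthonormal. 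Hence $P_E^{\Gamma}$ is itself an orthogonal projection; in particular $0\le P_E^\Gamma\le I$ and $M:=I-P_E$ satisfies $M\ge0$ and $M^\Gamma=I-P_E^\Gamma\ge0$.

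I then apply the argument of Corollary \ref{IndecomBQF1}. By Lemma \ref{dualD}, $M\in\mathcal D^*$, and $\mathrm{Tr}((P_E-\varepsilon I)M)=-\varepsilon(mn-k)$. This is negative, because $k<mn$: a UPB by definition does not span the full space, otherwise it would trivially be extendible only vacuously. I would state this point explicitly, since otherwise $H_E^\perp=\{0\}$ and the statement degenerates. Therefore $P_E-\varepsilon I\notin\mathcal D(m,n)$, and by Proposition \ref{SOS}(iii) $P_\varepsilon$ is indecomposable. The corollary's statement uses a strict inequality $\varepsilon<\delta$, but the trace argument only needs $\varepsilon>0$, so the endpoint $\varepsilon=\delta_E$ is also covered.

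\textbf{Transfer to maps and main obstacle.} Finally, the map $\Phi_\varepsilon$ exists and is unique by Proposition \ref{11coresp}. Since $P_\varepsilon$ is real-valued and psd, Theorem \ref{Thrm1}(2) gives positivity. Theorem \ref{Thrm1}(5) combined with indecomposability of $P_\varepsilon$ shows $\Phi_\varepsilon$ is not decomposable. The only step needing genuine care is the verification that $P_E^\Gamma$ is a projection. This uses the orthogonality-of-a-factor observation above, which is exactly the standard fact exploited in \cite{Terhal01}. Everything else is bookkeeping with earlier results.
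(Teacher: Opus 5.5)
Your proposal is correct and follows the paper's route. The paper's proof is a one-line appeal to \cite[Theorem~3]{Terhal01}, Corollary~\ref{IndecomBQF1} with $\Pi=P_E$, and Theorem~\ref{Thrm1}. You supply the details it leaves implicit: $\delta_E>0$ from unextendibility and compactness, $P_E^{\Gamma}$ being the projection onto $\operatorname{span}\{a_i\otimes\overline{b_i}\}$, and the endpoint $\varepsilon=\delta_E$. Your remark that one needs $k<mn$ is a genuine point the paper's statement glosses over: this is part of the standard UPB definition, and for a complete product basis $P_\varepsilon$ has Gram matrix $(1-\varepsilon)I\succeq 0$ and is therefore decomposable.
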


\begin{proof}
The result follows directly from \cite[Theorem~3]{Terhal01}, Corollary \ref{IndecomBQF1}, and Theorem \ref{Thrm1}.
\end{proof}

\subsection{Indecomposability of the Tanahashi-Tomiyama's map $\tau_{4,1}$}

In the following we reprove the indecomposability of the Tanahashi-Tomiyama's map $\tau_{4,1}$ by sums of squares. The indecomposability of $\tau_{n,k}$ was well-known for $1\leq k \leq n-2$ (see, e.g. \cite{Tomiyama1988, Osaka91, Osaka93, Yamagami93, Ha98}).

Recall that $\tau_{4,1}:M_4(\mathbb C)\to M_4(\mathbb C)$ be the linear map
\[
\tau_{4,1}(X)=3\,\varepsilon(X)+\varepsilon(SXS^*)-X,
\]
where $\varepsilon(X)$ denotes the diagonal part of $X$, and $S=[\delta_{i,j+1}]$ is the cyclic shift matrix.
More explicitly,   
\[
\tau_{4,1}(X)=
\begin{pmatrix}
2x_{11}+x_{44} & -x_{12} & -x_{13} & -x_{14}\\
-x_{21} & 2x_{22}+x_{11} & -x_{23} & -x_{24}\\
-x_{31} & -x_{32} & 2x_{33}+x_{22} & -x_{34}\\
-x_{41} & -x_{42} & -x_{43} & 2x_{44}+x_{33}
\end{pmatrix},
\qquad X=(x_{ij})\in M_4(\mathbb C).
\]

 The Choi polynomial of $\tau_{4,1}$ is determined by
\[
P_{\tau_{4,1}}(x,y)=y^*\tau_{4,1}(xx^*)y,
\qquad x,y\in \mathbb C^4.
\]

Firstly, we show the positivity of $P_{\tau_{4,1}}$.

\begin{proposition} $ P_{\tau_{4,1}}(x,y)\ge 0 $  for all $x,y\in\mathbb C^4$.
\end{proposition}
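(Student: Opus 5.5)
The plan is to compute $P_{\tau_{4,1}}$ explicitly, reduce to a real inequality in the moduli $|x_i|,|y_i|$, and prove that inequality by grouping terms and AM--GM, falling back on a Lagrange-identity argument.

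\textbf{Step 1: explicit form.} Put $X=xx^*$ in the explicit matrix for $\tau_{4,1}$ and read indices of $x_{i-1}$ cyclically, so that $x_0=x_4$. The diagonal part then contributes $\sum_i(3|x_i|^2+|x_{i-1}|^2)|y_i|^2$, and the term $-X$ contributes $-y^*xx^*y=-|\langle x,y\rangle|^2$. Hence
\[
P_{\tau_{4,1}}(x,y)=\sum_{i=1}^4\bigl(3|x_i|^2+|x_{i-1}|^2\bigr)|y_i|^2-\Bigl|\sum_{i=1}^4 x_i\overline{y_i}\Bigr|^2 .
\]

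\textbf{Step 2: reduction to nonnegative reals.} Write $u_i=|x_i|$, $v_i=|y_i|$ and $t_i=u_iv_i$. By the triangle inequality $|\sum x_i\overline{y_i}|\le\sum t_i$, and the positive part of the formula depends only on the moduli. So it suffices to prove
\[
3\sum_i t_i^2+\sum_i u_{i-1}^2v_i^2\ \ge\ \Bigl(\sum_i t_i\Bigr)^2
\quad\text{for all } u,v\in\R_{\ge 0}^4 .
\]
Expanding the square, this becomes
\[
2\sum_i t_i^2+\sum_i u_{i-1}^2v_i^2\ \ge\ 2\sum_{i<j}t_it_j .
\]

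\textbf{Step 3: the cross terms.} The cross terms split into two diagonal pairs $(1,3),(2,4)$ and four cyclically adjacent pairs $(i-1,i)$. For the diagonal pairs, $2t_1t_3+2t_2t_4\le\sum_i t_i^2$ uses up one copy of $\sum t_i^2$. What remains is the cyclic inequality
\[
\sum_i t_i^2+\sum_i u_{i-1}^2v_i^2\ \ge\ 2\sum_{i}t_{i-1}t_i,\qquad t_{i-1}t_i=u_{i-1}v_i\cdot u_iv_{i-1}.
\]
Setting $a_i=u_{i-1}v_i$ and $b_i=u_iv_{i-1}$, each adjacent product is $a_ib_i$. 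The right-hand side can be bounded by AM--GM as $2a_ib_i\le \lambda a_i^2+\lambda^{-1}b_i^2$. The difficulty is that the terms $b_i^2=u_i^2v_{i-1}^2$ are not present on the left. I would therefore try to dominate them by a weighted combination of $t_{i-1}^2$, $t_i^2$ and the $a$-terms. One tool is $b_i^2\,a_i^2=t_{i-1}^2t_i^2$, which gives the bound $2a_ib_i=2t_{i-1}t_i\le t_{i-1}^2+t_i^2$. This last bound is too crude on its own: summed over $i$ it exactly consumes $2\sum t_i^2$, which exceeds the single copy that is left.

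\textbf{Main obstacle and fallback.} So the main obstacle is this cyclic four-term inequality. The first fallback I would try is not to spend the diagonal pairs separately. Instead, use the Lagrange identity (as in the proof of Proposition \ref{Phi_eps}) on
\[
\sum_i t_i^2\cdot 4-\Bigl(\sum_i t_i\Bigr)^2=\sum_{i<j}(t_i-t_j)^2 .
\]
Since $3\sum t_i^2=\tfrac34\bigl(4\sum t_i^2\bigr)$, this rewrites the target as
\[
\tfrac34\sum_{i<j}(t_i-t_j)^2+\sum_i a_i^2\ \ge\ \tfrac14\Bigl(\sum_i t_i\Bigr)^2 .
\]
I would then prove this by cases on which $t_i$ is largest. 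The key point to exploit is that if $t_i$ and $t_{i-1}$ are both large, then $u_{i-1}$ and $v_i$ are both nonzero, so $a_i$ is bounded below by $t_{i-1}t_i$ divided by the product of the other two moduli. After normalizing $\|x\|=\|y\|=1$, this becomes a finite real optimization that can be closed by elementary estimates.
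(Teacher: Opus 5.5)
Your Steps 1 and 2 are fine. The formula for $P_{\tau_{4,1}}$ is correct, and the triangle inequality legitimately reduces the claim to $3\sum_i t_i^2+\sum_i a_i^2\ge(\sum_i t_i)^2$, with $t_i=u_iv_i$ and $a_i=u_{i-1}v_i$.

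The trouble is in Step 3. The cyclic inequality left after you spend one copy of $\sum_i t_i^2$ on the diagonal pairs is not just hard; it is false. Take $u=(2,4,0,1)$ and $v=(2,1,0,4)$. Then $t=(4,4,0,4)$ and $a=(u_4v_1,u_1v_2,u_2v_3,u_3v_4)=(2,2,0,0)$, so
\[
\sum_i t_i^2+\sum_i a_i^2=48+8=56<64=2\sum_i t_{i-1}t_i .
\]
The original inequality does hold at this point ($152\ge 144$). It holds only because the estimate $2t_1t_3+2t_2t_4\le\sum_i t_i^2$ discarded a slack of $16$. So no argument that first bounds the diagonal cross terms separately can succeed.

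Your fallback reformulation $\frac34\sum_{i<j}(t_i-t_j)^2+\sum_i a_i^2\ge\frac14(\sum_i t_i)^2$ is correct. What follows it, however, is only a plan, and the proposed lower bound fails exactly where it matters. At $x=y=\frac12(1,1,1,1)$ all $t_i=a_i=\frac14$, the Lagrange term vanishes, and you need $\sum_i a_i^2\ge\frac14$, which holds with equality. After normalizing, the bound $a_i=t_{i-1}t_i/(u_iv_{i-1})\ge t_{i-1}t_i$ gives only $\sum_i a_i^2\ge\frac1{64}$.

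The missing ingredient is the global cyclic constraint $\prod_i a_i=\prod_i u_i\prod_i v_i=\prod_i t_i$. When all $u_i>0$, write $a_i=r_it_i$ with $r_i=u_{i-1}/u_i$. Then $\prod_i r_i=1$, $t$ is unconstrained, and the target becomes $\sum_i(3+r_i^2)t_i^2\ge(\sum_i t_i)^2$. By Cauchy--Schwarz in $t$ this is equivalent to $\sum_i 1/(3+r_i^2)\le 1$. Clearing denominators, this reads $3e_2+2e_3\ge 26$ for the elementary symmetric functions of $r_1^2,\dots,r_4^2$, which follows from AM--GM since their product is $1$.

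This is precisely the paper's proof: Cauchy--Schwarz with weights $3|x_i|^2+|x_{i-1}|^2$, then $\sum_i 1/(3+u_i)\le 1$ under $\prod_i u_i=1$, with the degenerate cases handled by continuity. Without some such use of the product constraint, your proposal does not yet give a proof.
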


\begin{proof}
Let
\[
x=(x_1,x_2,x_3,x_4)^T,\qquad y=(y_1,y_2,y_3,y_4)^T,
\]
and set
\[
a_i:=|x_i|^2,\qquad b_i:=|y_i|^2
\qquad (i=1,2,3,4).
\]
Indices are taken cyclically modulo $4$, so $a_0=a_4$.

From the explicit formula for $\tau_{4,1}(xx^*)$, we have
\[
P_{\tau_{4,1}}(x,y)
=
\sum_{i=1}^4 (3a_i+a_{i-1})\,b_i
-\left|\sum_{i=1}^4 \overline{x_i}y_i\right|^2.
\]
Thus it suffices to prove
\[
\left|\sum_{i=1}^4 \overline{x_i}y_i\right|^2
\le
\sum_{i=1}^4 (3a_i+a_{i-1})\,b_i.
\]

Assume first that
\[
3a_i+a_{i-1}>0
\qquad (i=1,2,3,4).
\]
Then
\[
\sum_{i=1}^4 \overline{x_i}y_i
=
\sum_{i=1}^4
\frac{\overline{x_i}}{\sqrt{3a_i+a_{i-1}}}\,
\sqrt{3a_i+a_{i-1}}\,y_i.
\]
By the Cauchy--Schwarz inequality,
\[
\left|\sum_{i=1}^4 \overline{x_i}y_i\right|^2
\le
\left(\sum_{i=1}^4 \frac{a_i}{3a_i+a_{i-1}}\right)
\left(\sum_{i=1}^4 (3a_i+a_{i-1})\,b_i\right).
\]
Therefore it remains to prove that
\[
\sum_{i=1}^4 \frac{a_i}{3a_i+a_{i-1}}\le 1.
\]

Assume now that all $a_i>0$, and define
\(\displaystyle
u_i:=\frac{a_{i-1}}{a_i}>0
\qquad (i=1,2,3,4).
\)
Then
\(
u_1u_2u_3u_4=1,
\)
and
\(
\dfrac{a_i}{3a_i+a_{i-1}}=\dfrac{1}{3+u_i}.
\)
Hence it is enough to show that
\[
\sum_{i=1}^4 \frac{1}{3+u_i}\le 1,
\qquad\text{whenever }u_1u_2u_3u_4=1.
\]
Multiplying both sides by $\prod_{i=1}^4(3+u_i)$, this is equivalent to
\[
\prod_{i=1}^4(3+u_i)-\sum_{i=1}^4 \prod_{j\ne i}(3+u_j)\ge 0.
\]
Since $u_1u_2u_3u_4=1$,  the left-hand side becomes
\[
2\sum_{1\le i<j<k\le 4}u_i u_j u_k
+3\sum_{1\le i<j\le 4}u_i u_j
-26.
\]
Applying the arithmetic--geometric mean inequality, we obtain 
\[
\frac{1}{6}\sum_{1\le i<j\le 4}u_i u_j
\ge
\left(\prod_{1\le i<j\le 4}u_i u_j\right)^{1/6}
=
(u_1u_2u_3u_4)^{1/2}=1,
\]
and 
\[
\frac{1}{4}\sum_{1\le i<j<k\le 4}u_i u_j u_k
\ge
\left(\prod_{1\le i<j<k\le 4}u_i u_j u_k\right)^{1/4}
=
(u_1u_2u_3u_4)^{3/4}=1.
\]
Therefore,
\[
2\sum_{1\le i<j<k\le 4}u_i u_j u_k
+3\sum_{1\le i<j\le 4}u_i u_j
\ge 26,
\]
and hence
\(\displaystyle
\sum_{i=1}^4 \frac{1}{3+u_i}\le 1.
\)
It follows that
\(\displaystyle
\sum_{i=1}^4 \frac{a_i}{3a_i+a_{i-1}}\le 1.
\)
Therefore
\[
P_{\tau_{4,1}}(x,y)
=
\sum_{i=1}^4 (3a_i+a_{i-1})\,b_i
-\left|\sum_{i=1}^4 \overline{x_i}y_i\right|^2
\ge 0.
\]
This proves the claim when all $a_i>0$. If some $a_i=0$, the conclusion follows by continuity. 
\end{proof}

\begin{proposition}
The Choi polynomial
$P_{\tau_{4,1}}(x,y) :=y^*\tau_{4,1}(xx^*)y,$
where $x$ and $y$ are vectors in $\mathbb C^4,$
is indecomposable.
\end{proposition}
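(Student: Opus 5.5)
Suppose, for contradiction, that $P_{\tau_{4,1}}$ is decomposable. By Proposition \ref{SOS}(iii) its Gram matrix $W$ (for us, essentially the Choi matrix $C_{\tau_{4,1}}$ up to the conjugation $x\mapsto\bar x$) would lie in $\mathcal D(4,4)$. The plan is to exhibit a matrix $\rho\in\mathcal D^*$, i.e.\ $\rho\succeq 0$ and $\rho^\Gamma\succeq 0$ by Lemma \ref{dualD}, such that $\mathrm{Tr}(W\rho)<0$. Equivalently, we produce a PPT state detected by $\tau_{4,1}$. This is the same dual-cone mechanism used in Corollary \ref{IndecomBQF1}.

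We can argue equally in polynomial language. Write a putative decomposition
\[
P_{\tau_{4,1}}(x,y)=\sum_r |x^*A_ry|^2+\sum_s |x^tB_sy|^2 .
\]
Then exploit the zeros of $P_{\tau_{4,1}}$. From the positivity proof, equality holds when all $a_i$ are equal (all $u_i=1$) and $y_i$ is proportional to $x_i$, with $\sqrt{3a_i+a_{i-1}}$ constant. Hence $P_{\tau_{4,1}}(x,x)=0$ for all $x$ with $|x_1|=\dots=|x_4|$, for instance $x=(1,\omega_1,\omega_2,\omega_3)$ with arbitrary unimodular phases. Each square must vanish on this large zero set. The sesquilinear squares vanish at $(x,x)$, so $x^*A_rx=0$ for all such $x$. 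Expanding in the phases forces the diagonal of $A_r$ to be zero. The bilinear terms give $x^tB_sx=0$, which forces $B_s$ to be antisymmetric up to diagonal constraints. Then compare coefficients of the monomials $|x_i|^2|y_j|^2$ and $x_i\bar x_j y_j\bar y_i$ with those in $P_{\tau_{4,1}}$. The coefficient of $|x_i|^2|y_j|^2$ is $2$ if $j=i$, $1$ if $j=i+1$, and $0$ otherwise. The vanishing coefficients for $j\notin\{i,i+1\}$ kill the entries $(A_r)_{ij}$ and $(B_s)_{ij}$ for those positions. The remaining entries cannot then reproduce the cross term $-x_i\bar x_j\bar y_i y_j$ with coefficient $-1$ for every pair $i\neq j$.

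Concretely, I would carry this out through a witness. Take $\rho$ supported on the span of $e_i\otimes e_i$ together with the diagonal entries on $e_i\otimes e_{i+2}$ (and possibly $e_i\otimes e_{i+1}$, $e_i\otimes e_{i-1}$ with suitable weights). Set $\rho=\sum_{i,j}e_{ij}\otimes e_{ij}+\sum_i \lambda_{k(i)}\,e_{ii}\otimes e_{jj}$, a Choi-type state. Its partial transpose has $2\times2$ blocks $\begin{pmatrix}\lambda & 1\\ 1&\mu\end{pmatrix}$ on the pairs $\{e_i\otimes e_j,\,e_j\otimes e_i\}$. So PPT reduces to $\lambda_{ij}\lambda_{ji}\ge 1$, which we satisfy by weights $t$ on $j=i+2$ pairs and, for the pairs $(i,i+1),(i+1,i)$, weights $s$ and $1/s$. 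Then $\mathrm{Tr}(C_{\tau}\rho)$ is an explicit affine expression in these weights. Pairing with $3\varepsilon+\varepsilon(S\cdot S^*)-\mathrm{id}$, the diagonal part gives $8$ and the $-X$ part gives $-16$. The off-diagonal weights contribute terms that are linear in $s$, $1/s$, $t$, and we choose the weights so as to make the total negative.

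The main obstacle is arranging the weights so that the shift term $\varepsilon(SXS^*)$ pairs only with the small weight. This means the $(i,i+1)$-type weight pairs with coefficient $1$ and is chosen as $1/s$ small, while its partner $s$, large, sits where $\tau_{4,1}$ has coefficient $0$. One must also check carefully the index convention for $C_\phi$ versus the Gram matrix, given the $\bar x$ in $P_\phi=\langle C_\phi(\bar x\otimes y),\bar x\otimes y\rangle$. If the count still fails to go negative, I would add the $j=i+2$ weights at value $1$ and let $s\to\infty$, aiming for a trace of the form $-8+4/s\cdot(\text{const})<0$.
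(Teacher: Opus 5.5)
Your concrete route (a PPT state on which $\tau_{4,1}$ is negative) is correct and genuinely different from the paper's. Three details need fixing.

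(1) The arithmetic is off. Take $\Omega=\sum_i e_i\otimes e_i$ and, with indices mod $4$,
\[
\rho=\Omega\Omega^*+\sum_{k=1}^4\Bigl(t^{-1}e_{kk}\otimes e_{k+1,k+1}+t\,e_{k+1,k+1}\otimes e_{kk}+e_{kk}\otimes e_{k+2,k+2}\Bigr).
\]
Then $\rho\succeq0$, and $\rho^\Gamma\succeq0$ because every $2\times2$ block has nonnegative diagonal and determinant $0$. The $\Omega\Omega^*$ part contributes $\sum_{i,j}\tau_{4,1}(e_{ij})_{ij}=12+0-16=-4$, not $-8$. Since $\tau_{4,1}(e_{kk})=2e_{kk}+e_{k+1,k+1}$, you get $\mathrm{Tr}(C_{\tau_{4,1}}\rho)=-4+4/t$. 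This is negative exactly when $t>1$; it vanishes at $t=1$, so your count would wrongly accept $1/2<t\le1$.

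(2) You must pair $\rho$ with the right matrix. $C_{\tau_{4,1}}$ is the Gram matrix of $P_{\tau_{4,1}}(\bar x,y)$, not of $P_{\tau_{4,1}}$. Pairing $\rho$ with $C_{\tau_{4,1}}$ is legitimate, because $x\mapsto\bar x$ interchanges ${\sum}^2|\mathrm{BLF}|$ and ${\sum}^2|\mathrm{SLF}|$, so $P_{\tau_{4,1}}$ is decomposable iff $C_{\tau_{4,1}}\in\mathcal D(4,4)$. The Gram matrix of $P_{\tau_{4,1}}$ itself is $W=\overline{C_{\tau_{4,1}}^{\Gamma}}$, and $\mathrm{Tr}(W\rho)=8+4/t>0$. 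If you work with $W$, you must pair it with $\overline{\rho^{\Gamma}}$ instead.

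(3) Your polynomial sketch has a wrong intermediate step. The condition $x^*A_rx=0$ on $|x_1|=\dots=|x_4|$ kills the off-diagonal entries and the trace of $A_r$, not its diagonal. With that correction, and $B_s=0$ (from antisymmetry plus the zeros $(e_i,e_j)$, $j\notin\{i,i+1\}$), each square has the form $|\sum_i d_i\bar x_iy_i|^2$. Such squares cannot produce the monomials $|x_{i-1}|^2|y_i|^2$, and that is the contradiction.

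The paper's route is different. It restricts to real variables and uses Proposition \ref{Decom_Cong} to reduce to a sum of squares of real bilinear forms. It then uses the zeros $(e_i,e_j)$ and $(\sigma,\sigma)$, $\sigma\in\{\pm1\}^4$, to force every $F_r=\sum_i d_ix_iy_i$ with $\sum_i d_i=0$. This contradicts the shifted monomials $x_{i-1}^2y_i^2$.

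That argument is purely algebraic and proves something stronger: the real restriction of $\tau_{4,1}$ is not even a congruence. Your corrected torus sketch is its complex analogue. Your witness argument instead rests on the duality of Lemma \ref{dualD} and reduces everything to one trace computation. It also exhibits an explicit PPT entangled state detected by $\tau_{4,1}$, which fits the entanglement-witness theme of the paper.
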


\begin{proof}
Let \(x=(x_1,x_2,x_3,x_4)\), \(y=(y_1,y_2,y_3,y_4)\). We now restrict to real variables \(x,y\in\mathbb R^4\), and consider the Choi polynomial (the real case):
\[
\begin{aligned}
p(x,y)
& =  (2x_1^2+x_4^2)y_1^2+(2x_2^2+x_1^2)y_2^2 +(2x_3^2+x_2^2)y_3^2 \\
&\quad +(2x_4^2+x_3^2)y_4^2 -2\sum_{1\le i<j\le 4}x_i x_j y_i y_j.
\end{aligned}
\]
Equivalently,
\begin{equation}
\label{eq:p-real}
\begin{aligned}
p(x,y)
&=2\sum_{i=1}^4 x_i^2y_i^2
-2\sum_{1\le i<j\le 4}x_i x_j y_i y_j
+x_4^2y_1^2+x_1^2y_2^2+x_2^2y_3^2+x_3^2y_4^2.
\end{aligned}
\end{equation}
Assume, for contradiction, that $p$ is a sum of square of real bilinear forms 
\begin{equation}
\label{eq:sos-real}
p(x,y)=\sum_{r=1}^L |F_r(x,y)|^2
\end{equation}
for some real bilinear forms
\[
F_r(x,y)=\sum_{i,j=1}^4 a^{(r)}_{ij}x_i y_j \hspace{1cm} a^{(r)}_{ij} \in \R.
\]
Since each summand in \eqref{eq:sos-real} is nonnegative, every zero of \(p\) is a common zero of all \(F_r\).

We first use eight sparse zeros. From \eqref{eq:p-real}, one checks directly that
\[
p(e_1,e_3)=p(e_1,e_4)=p(e_2,e_1)=p(e_2,e_4)=0,
\]
\[
p(e_3,e_1)=p(e_3,e_2)=p(e_4,e_2)=p(e_4,e_3)=0,
\]
where \(e_1,e_2,e_3,e_4\) are the standard basis vectors of \(\mathbb R^4\). Therefore every \(F_r\) vanishes at these eight points. Writing
\[
F_r(x,y)=\sum_{i,j=1}^4 a^{(r)}_{ij}x_i y_j,
\]
we obtain
\begin{equation}
\label{eq:Fr-first}
\begin{aligned}
F_r(x,y)
&=a^{(r)}_{11}x_1y_1+a^{(r)}_{12}x_1y_2+a^{(r)}_{22}x_2y_2+a^{(r)}_{23}x_2y_3\\
&\quad +a^{(r)}_{33}x_3y_3+a^{(r)}_{34}x_3y_4+a^{(r)}_{41}x_4y_1+a^{(r)}_{44}x_4y_4.
\end{aligned}
\end{equation}
Observe from \eqref{eq:p-real}, that for every sign vector
\[
s=(s_1,s_2,s_3,s_4)\in\{\pm1\}^4
\]
we have $ p(s,s)=0.$
Substituting \(x=y=s\) into \eqref{eq:Fr-first}, we obtain
\[
F_r(s,s)
=
c_r
+a^{(r)}_{12}s_1s_2
+a^{(r)}_{23}s_2s_3
+a^{(r)}_{34}s_3s_4
+a^{(r)}_{41}s_4s_1,
\]
where
\[
c_r:=a^{(r)}_{11}+a^{(r)}_{22}+a^{(r)}_{33}+a^{(r)}_{44}.
\]
Since \(F_r(s,s)=0\) for all \(s\in\{\pm1\}^4\), and the five functions
\[
1,\qquad s_1s_2,\qquad s_2s_3,\qquad s_3s_4,\qquad s_4s_1
\]
are linearly independent on \(\{\pm1\}^4\), it follows that
\[
a^{(r)}_{12}=a^{(r)}_{23}=a^{(r)}_{34}=a^{(r)}_{41}=0,
\qquad
a^{(r)}_{11}+a^{(r)}_{22}+a^{(r)}_{33}+a^{(r)}_{44}=0.
\]
Therefore every \(F_r\) is of the form
\begin{equation}
\label{eq:Fr-diag}
F_r(x,y)=\alpha_r x_1y_1+\beta_r x_2y_2+\gamma_r x_3y_3+\delta_r x_4y_4,
\qquad
\alpha_r+\beta_r+\gamma_r+\delta_r=0.
\end{equation}
Thus each \(F_r\) is a linear combination of the diagonal-difference forms
\[
x_2y_2-x_1y_1,\qquad x_3y_3-x_1y_1,\qquad x_4y_4-x_1y_1.
\]
Now every square \(|F_r(x,y)|^2\) with \(F_r\) of the form \eqref{eq:Fr-diag} expands only into monomials
\[
x_i^2y_i^2,\qquad x_i x_j y_i y_j \quad (i\neq j),
\]
that is, only monomials involving matched pairs \((i,i)\). In particular, no such square can produce any of the shifted monomials
\[
x_4^2y_1^2,\qquad x_1^2y_2^2,\qquad x_2^2y_3^2,\qquad x_3^2y_4^2.
\]
Hence no sum of such squares can contain those monomials.
But the explicit formula \eqref{eq:p-real} shows that \(p\) contains all four of them, each with coefficient \(+1\):
\[
x_4^2y_1^2+x_1^2y_2^2+x_2^2y_3^2+x_3^2y_4^2.
\]
This contradicts \eqref{eq:sos-real}.
Therefore \(p\) is not a sum of squares of real bilinear forms. By Propositions \ref{Decom_Cong} and \ref{real1}, \(P_{\tau_{4,1}}\) is indecomposable.
\end{proof}

\section*{Acknowledgment} Part of this work was completed while the research group participated in the IRCP 2026 program at VIASM, Hanoi, Vietnam.
The first author is partially supported by VAST under grant number CSCL01/27-28. The first and fourth authors are partially supported by JSPS Kakenhi Grant Number JP25K07036.


\begin{thebibliography}{99}
	
	\bibitem{Calderon} A. P. Calderbn, \textit{A note on biquadratic forms,} Linear Algebra Appl. \textbf{7} (1973), 175-177.
	\bibitem{Choi75a} M.D. Choi, \textit{Positive semidefinite biquadratic forms,} Linear Algebra Appl. \textbf{12} (1975) 95-100.
	\bibitem{A2011} J.P. D’Angelo,  \textit{Hermitian analogues of Hilbert’s 17-th problem,} Adv. Math. \textbf{226}(2011), 4607–4637.
	\bibitem{Marshall} M. Ghasemi and M. Marshall, \textit{Lower bounds for a polynomial in terms of its coefficients}, Arch. Math. 95 (2010), 343–353.
	\bibitem{Ha98} K.-C. Ha, Atomic positive linear maps in matrix algebras, \textit{Publ. RIMS.
Kyoto Univ.} 34 (1998), 591-599.
	\bibitem{Horodecki1997} P.~Horodecki, \emph{Separability criterion and inseparable mixed states with positive partial transposition},Phys. Lett. A \textbf{232} (1997), 333-339.
	\bibitem{Horodecki2009} R. Horodecki, P. Horodecki, M. Horodecki and K. Horodecki, \textit{Quantum entanglement}, Rev. Mod. Phys. \textbf{81} (2009), 865-942.
	\bibitem{Keybook} S.H. Kye, Positive Maps in Quantum Information Theory, Springer, 2012.
	\bibitem{Lewenstein2000} M.~Lewenstein, B.~Kraus, J.~I.~Cirac, and P.~Horodecki, \emph{Optimization of entanglement witnesses}, Phys. Rev. A \textbf{62} (2000), 052310. (Preprint: arXiv:quant-ph/0005014.)
	
	\bibitem{Mlynik2025} T. M{\l}ynik, H. Osaka, and M. Marciniak, \emph{Characterization of $k$-positive maps}, Commun. Math. Phys. \textbf{406} (2025), no. 3, Paper No. 62, 11 pp.
	\bibitem{OsakaBook24} M. Moslehian and H. Osaka, Advanced Techniques with Block Matrices of Operators, Birkh{\"a}user, 2024.
	
	\bibitem{Osaka91} H. Osaka, \textit{Indecomposable positive maps in low dimensional matrix algebras}, Linear Algebra Appl. \textbf{153} (1991), 73-83.
	\bibitem{Osaka93} H. Osaka, \textit{A series of absolutely indecomposable positive maps in matrix algebras}, Linear Algebra Appl. \textbf{186} (1993), 45-53.
	\bibitem{Peres1996}  A. Peres, \textit{Separability criterion for density matrices,} Phys. Rev. Lett. \textbf{77} (1996), 1413-1415.
	
	\bibitem{Stormer63} E. St\o rmer, \textit{Positive linear maps of operator algebras,} Acta Math. \textbf{110} (1963), 233-278.
	\bibitem{Tomiyama1988} K. Tanahashi and J. Tomiyama, \textit{Indecomposable maps in matrix algebras,} Can. Math. Bull. \textbf{31}(1988) (3), 308-317.
	\bibitem{Terhal01} B. M. Terhal,\textit{A family of indecomposable linear maps based on entangled quantum states}, Linear Alg. Appl. \textbf{323} (2001), 61-73.
  \bibitem{Yamagami93}S. Yamagami, \textit{Cyclic inequalities}, Proc. Amer. Math. Soc. {\bf 118} (1993), no.2, 521-527
	\bibitem{Woronowicz76} S. L. Woronowicz, \textit{Positive maps of low dimensional matrix algebras,} Rep. Math. Phys. \textbf{10} (1976), 165-183.




		
	
\end{thebibliography}
\end{document}